    	\newtheorem{remark}{Remark}
    	\newtheorem{proposition}{Proposition}
    	\newcommand{\widesim}[2][1.5]{
    		\mathrel{\overset{#2}{\scalebox{#1}[1]{$\sim$}}}
    	}
    	\definecolor{ForestGreen}{RGB}{34,139,34}
    	\newtheorem{theorem}{Theorem}
    	\newtheorem{lemma}{Lemma}
    	\DeclareMathOperator*{\argmax}{arg\,max}
    	\def\stretchint#1{\vcenter{\hbox{\stretchto[440]{\displaystyle\int}{#1}}}}
    	\newcommand{\mybinom}[3][0.8]{\scalebox{#1}{$\dbinom{#2}{#3}$}}
\begin{document}
    	\title{	On the Impact of an IRS on the Out-of-Band Performance in Sub-6 GHz \& mmWave Frequencies}
    		\author{L. Yashvanth,~\IEEEmembership{Student Member,~IEEE}, and Chandra R. Murthy,~\IEEEmembership{Fellow,~IEEE}
    			\thanks{This work was financially supported by the Qualcomm Innovation Fellowship, Qualcomm UR $6$G India Grant, a research grant from MeitY, Govt. of India, and the Prime Minister’s Research Fellowship, Govt. of India. A preliminary version of this work was presented at the 24th IEEE International Workshop on Signal Processing Advances in Wireless Communications (SPAWC), Sep. 2023 [DOI: 10.1109/SPAWC53906.2023.10304535] \cite{Yashvanth_SPAWC_2023}.}
    			\thanks{The authors are with the Dept. of ECE, Indian Institute of Science, Bengaluru, India 560 012. (e-mails: \{yashvanthl, cmurthy\}@iisc.ac.in).}
    		}
    		\maketitle
    		\begin{abstract}
    			Intelligent reflecting surfaces (IRSs) were introduced to enhance the performance of wireless communication systems. However, from a service provider's viewpoint, a concern with the use of an IRS is its effect on out-of-band (OOB) quality of service. Specifically, if two operators, say X and Y, provide services in a given geographical area using non-overlapping frequency bands, and if operator X uses an IRS to enhance the spectral efficiency (SE) of its users (UEs), does it degrade the performance of UEs served by operator~Y?
    			We answer this by analyzing the average and instantaneous performances of the OOB operator considering both sub-6 GHz and mmWave bands. Specifically, we derive the ergodic sum-SE achieved by the operators under round-robin scheduling. We also derive the outage probability and analyze the change in the SNR caused by the IRS at an OOB UE, using stochastic dominance theory.
    			Surprisingly, even though the IRS is randomly configured from operator Y's point of view, the OOB operator still benefits from the presence of the IRS, witnessing a performance enhancement \emph{for free} in both sub-6 GHz and mmWave bands. This is because the IRS introduces additional paths between the transmitter and receiver, increasing the overall signal power arriving at the UE and providing diversity benefits. Finally, we show that the use of opportunistic scheduling schemes can further enhance the benefit of the uncontrolled IRS at OOB UEs.
    			We numerically illustrate our findings and conclude that an IRS is always beneficial to every operator, even when the IRS is \textcolor{black}{deployed \& controlled by only one operator.\!}\! 
    		\end{abstract}
    		\begin{IEEEkeywords}
    			Intelligent reflecting surfaces, out-of-band performance, sub-6 GHz bands, mmWave communications.
    		\end{IEEEkeywords}
    		\section{Introduction}
    		Intelligent reflecting surfaces (IRSs) have been extensively studied to enhance the performance of beyond $5$G and $6$G communications~\cite{Basar_IA_2019,RuiZhang_IRSSurvey_TCOM_2021}. They are passive reflecting surfaces with many elements that can introduce carefully optimized phase shifts while reflecting an incoming radio-frequency (RF) signal, thereby steering the signal towards a desired user equipment (UE).  Although a recent topic of research, several use-cases elucidating the benefits of IRS-aided communications have already been investigated~\cite{RuiZhang_IRSSurvey_TCOM_2021,Alberto_TWC_2023,Xiaoming_TCOM_2023,Xiaowei_TVT_2023,Xiaowei_WCL_2023,Xiaowei_WCL_2022}. All these studies implicitly assume that only \emph{one operator} deploys and controls single or multiple IRS(s) to provide wireless services to their subscribed UEs. 
    		However, in practice, multiple wireless network operators co-exist in a given geographical area, each operating in non-overlapping frequency bands.  
		Then, if an IRS is deployed and optimized by one of the operators to cater to its subscribed UEs' needs, it is unclear whether the IRS will boost or degrade the performance of the UEs served by the other operators. This point is pertinent because the IRS elements are passive and have no band-pass filters. They reflect all the RF signals that impinge upon them across a wide range of frequency bands, including signals from \emph{out-of-band (OOB)} operators intended to be received by OOB UEs. Thus, this paper focuses on understanding how an IRS controlled by one operator affects the performance of other OOB operators.
    			\subsection{Related Literature and Novelty of This Work}
    			The benefits of using an optimized IRS has been illustrated in several use-cases, see~\cite{RuiZhang_IRSSurvey_TCOM_2021,Alberto_TWC_2023,Xiaoming_TCOM_2023,Xiaowei_TVT_2023,Xiaowei_WCL_2023,Xiaowei_WCL_2022}. For e.g., in~\cite{Alberto_TWC_2023}, outage analysis with randomly distributed IRSs is investigated. In~\cite{Xiaoming_TCOM_2023,Tabassum_TVT_2021}, coverage enhancement due to an IRS is studied, and in~\cite{Ekram_TCOM_2022}, the effect of IRSs on inter base station (BS) interferences is explored. Contrary to this, a few works also use randomly configured IRSs and obtain benefits from them. For instance,~\cite{Nadeem_WCL_2021,Yashvanth_TSP_2023,Nadeem_TWC_2021} use opportunistic communications using randomly configured IRSs. Similarly, blind beam forming approaches and diversity order analysis are reported in~\cite{Kaiming_TWC_2023_v1}, and~\cite{Constantinos_JSAC_2023,Psomas_TCOM_2021, Zhang_Random_IRS_WCL_2021}, respectively. Also, in~\cite{Liang_TWC_2024}, random IRSs are used to protect against wireless jammers. However, none of these works consider the scenario where multiple network operators coexist in an area. Very few studies have considered using an IRS in multi-band, multi-operator systems. In~\cite{Emil_Multiple_OP_TCOM_2024}, the effect of pilot contamination on channel estimation in a multiple-operator setting is reported. In~\cite{Cai_2022_TCOM,Cai_2020_CL_practicalIRS}, the authors jointly optimize the IRS configurations across multiple frequency bands via coordination among the BSs, which is impractical and incurs high signal processing overhead.\footnote{The works that assume inter-BS coordination to optimize the IRS configuration jointly consider a single operator deploying many BSs. Coordination among BSs may be feasible in this scenario. However, it is impractical for the BSs owned by two different operators to coordinate with each other.} Further, the solutions and analyses provided therein are not scalable with the number of operators (or frequency bands). More fundamentally, none of these works consider the effect on the OOB performance, even in the scenario of two operators providing services in non-overlapping bands when the IRS is optimized for only one operator. In this work, we investigate whether an IRS degrades the performance of other OOB operators. If not, can the mere presence of an IRS in the vicinity provide \emph{free gains} to OOB operators? Below, we explain our contributions in this context.
    			
    		\subsection{Contributions}\label{sec:contributions}
    		
    		We consider a system with two network operators, X and Y,  providing service in different frequency bands in the same geographical area. The IRS is optimized to serve the UEs subscribed to the \emph{in-band} operator X, and we are interested in analyzing the spectral efficiency (SE) achieved and outage probability witnessed by the UEs subscribed to the \emph{OOB} operator Y who does not control the IRS. Specifically, we (separately) evaluate the IRS-assisted performance in both sub-6 GHz and the mmWave bands, which are provisioned as the FR$1$ and FR$2$ bands in $5$G, respectively~\cite{3gpp_NR_frequency_bands}. Further, in the mmWave bands, inspired by~\cite{Wang_TWC_2022_IRS_BA}, we study two scenarios: (a) LoS (line-of-sight) and (b) (L+)NLoS (LoS and Non-LoS.) In the LoS scenario, the IRS is optimized or aligned to the dominant cascaded path (called the \emph{virtual} LoS path) of the in-band UE's channel. This is also considered in~\cite{Wang_TCOM_2023,Wang_TVT_2020}, where the in-band UEs' channels are approximated by the dominant LoS path to reduce the signaling overhead required for the base station (BS) to program the IRS: the phase of the second IRS element relative to the first element determines the entire phase configuration. Contrarily, in the (L+)NLoS case, the IRS optimally combines all the spatial paths to maximize the signal-to-noise ratio (SNR) at the receiver, for which the overhead scales linearly with the number of IRS elements. 
    Using tools from high-dimensional statistics, stochastic-dominance theory, and array processing theory, we make the following contributions.
    		 \subsubsection{OOB Performance in sub-6 GHz Bands (See Section~\ref{sec:OOB_perf_analysis})} 
    		Here, the operators serve their UEs over the sub-6 GHz frequency band where the channels are rich-scattering. In this context, our key findings are as follows. 		\begin{enumerate}[label=1-\alph*)]
    			\item  We derive the ergodic sum-SEs of the two operators as a function of the \textcolor{black}{system parameters}, under round-robin (RR) scheduling of the UEs served by both operators. We show that the sum-SE scales log-quadratically and log-linearly with the number of IRS elements for the in-band and OOB networks, respectively, even though the OOB operator does not control the IRS (see Theorem~\ref{thm:rate_characterization}.)\!
    			\item We show that the outage probability at an arbitrary OOB UE decreases monotonically with the number of IRS elements. Further, via the complementary cumulative distribution function (CCDF) of the difference in the OOB channel gain with and without the IRS, we prove that the OOB channel gain with an IRS \emph{stochastically dominates} the gain without the IRS, with the difference increasing with the number of IRS elements. Thus, an OOB UE \textcolor{black}{gets} instantaneous benefits that monotonically increase with the number of IRS elements (see Theorem~\ref{thm:exact_ccdf} and Proposition~\ref{sec:prop_stochastic_dominance_sub_6_GHz}.)\!\!
    			\end{enumerate}
    			\subsubsection{OOB Performance in mmWave Bands (See Section~\ref{sec:mmwave_LOS_RR})}
    			In the mmWave bands, the channels are directional, with only a few propagation paths. In this context, using novel probabilistic approaches based on the resolvable criteria of the mmWave spatial beams, our key findings are as follows. 
    			\begin{enumerate}[label=2-\alph*)]
    			\item In LoS scenarios, where the IRS is optimized to match the dominant path of the in-band UE's channel, we derive the ergodic sum-SEs of the two operators under RR scheduling of the UEs. The SE at the in-band UE scales log-quadratically in the number of IRS elements, whereas the SE gain at an OOB UE depends on the number of spatial paths in the OOB UE's channel. If there are a sufficient number of paths in the cascaded channel, the SE improvement due to the IRS scales log-linearly in the number of IRS elements. Otherwise, the OOB UE's SE improves only marginally compared to that in the absence of the IRS (see Theorem~\ref{thm:rate_characterization_mmwave_single_path_IB}.)
    			\item We evaluate the outage probability and CCDF of an OOB UE's channel gain with/without an IRS in LoS scenarios and prove that the channel gain in the presence of IRS \emph{stochastically dominates} the gain in its absence. Thus, even in mmWave bands, the IRS provides positive instantaneous gains to all the OOB UEs (see Theorem~\ref{thm:ccdf_mmwave_OOB_single_path}.)
    			\item We next consider the (L+)NLoS scenario where the IRS is jointly optimized considering all the spatial paths. We first evaluate the directional energy response of the IRS and show that it exhibits peaks only at the channel angles to which the IRS is optimized. This is a fundamental and new characterization of the IRS response when it is aligned to multiple paths in an mmWave system. (see Lemma~\ref{lemma_correlation_function_IRS_mmwave_multiple_path}.) 
    			\item We derive the ergodic sum-SE of both operators in (L+)NLoS scenarios. We find that the OOB performance is even better than the LoS scenario (and hence better than the system without an IRS.) This is because the odds that an OOB UE benefits improve when the IRS has a nonzero response in multiple directions. Thus, the OOB performance in mmWave bands does not degrade even when the OOB operator serves its UEs while remaining oblivious to the presence of the IRS (see Theorem~\ref{thm:ergodic_SE_mmwave_multiple_paths}.)
    			\end{enumerate}
    
    			\subsubsection{Opportunistic Enhancement of OOB Performance (See Section~\ref{sec:PF_MR_schedulers})} Having shown that an IRS positively benefits OOB UEs, we next suggest ways to exploit the uncontrolled IRS to enhance the performance of OOB operators further. In particular, by using opportunistic selection techniques, we leverage multi-user diversity and show that a significant boost in the OOB performance can be obtained compared to RR scheduling. Specifically, we demonstrate the following.
    			\begin{enumerate}[label=3-\alph*)]
    				\item By using a proportional-fair scheduler over a large number of OOB UEs, the sum-SE of operator Y converges to the so-called \emph{beamforming SE}, which is the SE obtained when the IRS is optimized for an OOB UE in every time slot (see Lemma~\ref{lem:PF_sub_6}.) 
    				\item By using a max-rate scheduler, the ergodic sum-SE of operator Y monotonically increases with both the number of IRS elements and OOB UEs in the system (see Lemma~\ref{lem:MR_scheduler_ergodic_SE}.)
    			\end{enumerate}			
    						Next, we highlight the practical utility of our results.
    
    \subsection{Practical Implications and Useful Insights}\label{sec:useful_insights}
    Our results offer several interesting insights into the performance of IRS-aided wireless systems where an IRS is deployed and controlled only by one operator. First, in all the scenarios mentioned in Sec.~\ref{sec:contributions}, we present novel, compact, and insightful analytical expressions that uncover the dependence of the performance on system parameters such as the number of IRS elements, SNR, number of channel paths, etc. As shown both analytically and through simulations, an IRS is beneficial to OOB users even though the IRS phase configuration is chosen randomly from the OOB operator's viewpoint, and this holds both in terms of the average and instantaneous SE. In particular, in rich scattering environments, the average SE at any OOB UE scales log-linearly in the number of IRS elements. On the other hand, in mmWave channels, the average SE improvement at the OOB UE due to the IRS is an increasing function of the number of spatial paths in the OOB UE's channel. Thus, deploying an IRS enriches the overall wireless channels (in both sub-6 GHz and mmWave bands) and can only benefit all wireless operators in the area.  Our study also reveals that there exists an interesting trade-off between reducing the signaling overhead to program the IRS at the in-band operator (as in LoS scenarios) versus boosting the gain at the OOB operators (as in (L+)NLoS scenarios) in the mmWave bands. Finally, our results and insights derived for $ 2$ operators directly extend to any number of OOB operators and to other settings like planar array IRS or multiple antenna BSs.

    \indent \textcolor{black}{\emph{Notation:} 
    $|\cdot|,\angle\cdot$ denote the magnitude and phase of a complex number (vector); $(\cdot)^*$ stands for complex conjugation; $\mathbbm{1}_{\{\cdot\}}$ is the indicator function; $|\mathcal{A}|$ is the cardinality of set $\mathcal{A}$; $A \stackrel{d}{=} B$ means $A$ and $B$ have the same distribution; $\odot$ is the Hadamard product; $\Re(\cdot)$ and $\mathbb{R}^+$ are the real part and the set of positive real numbers; ${ \sf{Pr}}(\cdot)$ and $\mathbb{E}[\cdot]$ refer to the probability measure and expectation operator. $\mathcal{O}(\cdot), \Omega(\cdot)$, and $o(\cdot)$ denote Landau's Big-O, Big-Omega, and little-o functions.} \!\!
    \vspace{-0.05cm}
    
    \section{System Model}\label{sec:system_model}
    		We consider a single-cell system with two mobile network operators, X and Y, who provide service to $K$ and $Q$ UEs, respectively, on different frequency bands and in the same geographical area. The BSs of operators X and Y (referred to as BS-X and BS-Y, respectively) and UEs are equipped with a single antenna, and all the channels undergo frequency-flat fading\textcolor{black}{\cite{Basar_TCOM_2022_iid}}.\footnote{For simplicity of exposition and to focus on the \emph{effect of the IRS on OOB users}, we consider single antennas at the BSs and UEs, similar to~\cite{Muhammad_Ali_TWC_2022,Xiaowei_TVT_2023,Xiaowei_WCL_2023,Xiaowei_WCL_2022,Nasiri-Kenari_WCL_2022}. The extension to the multiple antenna case does not change our broad conclusions, but we relegate this to future work.}	
    		An $N$-element IRS is deployed by operator X to enhance the SNR at the UEs it serves. So, operator X configures the IRS with the SNR-optimal phase configuration for a UE scheduled by BS-X in every time slot. On the other hand, operator Y does not deploy any IRS and is oblivious to the presence of operator X's IRS. 
    			\begin{figure}[t!]
			\vspace{-0.2cm}
    			\centering
    			\includegraphics[scale=0.15]{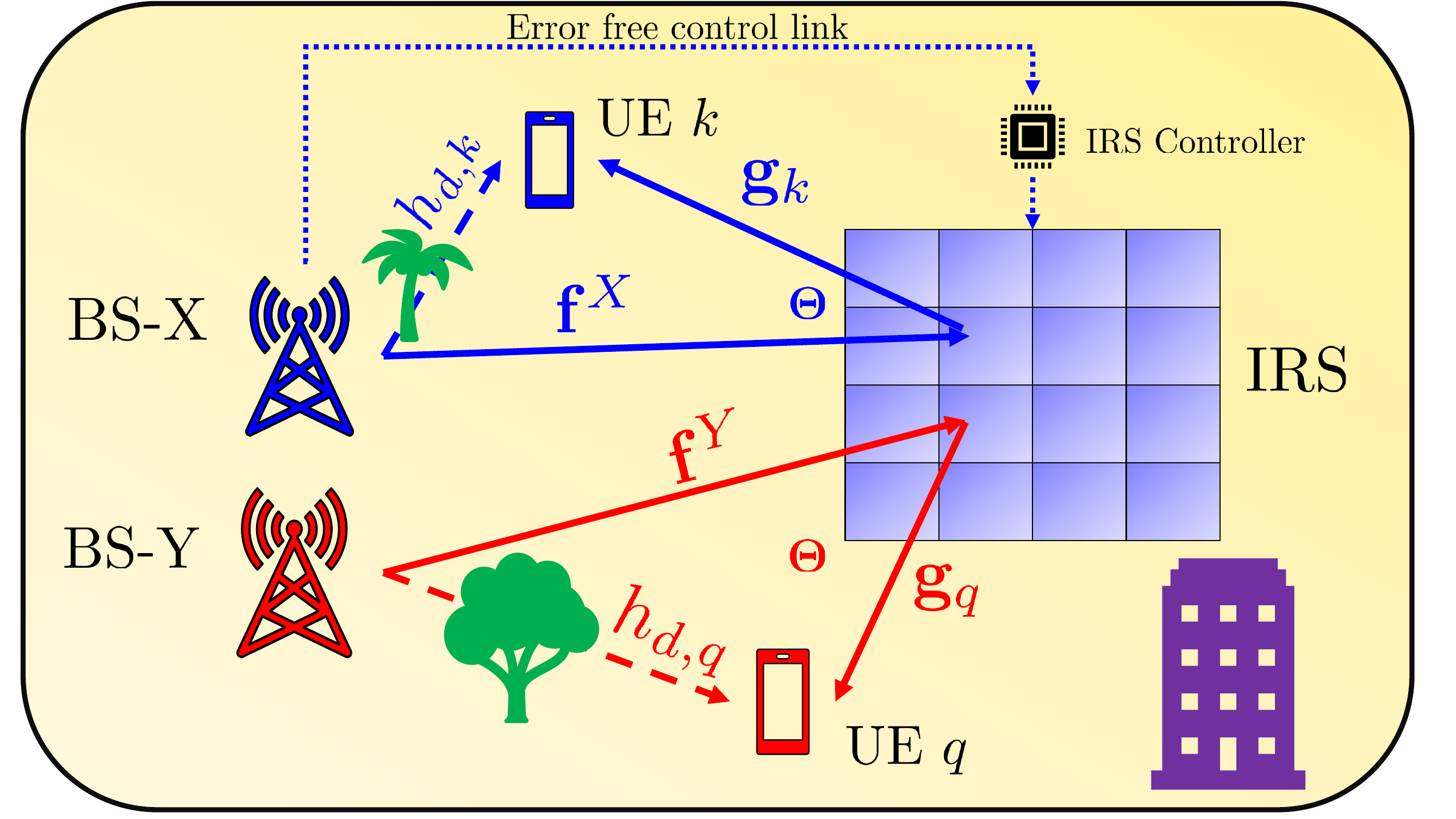}
    			\caption{Network scenario of an IRS-aided two-operator system.}
    			\label{fig:Network_scenario_single_IRS}
    			\vspace{-0.2cm}
    		\end{figure}
    		Then, the downlink signal received at the $k$th UE served by BS-X is\!
    		\begin{equation}\label{eq:airtel_downlink}
    			y_{k} = \left(h_{d,k}+\mathbf{g}_{k}^T\boldsymbol{\Theta}\mathbf{f}^X\right)x_k + n_k,
    		\end{equation} where $\mathbf{g}_k \in \mathbb{C}^{N \times 1}$ is the channel from IRS to the $k$th UE, $\mathbf{f}^X \in \mathbb{C}^{N \times 1}$ is the channel from BS-X to the IRS,  $\boldsymbol{\Theta} \in \mathbb{C}^{N \times N}$ is a diagonal matrix containing the IRS reflection coefficients of the form $e^{j\theta}$, and $h_{d,k}$ is the direct (non-IRS) path from BS to the UE-$k$. Also, $x_k$ is the data symbol for UE-$k$ with average power $\mathbb{E}[|x_k|^2] = P$, and $n_k$ is the additive noise $\sim \mathcal{CN}(0,\sigma^2)$ at UE-$k$. Similarly, at UE-$q$ served by BS-Y, we have
    		\begin{equation}\label{eq:jio_downlink}
    			y_{q} = \left(h_{d,q}+\mathbf{g}_{q}^T\boldsymbol{\Theta}\mathbf{f}^Y\right)x_q	+ n_q.
    		\end{equation} 
    	In Figure~\ref{fig:Network_scenario_single_IRS}, we pictorially illustrate the considered network.
    		\vspace{-0.3cm}
    		\subsection{Channel Model in sub-6 GHz Frequency Bands}\label{sec:ch_model_sub6_GHz}
    		Similar to~\cite{Basar_TCOM_2022_iid}, in the sub-6 GHz band, we consider that all the fading channels are statistically independent\footnote{We consider that IRS elements are placed sufficiently far apart that the spatial correlation in the channels to/from the IRS is negligible.} and identically distributed (i.i.d.) following the Rayleigh distribution.\footnote{Our results can be easily extended to other sub-$6$ GHz channel fading models also. From the approach used to show the results here, it is easy to see that the scaling law of the SE as a function of the IRS parameters will remain the same, although the scaling constants depend on the fading model. Thus, the conclusions on the impact of the IRS on OOB performance reported in this paper will continue to hold.} Specifically, $h_{d,k} = \sqrt{\beta_{d,k}} \tilde{h}_{d,k}, h_{d,q} = \sqrt{\beta_{d,q}} \tilde{h}_{d,q}; \tilde{h}_{d,k}, \tilde{h}_{d,q}  \widesim[2]{\text{i.i.d.}} \mathcal{CN}(0,1)$; $\mathbf{g}_k = \sqrt{\beta_{\mathbf{g},k}} \tilde{\mathbf{g}}_k, \mathbf{g}_q = \sqrt{\beta_{\mathbf{g},q}} \tilde{\mathbf{g}}_q; \tilde{\mathbf{g}}_k,\tilde{\mathbf{g}}_q \widesim[2]{\text{i.i.d.}} \mathcal{CN}(\mathbf{0}, \mathbf{I}_N)$; $\mathbf{f}^X = \sqrt{\beta_{\mathbf{f}^X}} \tilde{\mathbf{f}}^X, \mathbf{f}^Y = \sqrt{\beta_{\mathbf{f}^Y}} \tilde{\mathbf{f}}^Y; \tilde{\mathbf{f}}^X,\tilde{\mathbf{f}}^Y \widesim[2]{\text{i.i.d.}} \mathcal{CN}(\mathbf{0}, \mathbf{I}_N)$. All terms of the form $\beta_x$ represent the path losses. \!\!			\vspace{-0.3cm}
    		\subsection{Channel Model in mmWave Frequency Bands}\label{sec:ch_model_mmwave}
    		In the mmWave band, we consider a Saleh-Valenzuela type model for all the channels~\cite{Heath_TWC_2014}. At UE-$\ell$, the channels are 
    		\begin{equation}\label{eq:ch_model_mmwave_single_IRS}
    			\mathbf{f}^{p} \!=\!\! \sqrt{\!\frac{N}{L_{1,p}}}\!\sum_{i=1}^{L_{1,p}}\!\!\gamma^{(1)}_{i,p} \!\mathbf{a}^*_N(\phi_{i,p});\!\text{  } \mathbf{g}_{\ell} \!=\!\! \sqrt{\!\frac{N}{L_{2,\ell}}}\!\sum_{j=1}^{L_{2,\ell}}\!\!\gamma^{(2)}_{j,\ell} \!\mathbf{a}^*_N(\psi_{j,\ell}),\!\!\!
    		\end{equation} where $p \in \{X,Y\}$, $L_{1,p}$ and $L_{2,\ell}$ are the number of spatial paths in the BS-$p$ to IRS, and IRS to UE-$\ell$ links, respectively. Since the in-band and OOB BSs/UEs are distributed arbitrarily with respect to the IRS, for notational simplicity, we let $L_{1,X}=L_{1,Y}\triangleq L_1$, and $L_{2,\ell}= L_2 $ $\forall \ell$ (including in-band and OOB UEs.) Also, $\phi_{i,p}$ and $\psi_{j,\ell}$ denote the sine of the angle of arrival of the signal from BS-$p$ to the IRS via the $i$th path, and the sine of the angle of departure from the IRS to the $\ell$th UE via the $j$th path, where sine of an angle ($\phi_x$) is related to the physical spatial angle ($\chi$) by\footnote{In the sequel, the term ``angle" will denote the sine of a physical angle.}
    	$\phi_x =  (2d/\lambda)\sin(\chi)$, with $d$, $\lambda$ being the inter-elemental distance and signal wavelength, respectively.
        	The sine terms are sampled from a distribution $\mathcal{U}_{\mathcal{A}}$, which depends on the beam resolution capability of the IRS (elaborated next.) The fading coefficients, $\gamma^{(1)}_{i,p}$ and $\gamma^{(2)}_{j,\ell}$, are independently sampled from $\mathcal{CN}(0,\beta_{\mathbf{f}^p})$, and $\mathcal{CN}(0,\beta_{\mathbf{g},\ell})$, respectively. Finally, $\mathbf{a}_N(\phi)$ is an array steering vector of a uniform linear array (ULA)\footnote{All our main conclusions remain unchanged even if other array geometries (e.g., a uniform planar array) are used at the IRS, similar to~\cite{Wei_Yu_TWC_2023_Sparse_Ch_estim_IRS,Lee_WCL_2023,Muhammad_Ali_TCOM_2022,Nasiri-Kenari_Arxiv_2022}.} 
        	based IRS oriented at the angle $\phi$,  given by 
        		\begin{equation}\label{eq:array_vector_template}
        			\mathbf{a}_N(\phi) =\frac{1}{\sqrt{N}} [1, e^{-j\pi\phi},\ldots,e^{-j(N-1)\pi\phi}]^T.
        		\end{equation}
    
    	\subsubsection{Beam Resolution Capability of the IRS}
    	We now describe the beam resolution capability of an IRS, which is important to account for when the array contains a finite number of elements. Beam resolution measures the degree to which two closely located beams are distinguishable (or resolvable.) Spatial paths  in~\eqref{eq:ch_model_mmwave_single_IRS} will become non-resolvable if the angles associated with the paths are spaced less the beam resolution and can be modeled as being clustered into a single path with an appropriate channel gain. To measure the beam resolution of the ULA-based IRS model~\cite{Wei_Yu_TWC_2023_Sparse_Ch_estim_IRS,Lee_WCL_2023,Muhammad_Ali_TCOM_2022,Nasiri-Kenari_Arxiv_2022}, we enumerate all possible $D$ beams as
    	\begin{equation}
    		\mathbf{A} = \left[\mathbf{a}_N(\phi_1),\mathbf{a}_N(\phi_2), \ldots,\mathbf{a}_N(\phi_D) \right] \in \mathbb{C}^{N\times D}.
    	\end{equation}
    	We can verify that $\mathbf{A}$ is a Vandermonde matrix and hence its column vectors $\left\{ \mathbf{a}_N(\phi_i)\right\}^{D}_{i=1}$ are linearly independent when $D\!\leq\! N$ provided $\{\phi_i\}_{i=1}^D$  are distinct~\cite[Pg. 185]{Meyer_Matrix_book_2000}. On the other hand, when $D>N$, the vectors are always linearly dependent and hence are non-resolvable. So, the total number of independent (or resolvable) beams that can be formed by an $N$-element IRS is at most $N$. This fact has been observed in the   literature~\cite{Heath_TWC_2014,Gen_TWC_2016,Vasanthan_JSTSP_2007}, and also in the current works on IRS-aided mmWave wireless systems~\cite{Wang_TWC_2022_IRS_BA,Rui_Zhang_WCL_2020}. Thus, the complete set of resolvable beams $\mathcal{A}$ (called the \emph{resolvable beambook}) at the IRS is\footnote{For analytical tractability, we consider a \emph{flat-top} RF directivity pattern of the IRS~\cite[Eq. 5]{Madhow_TACM_2011}. This has been shown to be a good approximation to practical array systems and becomes accurate as $N$ gets large~\cite{Madhow_TACM_2011,Heath_TWC_2014}.}
    	\begin{equation}\label{eq:resolvable_paths}
    		\mathcal{A} \triangleq \left\{ \mathbf{a}_N(\phi), \phi \in \mathbf{\Phi}\right\}; \mathbf{\Phi} \triangleq\! \left\{\! \left(\!-1 + \!\dfrac{2i}{N}\right)\!\bigg\rvert i =  0,\ldots,N-1\!\right\}\!.\!
    	\end{equation} 
	Here $\mathbf{\Phi}$ is the \emph{resolvable anglebook} of the IRS, and, without loss in generality, we model its distribution $\mathcal{U}_{\mathcal{A}}$ by
    	\begin{equation}\label{eq:codeboook_ditbn}
    		\mathcal{U}_{\mathcal{A}}(\phi) = \frac{1}{\left| \mathbf{\Phi}\right|}   \mathbbm{1}_{\left\{\phi \in \mathbf{\Phi}\right\}} = \frac{1}{N}   \mathbbm{1}_{\left\{\phi \in \mathbf{\Phi}\right\}}.
    	\end{equation} 
    	Also, for large $N$, $\mathcal{A}$ constitutes a set of orthonormal vectors:\!\!
    	\begin{equation}\label{eq:orthogonal_array_vectors}
    	\mathbf{a}^H_N(\phi_1)\mathbf{a}_N(\phi_2) \rightarrow \delta_{\{\phi_1,\phi_2\}} \hspace{0.5cm} \forall \hspace{0.2cm}\mathbf{a}_N(\phi_1),\mathbf{a}_N(\phi_2) \in \mathcal{A} ,
    	\end{equation} where $\delta_{\{x,y\}}$ is the usual dirac-delta function. 
    	\subsection{Problem Statement} We are now ready to state the problem mathematically. Suppose BS-X tunes the IRS with the SNR-optimal vector $\boldsymbol{\theta}^{\mathrm{opt}} \triangleq \mathrm{diag}(\boldsymbol{\Theta}^{\mathrm{opt}})$ to serve UE-$k$ by solving the problem:
    \begin{equation}	
    \boldsymbol{\Theta}^{\mathrm{opt}}=\argmax_{\boldsymbol{\Theta}\in \mathbb{C}^{N \times N}} \quad  \left|h_{d,k}+\mathbf{g}_{k}^T\boldsymbol{\Theta}\mathbf{f}^{X} \right|^2,	
    \end{equation}
    subject to $\boldsymbol{\Theta}$ being a diagonal matrix with unit-magnitude diagonal entries.
    Then, we wish to characterize the impact of the IRS on the channel gain, $\left|h_{d,q}\!+\mathbf{g}_{q}^T\boldsymbol{\Theta}^{\mathrm{opt}}\mathbf{f}^Y \right|^2$, of OOB UE-$q$ scheduled by BS-Y. In particular, we wish to answer:\!\!
    \begin{enumerate}
    \item Does the IRS degrade OOB performance? That is, when or how often will $\left|h_{d,q}+\mathbf{g}_{q}^T\boldsymbol{\Theta}^{\mathrm{opt}}\mathbf{f}^Y \right|^2 \!< \! \left|h_{d,q}\right|^2$ hold?
    \item How does the OOB channel gain $\left|h_{d,q}+\mathbf{g}_{q}^T\boldsymbol{\Theta}^{\mathrm{opt}}\mathbf{f}^Y \right|^2$ depend on $N$?
    \end{enumerate}
    Clearly, these aspects of an IRS are fundamental to understanding the overall impact of an IRS in practical systems when many operators co-exist. In the sequel, we answer these by analyzing the ergodic and instantaneous characteristics of the OOB UE's channel in both sub-6 GHz and mmWave bands in the presence of an IRS controlled by a different operator.
    \begin{remark}
      The next two sections focus on cases where both operators provide services in the sub-$6$ GHz band (Sec.~\ref{sec:OOB_perf_analysis}) or both provide services in the mmWave band (Sec.~\ref{sec:mmwave_LOS_RR}), using non-overlapping frequency allocations. We do not consider the case where one operator uses the sub-$6$ GHz band while the other uses the mmWave band, as it is unclear whether an IRS can efficiently reflect signals in both these frequency bands.
    \end{remark}
    		\section{OOB Performance: sub-6 GHz bands} \label{sec:OOB_perf_analysis}
    		Suppose an operator X deploys and controls an IRS to enhance the SE of the UEs being served by it in a sub-6 GHz band. We wish to characterize the effect of the IRS on operator Y, which is operating in a different sub-6 GHz frequency band with no control over the IRS. Thus, to serve the $k$th UE, BS-X configures the IRS with the SNR/SE-optimal phases~\cite{Basar_IA_2019,RuiZhang_IRSSurvey_TCOM_2021,Yashvanth_TSP_2023} 
    	    		\begin{equation}
    			\theta ^{\mathrm{opt}}_{n,k} =e^{j\left(\angle h_{d,k}- \left(\angle {f}^X_{n}+\angle {g}_{k,n}\right)\right)},\hspace{0.5cm} n=1, {\dots }, N, \label{eq:basic_optimal_angle_X}  
    		\end{equation}
    		which results in the coherent addition of the signals along the direct and IRS paths, leading to the maximum possible received SNR. Then, the SE achieved by the $k$th UE is 
    		\begin{equation} \label{eq_BF_rate_UE_k}
    			R_{k}^{BF} = \log _{2}\!\left( \! 1 + \frac {P}{\sigma ^{2}} \left| |h_{d,k}| + \sum\nolimits_{n=1}^{N} |f^X_{n}{g_{k,n}}| \right|^{2}\right).
    		\end{equation}
    		Due to the independence of the channels of the UEs served by operators X and Y, the IRS phase configuration used by operator X to serve its own UEs appears as a \emph{random} phase configuration of the IRS for any UE served by operator~Y. 
    		
    		We consider RR scheduling of UEs at both BS-X and BS-Y.\footnote{The extension of our results to the proportional fair and max-rate scheduling schemes is provided in Sec.~\ref{sec:PF_MR_schedulers}.} Since the BSs are equipped with a single antenna, one UE from each network is scheduled in every time slot. A summary of the protocol is given in Fig.~\ref{fig:flowchart}.
		    		 
    		 We characterize the average OOB performance by deriving the ergodic sum-SE of both networks and then infer the degree of degradation/enhancement of the OOB performance caused by the IRS. The ergodic SE at UE-$k$ is 
    		\begin{equation}\label{eq:basic_airtel_rate}
    			\!\langle R_k^{(X)} \rangle = \mathbb{E}\!\left[\log_{2}\left(\!1 \!+ \left|\sum\nolimits_{n=1}^N  \left|f^X_{n}{g_{k,n}}\right| 
    			\!+ \!  \left|h_{d,k}\right| \right|^2 \!\!\!\dfrac{P}{\sigma^2} \right)\!\!\right]\!,\!\!\!\!
    		\end{equation} since the IRS is configured with the optimal phase configuration for (scheduled) UE-$k$.
    		On the other hand, the ergodic SE for (scheduled) UE-$q$ of operator Y is 
    		\begin{equation}\label{eq:basic_jio_rate}
    			\langle R_q^{(Y)} \rangle = \mathbb{E}\!\left[\log_{2}\!\left(\!1 \!+ \left|\sum\nolimits_{n=1}^N \! f^Y_{n}{g_{q,n}} 
    		\!	+   h_{d,q} \right|^2\!\! \dfrac{P}{\sigma^2} \right)\right],\!\!
    		\end{equation} where we used the fact that the channels are circularly symmetric random variables, i.e.,  $f^Y_{n}g_{q,n}e^{j\theta} \stackrel{d}{=} f^Y_{n}g_{q,n}$ for any $\theta$.
    	    	Here, the expectations are taken with respect to the distribution of the channels to the respective UEs. With RR scheduling, the ergodic sum-SEs of two operators are given by
    		\begin{equation}\label{eq:sum-rate-template}
    			\bar{R}^{(X)}  \triangleq \frac{1}{K} \sum_{k=1}^K \langle R_k^{(X)} \rangle, \text{  and  }
    			\bar{R}^{(Y)} \triangleq \frac{1}{Q}\sum_{q=1}^Q \langle R_q^{(Y)} \rangle.
    		\end{equation}
    		We note that closed-form expressions for the ergodic sum-SE are difficult to obtain due to the complicated distribution of the SNR and SE terms (e.g., in~\eqref{eq:basic_airtel_rate} and~\eqref{eq:basic_jio_rate}.) Although we can use the approach in~\cite{Boulogeorgos_5GWF_2020} to obtain the exact ergodic sum-SE in terms of standard integrals, the resulting expressions do not provide insights into how the sum-SE scales with the system parameters. 
    		    		 Instead, we rely on applying tight approximations to obtain insightful results. However, we note that it is important to choose approximations such that the analysis across different scenarios can be unified in a single framework and the corresponding results are comparable.  To that end, we apply Jensen's inequality, and after careful simplification, we arrive at elegant and interpretable expressions for the ergodic sum-SEs in all scenarios. In this view, we have the following theorem for the sub-6 GHz band of communication. \!\!
    			\begin{figure}
    				\vspace{-0.1cm}
    				\centering
    			\includegraphics[width=0.9\linewidth]{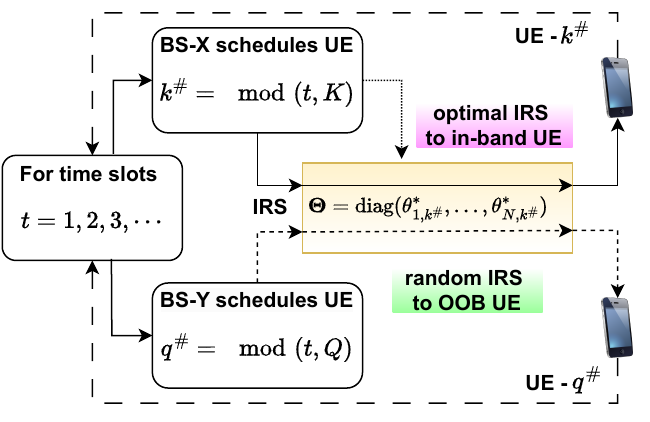}
    			\caption{Flowchart of the round-robin scheduling based protocol.}
    			\label{fig:flowchart}
    			\vspace{-0.3cm}
    		\end{figure}
    \begin{theorem}\label{thm:rate_characterization}
    Under independent Rayleigh fading channels in the sub-6 GHz bands,  with RR scheduling, and when the IRS is optimized to serve the UEs of operator X, the ergodic sum-SEs of operators X and Y  scale~as
    			\begin{multline}\label{eq:rate_airtel_rr}
    				\bar{R}^{(X)}\approx \frac{1}{K}\!\sum_{k=1}^K \log_{2}\left(1 +  \left[N^2\!\left(\frac{\pi^2}{16} \beta_{r,k}\!\right)\! + \right.\right.\\ \left.\left.\! N\!\left(\!\beta_{r,k}\!-\frac{\pi^2}{16}\beta_{r,k}+\frac{\pi^{3/2}}{4}\sqrt{\beta_{d,k}\beta_{r,k}}\right)\!+\!\beta_{d,k}\right]\!\frac{P}{\sigma^2}\right)\! ,\!
    			\end{multline}
    			 where $\beta_{r,k} \triangleq \beta_{\mathbf{f}^X}\beta_{\mathbf{g},k}$, and
    			\begin{equation}\label{eq:rate_jio_rr}
    				\bar{R}^{(Y)} \approx \frac{1}{Q}\sum_{q=1}^Q  \log_{2}\left(1 + \left[N\beta_{r,q} + \beta_{d,q}\right]\frac{P}{\sigma^2} \right),
    			\end{equation} 
    				where $\beta_{r,q} \triangleq \beta_{\mathbf{f}^Y}\beta_{\mathbf{g},q}$.
    		\end{theorem}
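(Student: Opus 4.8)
The plan is to apply Jensen's inequality to bring the expectation inside the logarithm in both~\eqref{eq:basic_airtel_rate} and~\eqref{eq:basic_jio_rate}, which reduces each ergodic SE to the evaluation of a single second moment. Because $\log_2(1+x)$ is concave, Jensen gives $\langle R \rangle \le \log_2\bigl(1 + \tfrac{P}{\sigma^2}\,\mathbb{E}[|Z|^2]\bigr)$, where $Z$ is the effective post-IRS channel gain of the scheduled UE; this is exactly the (tight) approximation claimed in the theorem. The task then splits into computing $\mathbb{E}[|Z_k|^2]$ for operator X and $\mathbb{E}[|Z_q|^2]$ for operator Y, after which averaging over the $K$ and $Q$ UEs gives~\eqref{eq:rate_airtel_rr} and~\eqref{eq:rate_jio_rr}.

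For operator X, the optimized phases in~\eqref{eq:basic_optimal_angle_X} align every path, so $Z_k = \sum_{n=1}^N |f^X_n g_{k,n}| + |h_{d,k}|$ is a sum of \emph{non-negative} magnitudes. I would expand $Z_k^2$ into the diagonal terms, the $N(N-1)$ off-diagonal cross terms $\sum_{n \ne m} |f^X_n g_{k,n}|\,|f^X_m g_{k,m}|$, the direct-path cross term $2|h_{d,k}|\sum_n |f^X_n g_{k,n}|$, and $|h_{d,k}|^2$. The only moments needed are those of the magnitude of a standard complex Gaussian, which is Rayleigh: $\mathbb{E}[|\tilde f|]=\sqrt{\pi}/2$ and $\mathbb{E}[|\tilde f|^2]=1$. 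Invoking independence across IRS elements yields $\mathbb{E}[|f^X_n g_{k,n}|]=\sqrt{\beta_{r,k}}\,\pi/4$ and $\mathbb{E}[|f^X_n g_{k,n}|^2]=\beta_{r,k}$; substituting and collecting by powers of $N$ reproduces the $N^2\tfrac{\pi^2}{16}\beta_{r,k}$ term, the linear-in-$N$ term (including the $\tfrac{\pi^{3/2}}{4}\sqrt{\beta_{d,k}\beta_{r,k}}$ contribution from the direct-path cross term), and the constant $\beta_{d,k}$. The $N(N-1)$ cross terms are precisely what generate the dominant $N^2$ coherent-combining scaling.

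For operator Y, the essential difference is that the IRS phase is random from Y's perspective, so $Z_q = \sum_{n=1}^N f^Y_n g_{q,n} + h_{d,q}$ is a sum of \emph{complex, zero-mean} summands: each product $f^Y_n g_{q,n}$ of independent zero-mean complex Gaussians is itself zero-mean, and all $N+1$ terms are mutually independent. Hence every cross term vanishes in expectation and $\mathbb{E}[|Z_q|^2]$ collapses to the sum of variances, $\sum_n \mathbb{E}[|f^Y_n|^2]\,\mathbb{E}[|g_{q,n}|^2] + \mathbb{E}[|h_{d,q}|^2] = N\beta_{r,q} + \beta_{d,q}$, giving~\eqref{eq:rate_jio_rr}. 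This incoherent combining is exactly why the OOB gain scales only linearly in $N$.

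I expect the main difficulty to lie in two places. The first is the moment bookkeeping for operator X: correctly counting the $N(N-1)$ off-diagonal terms and evaluating the mixed moment $\mathbb{E}[|f^X_n g_{k,n}|]$ of a product of two Rayleigh magnitudes, which is what produces the $\pi^2/16$ and $\pi^{3/2}/4$ constants. The second is justifying that the Jensen step is a faithful approximation rather than a loose bound; here I would argue that, since the effective gain concentrates as $N$ grows, the gap between $\mathbb{E}[\log(\cdot)]$ and $\log(\mathbb{E}[\cdot])$ is negligible in the regime of interest (and corroborate this numerically, as the paper does).
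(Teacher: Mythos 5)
Your proposal is correct and follows essentially the same route as the paper's proof: Jensen's inequality to move the expectation inside the logarithm, then a term-by-term second-moment computation using the Rayleigh-magnitude moments $\mathbb{E}[|\tilde f|]=\sqrt{\pi}/2$ and $\mathbb{E}[|\tilde f|^2]=1$, with the $N(N-1)$ coherent cross terms producing the $N^2$ scaling for operator X and the vanishing of all cross terms by zero-mean independence producing the $N$ scaling for operator Y. The moment bookkeeping you outline (including the $\pi^2/16$ and $\pi^{3/2}/4$ constants) matches the paper exactly.
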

    		\begin{proof}
    			We compute the sum-SEs for both operators below. 
    			\subsubsection{Ergodic sum-SE of operator X}	 
    			We first compute $\langle R_k^{(X)} \rangle$ for a given $k$. By Jensen's inequality, we obtain
    			\begin{equation}\label{eq:airtel_jensen}
    				\!\!\langle R_k^{(X)} \rangle \leq	
    				\log_{2}\!\left(\!\!1 \!+\! \mathbb{E}\left[\left|\sum\nolimits_{n=1}^N  |f^X_{n}{g_{k,n}}| +   |h_{d,k}| \right|^2\right]\! \frac{P}{\sigma^2}\! \right)\!.\!
    			\end{equation}
    			We expand the expectation term as follows.
    			\begin{multline}\label{eq:op_X_mean_expand}
    				\!\!\!\!\left|\sum_{n=1}^N  |f^X_{n}{g_{k,n}}| 
    				+   |h_{d,k}| \right|^2\!\! = \sum_{n,m=1}^N |f^X_{n}||g_{k,n}||f^X_{m}||g_{k,m}|+ |h_{d,k}|^2 \\ + 2\sum\nolimits_{n=1}^N |f^X_{n}||g_{k,n}||h_{d,k}| \ \
    				= \sum\nolimits_{n=1}^{N}|f^X_{n}|^2|g_{k,n}|^2   + |h_{d,k}|^2 + \\  \mathop{\sum\nolimits_{n,m=1}^{N}}_{n \neq m} |f^X_{n}||g_{k,n}||f^X_{m}||g_{k,m}|+  2\sum_{n=1}^N |f^X_{n}||g_{k,n}||h_{d,k}|.
    			\end{multline}
    			Under Rayleigh fading, $\mathbb{E}[|f_n^X|^2] = \beta_{\mathbf{f}^X}, \mathbb{E}[|g_{k,n}|^2] = \beta_{\mathbf{g},k}, \mathbb{E}[|h_{d,k}|^2] = \beta_{d,k}$, $ \mathbb{E}[|f_n^X|] = \sqrt{\dfrac{\pi}{4}\beta_{\mathbf{f}^X}},$ $ \mathbb{E}[|g_{k,n}|] = \sqrt{\dfrac{\pi}{4}\beta_{\mathbf{g},k}}, \mathbb{E}[|h_{d,k}|]   =\! \sqrt{\dfrac{\pi}{4}\beta_{d,k}}, \forall  k =1,\ldots,K; n = 1,\ldots,N$. Further, all the random variables are independent. Taking the expectation in~\eqref{eq:op_X_mean_expand}, and substituting for these values, we get 
    			\begin{multline}\label{eq:expectation_iid_airtel}
    				\mathbb{E}\left[\left|\sum\nolimits_{n=1}^N  |f^X_{n}{g_{k,n}}| 
    				+   |h_{d,k}| \right|^2\right] = 
    				N^2\left(\frac{\pi^2}{16} \beta_{r,k}\right) \\+ N\left(\beta_{r,k}-\frac{\pi^2}{16}\beta_{r,k}+\frac{\pi^{3/2}}{4}\sqrt{\beta_{d,k}\beta_{r,k}}\right) +\beta_{d,k},
    			\end{multline}
    			where $\beta_{r,k}$ is as defined in the statement of the theorem. 		Substituting~\eqref{eq:expectation_iid_airtel} in~\eqref{eq:airtel_jensen}, and plugging in the resulting expression in~\eqref{eq:sum-rate-template} yields~\eqref{eq:rate_airtel_rr}, as desired.
    			\subsubsection{Ergodic sum-SE of operator Y}\label{proof:sub-6-jio-ergodi-rate}	 
    			As above, from Jensen's inequality, we have
    			\begin{equation}\label{eq:jio_jensen}
    				\!\!\langle R_q^{(Y)} \rangle \leq \log_{2}\left(\!1 + \mathbb{E}\left[\left|\sum\nolimits_{n=1}^N  f^Y_{n}{g_{q,n}} 
    					+   h_{d,q} \right|^2\right]\! \frac{P}{\sigma^2} \right)\!.\!\!
    			\end{equation}
    			Proceeding along the same lines, we get
    			\begin{multline}
    				\left|\sum\nolimits_{n=1}^N  f^Y_{n}{g_{q,n}} 
    				+  h_{d,q} \right|^2 
    				= \sum\nolimits_{n=1}^{N} |f^Y_{n}|^2|g_{q,n}|^2  +  |h_{d,q}|^2 + \\ \!
    				\mathop{\sum\nolimits_{n,m=1}^{N}}_{n \neq m} f^Y_{n}g_{q,n}f^{Y*}_{m}g^*_{q,m}   \!+\! 2\Re \left(\!\sum\nolimits_{n=1}^N f^Y_{n}g_{q,n}h^*_{d,q}\right) .
    			\end{multline}
    			Taking the expectation and simplifying,
    			\begin{equation}\label{eq:expectation_iid_jio}
    				\mathbb{E}\left[\left|\sum\nolimits_{n=1}^N  f^Y_{n}{g_{q,n}} 
    				+   h_{d,q} \right|^2\right] = N\beta_{r,q} + \beta_{d,k}.
    			\end{equation}
    			Substituting~\eqref{eq:expectation_iid_jio}  in~\eqref{eq:jio_jensen}, and plugging in the resulting expression in~\eqref{eq:sum-rate-template} yields~\eqref{eq:rate_jio_rr}.	 
    		\end{proof}
    \vspace{-0.2cm}
    		From the above theorem, we infer the following: 
		    		\begin{itemize}
    			\item The IRS enhances the average received SNR by a factor of $N^2$ at any scheduled (in-band) UE of operator X when BS-X optimizes the IRS. This is the benefit that operator X obtains by using an optimized $N$-element IRS.\!
    			\item Operator Y, who does not control the IRS, also witnesses an enhancement of average SNR by a factor of $N$ for free, i.e., without any coordination with the IRS. This happens because the IRS makes the wireless environment more rich-scattering on average, and facilitates the reception of multiple copies of the signals at the (OOB) UEs.
    		\end{itemize}
    		Next, we prove that even the instantaneous characteristics of the OOB channel are favorable due to the IRS. We recognize that the instantaneous channel at an OOB UE-$q$ is given by $|h_q|^2$, where $h_q \triangleq \sum_{n=1}^N  f^Y_{n}{g_{q,n}} 	+   h_{d,q} $. In the sequel, we provide two kinds of results: first, we compute the outage probability experienced by UE-$q$ and then provide stronger results on the OOB channels via stochastic dominance theory. 
    		
    		The outage probability of UE-$q$ is given by 
    		\begin{equation}\label{eq:outage_basic_expression_sub_6}
    			P^\rho_{q,\text{out}} = \mathsf{Pr}(|h_{q}|^2 < \rho), 
    		\end{equation} where $\rho$ is a constant that depends on the receiver sensitivity threshold.
    		Although works like~\cite{Psomas_TCOM_2021} provide closed-form expressions for the outage probabilities of randomly configured IRSs, these expressions do not provide insight into the system performance. So, using~\cite[Proposition~$1$]{Yashvanth_TSP_2023}, we approximate $h_{q} \sim \mathcal{CN}(0,N\beta_{r,q}+\beta_{d,q})$ which becomes accurate as $N$ gets large.\footnote{This approximation works well even with $N=8$~\cite{Yashvanth_TSP_2023}.} Hence, $|h_{q}|^2$ is a exponential random variable with mean $N\beta_{r,q}+\beta_{d,q}$, and the outage probability in~\eqref{eq:outage_basic_expression_sub_6} can be easily obtained as shown in Theorem~\ref{thm:exact_ccdf} below.
    		
    		Next, we characterize the stochastic behavior of the channel gains witnessed by UE-$q$ with/without an IRS. In this view, define the following random variables.
    		\begin{equation}\label{eq:aux_RV_defn}
    			\left|h_{1,q}\right|^2\triangleq \left|\sum_{n=1}^N  f^Y_{n}{g_{q,n}} 	+   h_{d,q} \right|^2;  \left|h_{2,q}\right|^2 \triangleq \left| h_{d,q} \right|^2.
    		\end{equation} 
    		Note that $\left|h_{1,q}\right|^2$ and $\left|h_{2,q}\right|^2$ represent the channel power gain of UE-$q$ in the presence and absence of the IRS, respectively. We  now characterize the \emph{change} in the channel gain at UE-$q$ served by BS-Y in the presence and absence of the IRS:
    		\begin{equation}\label{eq:snr_offset}
    			Z^{(Y)}_N \triangleq \left|h_{1,q} \right|^2 - \left|h_{2,q} \right|^2 {\mathbbm{1}_{\{N \neq 0\}}}.
    		\end{equation}  
    		The event $\{Z^{(Y)}_N<0\}$ indicates an SNR degradation at the OOB UE due to the IRS. In the following theorem, we show that, almost surely, $Z^{(Y)}_N$ is non-negative \textcolor{black}{by deriving the CCDF of $Z^{(Y)}_N$, $\bar{F}_{Z^{(Y)}_N}(z) \triangleq { \sf{Pr}}(Z^{(Y)}_N \geq z).$}
      		\begin{theorem}\label{thm:exact_ccdf}
    			The probability of outage at  UE-$q$ served by an (OOB) operator Y in the sub-6 GHz band, when an IRS is optimized to serve the UEs of operator X, is given by
    			\begin{equation}
    P^\rho_{q,\text{out}} = 1 - e^{-\dfrac{\rho}{N\beta_{r,q}+\beta_{d,q}}}.
        			\end{equation} 
    			Further, for reasonably large $N$,\footnote{We will later numerically show that the result holds even for $N\ge 4$.}  	
    			the CCDF $\bar{F}_{Z^{(Y)}_N}(z)$ of $Z^{(Y)}_N$is given by
    			\begin{equation}\label{eq:thm_exact_ccdf_sub_6GHz}
    			\!	\bar{F}_{Z^{(Y)}_N}(z)  = 
    			\begin{cases}
    					1-\dfrac{1}{N\tilde{\beta}+2}\times \! e^{\left(\dfrac{z}{\beta_{d,q}}\right)}, \hspace{0.1cm} & \!\!\!\! \mathrm{if}  \ z < 0,\\
    					\left(\dfrac{N\tilde{\beta}+1}{N\tilde{\beta}+2}\right)\times \! e^{-\left(\dfrac{z}{\beta_{d,q}\left(1+N\tilde{\beta}\right)}\right)},   & \!\! \!\!\mathrm{if} \  z \geq 0.\\
    				\end{cases}
    			\end{equation} where $\tilde{\beta} \triangleq
    		\beta_{r,q}/\beta_{d,q}$.
    		\end{theorem}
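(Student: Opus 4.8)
My plan is to split the statement into its two independent claims and dispatch the outage probability first, since it follows almost immediately from the Gaussian approximation established just before the theorem. By \cite[Proposition~$1$]{Yashvanth_TSP_2023}, for reasonably large $N$ the aggregate OOB channel $h_q=\sum_{n=1}^N f^Y_n g_{q,n}+h_{d,q}$ is well approximated by $\mathcal{CN}(0,N\beta_{r,q}+\beta_{d,q})$, so that $|h_q|^2=|h_{1,q}|^2$ is exponential with mean $N\beta_{r,q}+\beta_{d,q}$. The outage probability is then just the exponential CDF evaluated at $\rho$, i.e. $P^\rho_{q,\text{out}}=\mathsf{Pr}(|h_q|^2<\rho)=1-e^{-\rho/(N\beta_{r,q}+\beta_{d,q})}$, which needs no further work.

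For the CCDF of $Z^{(Y)}_N$ I would first pin down the marginal laws of the two gains in~\eqref{eq:aux_RV_defn}: as above $|h_{1,q}|^2$ is exponential with mean $N\beta_{r,q}+\beta_{d,q}$ (rate $\lambda_1$), while $|h_{2,q}|^2=|h_{d,q}|^2$ is \emph{exactly} exponential with mean $\beta_{d,q}$ (rate $\lambda_2=1/\beta_{d,q}$) because $h_{d,q}\sim\mathcal{CN}(0,\beta_{d,q})$. The crucial modeling step is to treat $|h_{1,q}|^2$ and $|h_{2,q}|^2$ as \emph{independent}; I would then get the density of their difference by a one-line convolution, obtaining the two-sided (asymmetric Laplace) form $f_{Z^{(Y)}_N}(z)=\tfrac{\lambda_1\lambda_2}{\lambda_1+\lambda_2}e^{-\lambda_1 z}$ for $z\ge 0$ and $f_{Z^{(Y)}_N}(z)=\tfrac{\lambda_1\lambda_2}{\lambda_1+\lambda_2}e^{\lambda_2 z}$ for $z<0$. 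Integrating the tail gives $\bar{F}_{Z^{(Y)}_N}(z)=\tfrac{\lambda_2}{\lambda_1+\lambda_2}e^{-\lambda_1 z}$ for $z\ge0$ and $\bar{F}_{Z^{(Y)}_N}(z)=1-\tfrac{\lambda_1}{\lambda_1+\lambda_2}e^{\lambda_2 z}$ for $z<0$. Substituting $\lambda_1=1/(N\beta_{r,q}+\beta_{d,q})$, $\lambda_2=1/\beta_{d,q}$ and dividing numerator and denominator by $\beta_{d,q}$ turns the prefactors into $\tfrac{N\tilde\beta+1}{N\tilde\beta+2}$ and $\tfrac{1}{N\tilde\beta+2}$ and the exponents into the stated forms, reproducing~\eqref{eq:thm_exact_ccdf_sub_6GHz}; as a sanity check both branches meet at $\bar{F}_{Z^{(Y)}_N}(0)=\tfrac{N\tilde\beta+1}{N\tilde\beta+2}$, and $\mathsf{Pr}(Z^{(Y)}_N<0)=\tfrac{1}{N\tilde\beta+2}$ decreases in $N$, which is exactly the seed for the later dominance claim.

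The main obstacle is justifying the independence of $|h_{1,q}|^2$ and $|h_{2,q}|^2$, since they are genuinely correlated through the shared direct term $h_{d,q}$. A fully rigorous route would write $Z^{(Y)}_N=\mathbf{h}^H A\,\mathbf{h}$ as a Hermitian quadratic form in the jointly complex-Gaussian vector $\mathbf{h}=(h_{1,q},h_{2,q})^T$ with $A=\mathrm{diag}(1,-1)$ and covariance $\mathbf{C}=\bigl(\begin{smallmatrix} N\beta_{r,q}+\beta_{d,q} & \beta_{d,q}\\ \beta_{d,q} & \beta_{d,q}\end{smallmatrix}\bigr)$, and invert the MGF $M_Z(s)=1/\det(\mathbf{I}-s\mathbf{C}A)$; a short determinant computation yields $M_Z(s)=1/(1-s\,N\beta_{r,q}-s^2 N\beta_{r,q}\beta_{d,q})$, whose poles are irrational and do not collapse to the clean prefactors above. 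The point I would emphasize is that the independent-exponential model instead gives $M_Z(s)=1/\bigl(1-s\,N\beta_{r,q}-s^2(N\beta_{r,q}\beta_{d,q}+\beta_{d,q}^2)\bigr)$, differing only by the $s^2\beta_{d,q}^2$ term. Since $N\beta_{r,q}\beta_{d,q}\gg\beta_{d,q}^2$ once $N\beta_{r,q}\gg\beta_{d,q}$, the two transforms, and hence the two CCDFs, coincide for reasonably large $N$, which is precisely the hypothesis of the theorem (the footnote's $N\ge4$ threshold being what the numerics confirm). Thus the independence approximation is not merely convenient but is the same large-$N$ mechanism already invoked to justify $h_q\sim\mathcal{CN}(0,N\beta_{r,q}+\beta_{d,q})$.
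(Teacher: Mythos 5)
Your proposal is correct, and it lands on exactly the same approximation as the paper: both ultimately evaluate the law of the difference of two exponentials with means $\mu_1=N\beta_{r,q}+\beta_{d,q}$ and $\mu_2=\beta_{d,q}$ treated as independent, which is valid because the coupling through the shared direct term vanishes as $N$ grows. The route differs in the bookkeeping. The paper computes the correlation coefficient exactly, $\rho_{12}=1/(1+N\tilde{\beta})$, invokes the closed-form CDF of the difference of two \emph{correlated} non-identical chi-square variates from \cite[Eq.~4.24]{Simon_2002_ProbabilityGaussian}, and then sets $\rho_{12}\rightarrow 0$ to recover \eqref{eq:thm_exact_ccdf_sub_6GHz}; your proof instead derives the asymmetric-Laplace tail directly by convolution of independent exponentials and justifies the independence a posteriori by comparing the exact quadratic-form MGF $1/(1-sN\beta_{r,q}-s^2N\beta_{r,q}\beta_{d,q})$ with the independent-model MGF, which differ only by an $s^2\beta_{d,q}^2$ term that is negligible once $N\tilde{\beta}\gg 1$. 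Your version is more self-contained (no external table of distributions) and actually quantifies the size of the neglected dependence in the transform domain, which is the same $\mathcal{O}(1/N)$ mechanism the paper captures through $\rho_{12}$; the paper's version has the advantage of starting from an expression that is exact for any $N$ (given joint Gaussianity) before approximating. Two small caveats, neither fatal: your ``exact'' MGF itself presupposes that $(h_{1,q},h_{2,q})$ is jointly circular Gaussian, which is the same CLT-type approximation both proofs already rely on for the outage part; and closeness of the two rational MGFs coefficient-by-coefficient is an informal surrogate for closeness of the CCDFs, but this matches the level of rigor of the paper's own $\rho_{12}\rightarrow 0$ step. The outage-probability half and the algebra of the prefactors $\tfrac{N\tilde{\beta}+1}{N\tilde{\beta}+2}$ and $\tfrac{1}{N\tilde{\beta}+2}$ check out exactly.
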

    		\begin{proof}
    			We can obtain $P^\rho_{q,\text{out}} $ using the CCDF of an exponential random variable. 
    			For~\eqref{eq:thm_exact_ccdf_sub_6GHz}, see Appendix~\ref{sec:ccdf_proof}.\!\!
    		\end{proof}
    		From Theorem~\ref{thm:exact_ccdf}, we see that the outage probability is strictly monotonically decreasing with $N$, which shows that the IRS only improves the OOB performance. In fact, from a first-order Taylor series approximation, we can show that $P^\rho_{q,\text{out}} \approx {\rho}/{(N\beta_{r,q}+\beta_{d,q})}$, i.e., the outage probability decreases linearly in $N$. Further, since 
    		$\bar{F}_{Z^{(Y)}_N}(0) = 1 - {1}/\left({2+N\tilde{\beta}}\right),$
    		 for a given $\tilde{\beta}$, 
    		the probability that the SNR/gain offset in~\eqref{eq:snr_offset} is negative  decays as $\mathcal{O}\left({1}/{N}\right)$. 
    		Also, $\bar{F}_{Z^{(Y)}_{N'}}(z) \geq \bar{F}_{Z^{(Y)}_{N''}}(z)$ for all $z$ and $N' \!> \!N''$. 
    		Consequently, we have the following proposition. 
    		\begin{proposition}\label{sec:prop_stochastic_dominance_sub_6_GHz}
    			For any $M, N \in \mathbb{N}$ with $M>N$, the random variable $Z^{(Y)}_{M}$ \emph{stochastically dominates}\footnote{A real-valued random variable $X$ is stochastically larger than, or dominates,  the real-valued random variable $Y$, written $X >_{st} Y$, if
    				\begin{equation}
    					{\sf {Pr}}(X>a) \geq {\sf {Pr}}(Y>a), \hspace{0.2cm} \text{for all }a.
    				\end{equation} Note that, if the random variables $X$ and $Y$ have CCDFs $\bar{F}$ and $\bar{G}$, respectively, then $X >_{st} Y \Longleftrightarrow \bar{F}(a) \geq \bar{G}(a)$ $\forall$ $a\in \mathbb{R}$~\cite{Ross_2014_ProbabiltyBook}.} $Z^{(Y)}_{N}$. In particular, the channel gain in the presence of the IRS stochastically dominates the channel gain in its absence.
    		\end{proposition}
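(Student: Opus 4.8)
The plan is to prove the claim directly from the closed-form CCDF in Theorem~\ref{thm:exact_ccdf}, using the characterization recalled in the footnote to the proposition: $Z^{(Y)}_{M} >_{st} Z^{(Y)}_{N}$ holds if and only if $\bar{F}_{Z^{(Y)}_{M}}(z) \geq \bar{F}_{Z^{(Y)}_{N}}(z)$ for every $z \in \mathbb{R}$. Hence it suffices to show that, for each fixed $z$, the CCDF $\bar{F}_{Z^{(Y)}_{N}}(z)$ in~\eqref{eq:thm_exact_ccdf_sub_6GHz} is a nondecreasing function of $N$; the proposition then follows immediately by specializing the two indices to $M$ and $N$ with $M>N$. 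Since $\tilde{\beta} = \beta_{r,q}/\beta_{d,q} > 0$ is fixed, it is convenient to regard the CCDF as a function of the single increasing quantity $x \triangleq N\tilde{\beta} \geq 0$ and to verify monotonicity in $x$.

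First I would handle the regime $z<0$, where $\bar{F}_{Z^{(Y)}_{N}}(z) = 1 - (x+2)^{-1} e^{z/\beta_{d,q}}$. Here the factor $e^{z/\beta_{d,q}}$ is independent of $N$, while $(x+2)^{-1}$ is strictly decreasing in $x$; consequently the subtracted term shrinks as $N$ grows and the CCDF increases, as required. The only mildly substantive step is the regime $z\geq 0$, where $\bar{F}_{Z^{(Y)}_{N}}(z) = \frac{x+1}{x+2}\, e^{-z/(\beta_{d,q}(1+x))}$ is a product of two $N$-dependent factors. I would argue that each factor is nonnegative and nondecreasing in $x$: the prefactor $\frac{x+1}{x+2} = 1-(x+2)^{-1}$ is strictly increasing, and, because $z\geq 0$ and $\beta_{d,q}>0$, the exponent $-z/(\beta_{d,q}(1+x))$ is nonpositive and increases toward $0$ as $x$ grows, so the exponential factor is increasing as well. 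A product of two nonnegative nondecreasing functions is nondecreasing, which settles this case and, together with the previous one, establishes that $\bar{F}_{Z^{(Y)}_{N}}(z)$ is nondecreasing in $N$ for all $z$. (Continuity at $z=0$ is automatic, both branches giving $\frac{x+1}{x+2}$.)

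I expect the only obstacle to be bookkeeping in the $z\geq 0$ case, namely confirming that the prefactor and the exponential push the CCDF in the \emph{same} direction; once the problem is reduced to monotonicity in $x = N\tilde{\beta}$ this is immediate, so no genuine difficulty arises. Finally, for the concluding sentence I would treat the presence-versus-absence comparison directly: under the Gaussian approximation used to obtain Theorem~\ref{thm:exact_ccdf}, $|h_{1,q}|^2$ and $|h_{2,q}|^2$ are exponential with means $N\beta_{r,q}+\beta_{d,q}$ and $\beta_{d,q}$, respectively, and their CCDFs $e^{-z/(N\beta_{r,q}+\beta_{d,q})}$ and $e^{-z/\beta_{d,q}}$ satisfy the former $\geq$ the latter for all $z\geq 0$ because the mean is larger. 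Invoking the same CCDF characterization then yields $|h_{1,q}|^2 >_{st} |h_{2,q}|^2$, i.e., the channel gain in the presence of the IRS stochastically dominates the gain in its absence.
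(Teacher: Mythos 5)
Your proof is correct and follows essentially the same route as the paper, which justifies the proposition by asserting that the CCDF in Theorem~\ref{thm:exact_ccdf} satisfies $\bar{F}_{Z^{(Y)}_{N'}}(z) \geq \bar{F}_{Z^{(Y)}_{N''}}(z)$ for all $z$ and $N'>N''$; you simply make the branch-by-branch monotonicity check in $x=N\tilde{\beta}$ explicit, and your verification of both branches (and of continuity at $z=0$) is sound. Your direct comparison of the exponential CCDFs of $|h_{1,q}|^2$ and $|h_{2,q}|^2$ for the final ``presence versus absence'' sentence is a small but welcome addition, since that claim does not follow from the dominance of $Z^{(Y)}_M$ over $Z^{(Y)}_N$ alone and the paper leaves it implicit.
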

    		The above proposition states that the random variables $\left\{Z^{(Y)}_{n}\right\}_{n\in \mathbb{N}}$ form a sequence of \emph{stochastically larger} random variables as a function of the number of IRS elements, where $\mathbb{N}$ is the set of natural numbers. Thus,  the SNR offset increases with the number of IRS elements even at an OOB UE, i.e., the IRS only enhances the channel quality at an OOB UE at any point in time, with high probability. Therefore, the performance of OOB operators \emph{does not degrade} even when the operator is entirely oblivious to the presence of the IRS. Note that this holds true for any operator in the area; hence, no operator will be at a disadvantage due to the presence of an IRS being controlled by only one operator. 
    		\begin{remark} \label{outage_prob_ccdf_in_band_UE}
    		The outage probability and the CCDF of the SNR offset as in Theorem~\ref{thm:exact_ccdf} at the in-band UE (for whom the IRS is optimized) decay to zero as $\mathcal{O}(e^{-N})$ (see Appendix~\ref{sec:in-band_outage_CCDF_scaling_sketch}.) This improvement in decay is the benefit of using an optimized IRS. 
    		\end{remark}
    		Having asserted that IRS can only benefit every OOB operator in the sub-6 GHz bands, we next move on to understand the effect of IRS on the OOB performance in the mmWave bands of communications, where the IRSs can significantly boost the in-band operators' performance in establishing link connectivity, improving coverage, etc.\!\!

    	\section{OOB Performance: mmWave bands}\label{sec:mmwave_LOS_RR}
    	In this section, we evaluate the in-band and OOB performance of an IRS-aided mmWave system under RR scheduling of the UEs. The wireless channels in the mmWave bands are typically spatially sparse and directional, with a few propagation paths, as against the rich-scattering  sub-6 GHz channels. Consequently, the analysis in the sub-6 GHz bands does not extend to mmWave bands, and it is necessary to evaluate the OOB performance in the mmWave bands independently. We consider two different scenarios inspired by~\cite{Wang_TWC_2022_IRS_BA}, namely, the LoS and (L+)NLoS scenarios, as explained in Sec.~\ref{sec:contributions}.
    	In both cases, we assume that the channel (see~\eqref{eq:ch_model_mmwave_single_IRS}) to the OOB UEs is a mixture of LoS and NLoS paths.
    	\subsection{IRS Optimized for LoS Scenarios}\label{sec:single_path_mmwave_RR}
    	The dominant LoS channel of the in-band UE is~\cite{Wang_TCOM_2023,Wang_TVT_2020} 
    	 \begin{align}
    	 h_k &= N\gamma^{(1)}_{1,X}\gamma^{(2)}_{1,k}\mathbf{a}_N^H(\psi_{1,k})\boldsymbol{\Theta}\mathbf{a}_N^*(\phi_{1,X}) + h_{d,k} \\
    	&\stackrel{(a)}{=} N  \left(\! \gamma^{(1)}_{1,X}\gamma^{(2)}_{1,k}\left(\mathbf{a}_N^H(\phi_{1,X})\odot\mathbf{a}_N^H(\psi_{1,k})\right)\right)\!\boldsymbol{\theta} +  h_{d,k} \label{eq:mmwave_jio_single_path_simplified_channel},
    	\end{align}
    	where $\boldsymbol{\theta}=\text{diag}(\boldsymbol{\Theta}) \in \mathbb{C}^N$, and $(a)$ is obtained using the properties of the Hadamard product. Since the Hadamard product of two array response vectors is also an array vector but aligned in a different direction~\cite{Wei_Yu_TWC_2023_Sparse_Ch_estim_IRS}, we simplify~\eqref{eq:mmwave_jio_single_path_simplified_channel} as
    	\begin{equation}\label{eq:effective_channel_mmwave_single_path_airtel}
    		h_k = N  \gamma_{X,k}\mathbf{\dot{a}}_N^H(\omega^1_{X,k})\boldsymbol{\theta} +  h_{d,k},
    	\end{equation} where \textcolor{black}{$\omega^1_{X,k} \triangleq \sin_{(p)}^{-1}\left(\sin(\phi_{1,X}) + \sin(\psi_{1,k})\right)$},\footnote{\textcolor{black}{We define $\sin_{(p)}^{-1}(x)$ so that $x$ is in the principal argument $[-1,1)$ as 				
    \begin{equation}
    			\!\!\!	\sin_{(p)}^{-1}(x) =\begin{cases}\!
    					\sin^{-1}(x - 2), & \mathrm{ if } \ x \geq 1,\\ \!
    					\sin^{-1}(x), & \mathrm{ if } \ x \in [-1,1),\\ \!
    					\sin^{-1}(x+2), & \mathrm{ if } \ x < -1.
    				\end{cases} 
    			\end{equation}}} and  $\gamma_{X,k} \triangleq \gamma^{(1)}_{1,X}\gamma^{(2)}_{1,k}$. Here, $\mathbf{\dot{a}}_N$ is an array vector that is normalized by $N$ instead of $\sqrt{N}$ (see~\eqref{eq:array_vector_template}), and so $\mathbf{\dot{a}}_N(\cdot) = \frac{1}{\sqrt{N}}\mathbf{a}_N(\cdot)$. Then by the Cauchy-Schwarz (CS) inequality, the $n$th entry of the  optimal IRS configuration vector $\boldsymbol{\theta}^{\mathrm{opt}}$ which maximizes the channel gain $|h_k|^2$ is $\theta_n = e^{j\left(\angle h_{d,k}-\pi(n-1)\omega^1_{X,k}-\angle\gamma_{X,k} \right)}$, and hence the optimal IRS vector is 
	  	\begin{equation}\label{eq:IRS_optimal_single_path_mmwave}
    			\boldsymbol{\theta}^{\mathrm{opt}} = \dfrac{h_{d,k}\gamma^*_{X,k}}{\left|h_{d,k}\gamma_{X,k} \right|} \times N \mathbf{\dot{a}}_N(\omega^1_{X,k}).
    		\end{equation}
    	From~\eqref{eq:IRS_optimal_single_path_mmwave}, it is clear that the IRS vector is \emph{directional}; it is aligned in the direction of the in-band UE's channel. To illustrate this, we plot the correlation function $\mathbb{E}\left[\left|\mathbf{\dot{a}}^H(\nu)\boldsymbol{\theta}^{\mathrm{opt}}\right|^2\right]$ versus $\nu \in \boldsymbol{\Phi}$ in Fig.~\ref{fig:correlation_single_path_IB}, where $\boldsymbol{\theta}$ is set to align with an arbitrary in-band UE with  $\omega^1_{X,k} = 0.52$. The expectation is with respect to the channel fading coefficients. 
    	    	Clearly, when $\nu = 0.52 =\omega^1_{X,k}$, the function takes its maximum value of $1$. 
    	This confirms that the IRS aligns with the in-band channel to which it is optimized. Contrariwise, the IRS phase shift as optimized by BS-X will be aligned at a \emph{random} direction from any OOB UE's view with the distribution given by $\mathcal{U}_{\mathcal{A}}$ in~\eqref{eq:codeboook_ditbn}. We now characterize the ergodic sum-SEs of both operators.
    		\begin{theorem}\label{thm:rate_characterization_mmwave_single_path_IB}
    		Under the Saleh-Valenzuela LoS model in the mmWave channels, with RR scheduling, and when the IRS is optimized to serve the UEs of operator X, the ergodic sum-SEs of  operators X and Y scale~as
    		\begin{multline}\label{eq:mmwave_rate_airtel_rr_single_path}
    			\bar{S}_1^{(X)} \approx \frac{1}{K}\sum_{k=1}^K \log_{2}\bigg(1 + \bigg[N^2\beta_{r,k}   \\  \left. \left. + \ N \left(\frac{\pi^{3/2}}{4}\sqrt{\beta_{d,k}\beta_{r,k}}\right)+ \beta_{d,k}\right]\frac{P}{\sigma^2} \right) ,
    			\end{multline} and
    		\begin{multline}\label{eq:mmwave_rate_jio_rr_single_path}
    			\bar{S}_1^{(Y)} \approx \frac{1}{Q}\sum_{q=1}^Q\left( \dfrac{\bar{L}}{N} \log_{2}\left(1 + \left[\dfrac{N^2}{\bar{L}} \beta_{r,q} + \beta_{d,q}\right]\frac{P}{\sigma^2} \right)  \right. \\ \left. + \left(1-\dfrac{\bar{L}}{N}\right)\log_2\left(1+\beta_{d,q}\frac{P}{\sigma^2}\right)\right),
    		\end{multline} respectively, where $\bar{L} \triangleq \min\left\{L,N\right\}$, and $L \triangleq L_1L_2$.
    	\end{theorem}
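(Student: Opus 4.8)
The plan is to treat the two operators separately, exactly as the statement splits them, exploiting the directional form of $\boldsymbol{\theta}^{\mathrm{opt}}$ in \eqref{eq:IRS_optimal_single_path_mmwave} together with the (exact) orthonormality of the resolvable beambook in \eqref{eq:orthogonal_array_vectors}. For operator X I would start from the effective single dominant-path channel \eqref{eq:effective_channel_mmwave_single_path_airtel} and substitute $\boldsymbol{\theta}^{\mathrm{opt}} = \tfrac{h_{d,k}\gamma^*_{X,k}}{|h_{d,k}\gamma_{X,k}|}\,N\dot{\mathbf{a}}_N(\omega^1_{X,k})$. Since $\dot{\mathbf{a}}_N^H(\omega^1_{X,k})\dot{\mathbf{a}}_N(\omega^1_{X,k}) = \tfrac{1}{N}\mathbf{a}_N^H(\omega^1_{X,k})\mathbf{a}_N(\omega^1_{X,k}) = \tfrac{1}{N}$, the IRS term collapses to a coherent contribution phase-aligned with $h_{d,k}$, giving $h_k = (N|\gamma_{X,k}| + |h_{d,k}|)e^{j\angle h_{d,k}}$, hence $|h_k|^2 = (N|\gamma_{X,k}| + |h_{d,k}|)^2$. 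Applying Jensen's inequality as in \eqref{eq:airtel_jensen}, expanding the square, and using the product-of-Rayleigh moments $\mathbb{E}[|\gamma_{X,k}|] = \tfrac{\pi}{4}\sqrt{\beta_{r,k}}$, $\mathbb{E}[|\gamma_{X,k}|^2] = \beta_{r,k}$ (with $\gamma_{X,k} = \gamma^{(1)}_{1,X}\gamma^{(2)}_{1,k}$) and $\mathbb{E}[|h_{d,k}|] = \sqrt{\tfrac{\pi}{4}\beta_{d,k}}$ gives $\mathbb{E}[|h_k|^2] = N^2\beta_{r,k} + N\tfrac{\pi^{3/2}}{4}\sqrt{\beta_{d,k}\beta_{r,k}} + \beta_{d,k}$, which is \eqref{eq:mmwave_rate_airtel_rr_single_path}.

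For operator Y the main work is to resolve the cascaded channel. Applying the same Hadamard-product identity used to pass from the double to the single array vector in \eqref{eq:mmwave_jio_single_path_simplified_channel}, I would write $\mathbf{g}_q^T\boldsymbol{\Theta}^{\mathrm{opt}}\mathbf{f}^Y = \tfrac{N}{\sqrt{L}}\sum_{i=1}^{L_1}\sum_{j=1}^{L_2}\gamma^{(1)}_{i,Y}\gamma^{(2)}_{j,q}\,\dot{\mathbf{a}}_N^H(\omega^{ij}_{Y,q})\boldsymbol{\theta}^{\mathrm{opt}}$, where $L = L_1 L_2$ and $\omega^{ij}_{Y,q} \triangleq \sin_{(p)}^{-1}(\sin\phi_{i,Y} + \sin\psi_{j,q})$ is the effective cascaded direction. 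Substituting the directional $\boldsymbol{\theta}^{\mathrm{opt}} = cN\dot{\mathbf{a}}_N(\omega^1_{X,k})$ with $|c| = 1$ and invoking \eqref{eq:orthogonal_array_vectors}, each inner product reduces to $\dot{\mathbf{a}}_N^H(\omega^{ij}_{Y,q})\boldsymbol{\theta}^{\mathrm{opt}} = c\,\mathbbm{1}_{\{\omega^{ij}_{Y,q} = \omega^1_{X,k}\}}$. Thus the IRS contributes to the OOB gain only through cascaded paths whose direction coincides with the (random, $\mathcal{U}_{\mathcal{A}}$-distributed) IRS beam $\omega^1_{X,k}$.

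I would then condition on the alignment event $\mathcal{E} \triangleq \{\exists\,(i,j): \omega^{ij}_{Y,q} = \omega^1_{X,k}\}$. Since $\omega^1_{X,k}$ is uniform over the $N$ resolvable directions and the cascaded channel occupies (generically) $\min\{L,N\} = \bar{L}$ distinct directions, $\mathsf{Pr}(\mathcal{E}) = \bar{L}/N$. On $\mathcal{E}$, collecting the paths sharing the aligned direction and using independence plus the random phase $c$ relative to $h_{d,q}$ to kill all cross terms (exactly as the cross terms vanished in \eqref{eq:expectation_iid_jio}), the conditional mean gain is $\mathbb{E}[|h_q|^2 \mid \mathcal{E}] = \tfrac{N^2}{\bar{L}}\beta_{r,q} + \beta_{d,q}$, while on $\mathcal{E}^c$ the IRS term vanishes and $\mathbb{E}[|h_q|^2 \mid \mathcal{E}^c] = \beta_{d,q}$. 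Applying Jensen's inequality to each conditional ergodic SE and recombining with weights $\mathsf{Pr}(\mathcal{E})$ and $1-\mathsf{Pr}(\mathcal{E})$, then averaging over $q$, yields \eqref{eq:mmwave_rate_jio_rr_single_path}.

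The hard part will be the probabilistic bookkeeping of how the $L$ cascaded directions populate the $N$-element anglebook, and in particular unifying the two regimes through $\bar{L} = \min\{L,N\}$. When $L \le N$ the directions are generically distinct, the alignment probability is $L/N$, and a single colliding path contributes gain $\tfrac{N^2}{L}\beta_{r,q}$. When $L > N$, the pigeonhole principle forces several cascaded paths into each resolvable direction, so alignment becomes almost certain, $\mathsf{Pr}(\mathcal{E}) \to 1$, while the aligned direction now carries $\approx L/N$ independent paths whose powers add to $\tfrac{N^2}{L}\cdot\tfrac{L}{N}\beta_{r,q} = N\beta_{r,q} = \tfrac{N^2}{\bar{L}}\beta_{r,q}$. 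Verifying that the alignment probability and the per-direction gain combine consistently into the single $\bar{L}$-based expression in both regimes is the crux of the argument; I would also need to justify that the genericity (distinctness) assumption and the large-$N$ orthonormality in \eqref{eq:orthogonal_array_vectors} together make the approximation tight.
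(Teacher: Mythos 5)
Your proposal is correct and follows essentially the same route as the paper: for operator X, substitute the directional $\boldsymbol{\theta}^{\mathrm{opt}}$ into the effective channel, use $\mathbf{\dot{a}}_N^H\mathbf{\dot{a}}_N = 1/N$ to collapse the IRS term to a coherent $N|\gamma_{X,k}|$ contribution, and apply Jensen with the product-Rayleigh moments; for operator Y, condition on the beam-alignment event of probability $\bar{L}/N$ (the paper's events $\mathcal{E}_1,\mathcal{E}_0$), compute the conditional mean gains $\frac{N^2}{\bar{L}}\beta_{r,q}+\beta_{d,q}$ and $\beta_{d,q}$, and recombine via the law of total expectation. Your extra pigeonhole bookkeeping for $L>N$ is a slightly more explicit version of the paper's one-line clustering argument and is consistent with it.
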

    	\begin{proof} 
    	We derive the SEs of both operators separately below.
    	\subsubsection{Ergodic sum-SE of operator X}
    	Let the average SE seen by UE-$k$ be  $\langle S_{k,1}^{(X)} \rangle$. Then the ergodic sum-SE of operator X is $\bar{S}_1^{(X)} = \frac{1}{K}\sum_{k=1}^K\langle S_{k,1}^{(X)} \rangle$.  By Jensen's inequality,
    	\begin{equation}\label{eq:jensen_mmwave_single_path}
    		\!\!\langle S_{k,1}^{(X)} \rangle  \leq \log_2\!\left(\!1+\mathbb{E}\left[|h_k|^2\right]\!\frac{P}{\sigma^2}\right).
    	\end{equation}  We recognize that
    	\begin{align}
    		\!\!\left|h_k\right|^2 &= \left|N  \gamma_{X,k}\mathbf{\dot{a}}_N^H(\omega^1_{X,k})\boldsymbol{\theta}^{\mathrm{opt}} +  h_{d,k}\right|^2\\ & \stackrel{(a)}{=} {\left| h_{d,k} \!+\! N^2\dfrac{h_{d,k}\left|\gamma_{X,k}\right|^2}{\left|h_{d,k}\gamma_{X,k} \right|}\!\times\! \mathbf{\dot{a}}_N^H(\omega^1_{X,k})\mathbf{\dot{a}}_N(\omega^1_{X,k})\right|}^2\!,\!\!
    	\end{align} 
	where $(a)$ is due to~\eqref{eq:IRS_optimal_single_path_mmwave}. Further, using $\mathbf{\dot{a}}_N^H(\omega^1_{X,k})\mathbf{\dot{a}}_N(\omega^1_{X,k}) = \frac{1}{N}$, we can compute  $\mathbb{E}\left[\left|h_k\right|^2\right]= \mathbb{E}\left[\left|\left|h_{d,k}\right| +N\left|\gamma_{X,k}\right|\right|^2\right]=\mathbb{E}\left[\left|h_{d,k}\right|^2\right] +N^2\mathbb{E}\left[\left|\gamma_{X,k}\right|^2\right] + 2N\mathbb{E}\left[\left|h_{d,k}\right|\left|\gamma_{X,k}\right| \right]$. Using the results from Sec.~\ref{sec:ch_model_mmwave}, this can be simplified as 
    	\begin{equation}
    		\mathbb{E}\left[\left|h_k\right|^2\right] = N^2 \beta_{r,k} + N\left(\dfrac{\pi^{3/2}}{4}\right)\sqrt{\beta_{r,k}\beta_{d,k}} + \beta_{d,k}.
    	\end{equation} Substituting the above in~\eqref{eq:jensen_mmwave_single_path}, and plugging it into the expression for $\bar{S}_1^{(X)}$ yields~\eqref{eq:mmwave_rate_airtel_rr_single_path}.	
    	
	\subsubsection{Ergodic sum-SE of operator Y}\label{app:sec:oob_ergoidc_rate-mmwave_single_path}	
    	Recall that the ergodic sum-SE of operator Y is $\bar{S}_1^{(Y)} = \frac{1}{Q}\sum_{q=1}^Q\langle S_{q,1}^{(Y)} \rangle$. Then, the channel seen by an arbitrary OOB UE, say UE-$q$, can be derived in a similar manner as~\eqref{eq:effective_channel_mmwave_single_path_airtel} and is given by
    	\begin{equation}
    		h_q =  h_{d,q}+\dfrac{N}{\sqrt{L}} \sum\nolimits_{l=1}^{L} \gamma^{(1)}_{l,Y}\gamma^{(2)}_{l,q}\mathbf{\dot{a}}_N^H(\omega^l_{Y,q})\boldsymbol{\theta}^{\mathrm{opt}},
    	\end{equation} where  $\boldsymbol{\theta}^{\mathrm{opt}}$ is given by~\eqref{eq:IRS_optimal_single_path_mmwave}, and $\gamma^{(1)}_{l,Y},\gamma^{(2)}_{l,q}$ denote the channel gains of the paths between BS-IRS, and IRS-UE, respectively, corresponding to the $l$th cascaded path. Then we have 
    	\begin{align}
    		h_q &= h_{d,q} + \dfrac{N^2}{\sqrt{L}}\sum\nolimits_{l=1}^{L} \!\!\!\gamma^{(1)}_{l,Y}\gamma^{(2)}_{l,q}\dfrac{h_{d,k}\gamma^*_{X,k}}{\left|h_{d,k}\gamma_{X,k} \right|}\mathbf{\dot{a}}_N^H(\omega^l_{Y,q})\mathbf{\dot{a}}_N(\omega^1_{X,k}) \nonumber \\ &=  h_{d,q} + \dfrac{N^2}{\sqrt{L}} \sum\nolimits_{l=1}^{L} \gamma^{(1)}_{l,Y}\gamma^{(2)}_{l,q}\frac{e^{j\angle h_{d,k}}}{e^{j\angle \gamma_{X,k}}}\mathbf{\dot{a}}_N^H(\omega^l_{Y,q})\mathbf{\dot{a}}_N(\omega^1_{X,k}) \nonumber\\
    		&\stackrel{d}{=}  h_{d,q} + \dfrac{N^2}{\sqrt{L}} \sum\nolimits_{l=1}^{L} \gamma^{(1)}_{l,Y}\gamma^{(2)}_{l,q}\mathbf{\dot{a}}_N^H(\omega^l_{Y,q})\mathbf{\dot{a}}_N(\omega^1_{X,k}), \label{eq:final_OOB_channel_mmwave_single_path}
    	\end{align} where, in the last step, we used the fact that \textcolor{black}{the product term $\gamma^{(1)}_{l,Y}\gamma^{(2)}_{l,q}$} is circularly symmetric. We also have
    	\begin{equation}\label{eq:jensen_mmwave_single_path_OOB}
    		\langle S_{q,1}^{(Y)} \rangle \triangleq \mathbb{E}\left[ S_{q,1}^{(Y)}\right] =  \mathbb{E}\left[\log_2\left(1+|h_q|^2\frac{P}{\sigma^2}\right)\right] .
		    	\end{equation} 
    	Consider $L< N$. Now, let $\mathcal{E}_1$ be the event that one of the $L$ angles of the OOB channel ``matches'' with the IRS angle, and $\mathcal{E}_0$ be the event that none of the $L$ angles of the OOB channel match with the IRS angle. Then, we can write~\eqref{eq:jensen_mmwave_single_path_OOB} as
    	\begin{equation}
    	\!\!\!\!	\langle S_{q,1}^{(Y)} \rangle
	    	\stackrel{(a)}{=} \sum_{i=0}^{1} \mathbb{E}\left[S_{q,1}^{(Y)}\mathbbm{1}_{\{\mathcal{E}_i\}}\right]  \stackrel{(b)}{=} \sum_{i=0}^{1} \mathbb{E}\left[S_{q,1}^{(Y)}\bigg\rvert\mathcal{E}_i\right] {\sf{Pr}}(\mathcal{E}_i), \!\!
    	\end{equation}
    	where $\mathbbm{1}_{\{\mathcal{E}_i\}}$ is the indicator for the occurrence of event $\mathcal{E}_i$, $(a)$ is because $\mathcal{E}_0$ and $\mathcal{E}_1$ are mutually exclusive and exhaustive; $(b)$ is due to the law of iterated expectations. Also, from~\eqref{eq:codeboook_ditbn}, and flat-top directivity of the IRS, ${\sf{Pr}}(\mathcal{E}_1) = L/N$, and ${\sf{Pr}}(\mathcal{E}_0) = \left(1-L/N\right)$. Then from~\eqref{eq:orthogonal_array_vectors}, and Jensen's inequality, we have
        	\begin{equation}
    		\!\!\!\!\mathbb{E}\left[S_{q,1}^{(Y)}\bigg\rvert\mathcal{E}_1\right] \!\leq  \log_2\!\left(\!\!1\!+\!\mathbb{E}\left[\left| h_{d,q} + \dfrac{N}{\sqrt{L}}  \gamma^{(1)}_{l^*,Y}\gamma^{(2)}_{l^*,q}\right|^2\right]\!\!\frac{P}{\sigma^2}\!\right)\!,\!\!\!
    	\end{equation}
    	\begin{equation} 
    		\vspace{-0.05cm}\text{and,    \hspace{0.75cm} }\mathbb{E}\left[S_{q,1}^{(Y)}\bigg\rvert\mathcal{E}_0\right] \leq\log_2\left(1+\mathbb{E}\left[\left|h_{d,q}\right|^2\right]\frac{P}{\sigma^2}\right),
    	\end{equation} where $l^* = \arg_l \left\{\mathbf{\dot{a}}_N^H(\omega^l_{Y,q})\mathbf{\dot{a}}_N(\omega^1_{X,k})=1/N\right\}$. We can show that (similar to  Sec.~\ref{proof:sub-6-jio-ergodi-rate}), $\mathbb{E}\left[|h_q|^2\bigg\rvert\mathcal{E}_1\right] = \left(\dfrac{N^2}{L} \beta_{r,q} + \beta_{d,q}\right)$, and $\mathbb{E}\left[|h_q|^2\bigg\rvert\mathcal{E}_0\right]=\beta_{d,q}$. Collecting all the terms and substituting in~\eqref{eq:jensen_mmwave_single_path_OOB} and using the expression for $\bar{S}_1^{(Y)}$ yields~\eqref{eq:mmwave_rate_jio_rr_single_path}. Now, when $L\geq N$, since an $N$-element IRS can steer at most $N$ resolvable beams, some of the paths will be clustered together, and the above analysis holds by replacing $L$ with $N$. We cover both cases by $\bar{L} = \min\{L,N\}$ in~\eqref{eq:mmwave_rate_jio_rr_single_path}. 
	    	\end{proof}
       From the above theorem, it is clear that an IRS can never degrade the OOB performance in LoS scenarios of the mmWave bands. Even when $L$ is small and $N$ is large, the second term in~\eqref{eq:mmwave_rate_jio_rr_single_path} remains, which is the achievable SE in the absence of IRS. Thus, almost surely, the IRS results in an OOB-SE that is at least as high as the SE seen in its absence.
    \begin{figure}
    		\vspace{-0.2cm}
    		\hspace{-0.2cm}
    		\begin{subfigure}{0.49\linewidth}
    	\includegraphics[width=1.09\linewidth]{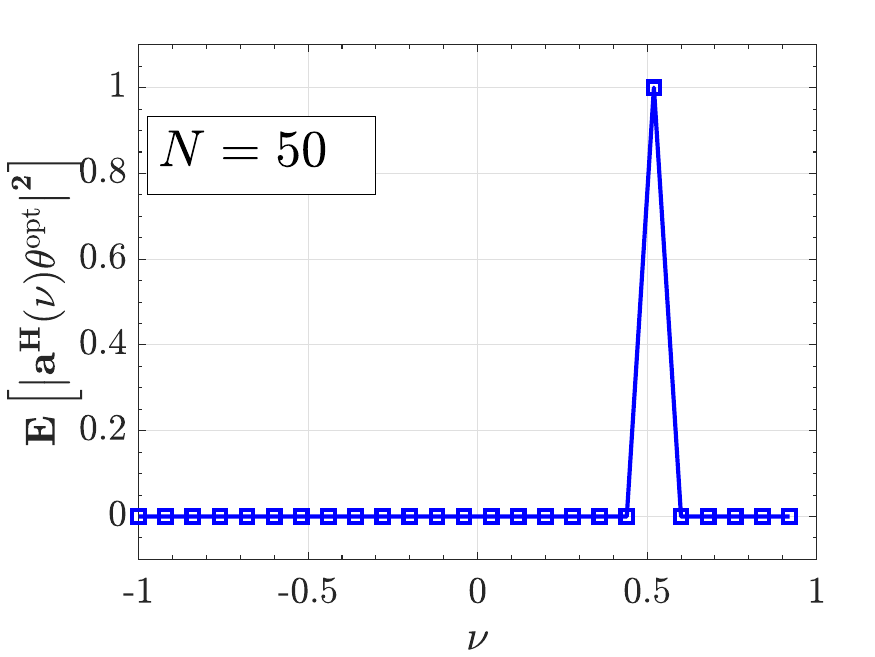}
    		\end{subfigure}
    		\begin{subfigure}{0.49\linewidth}
    		\includegraphics[width=1.09\linewidth]{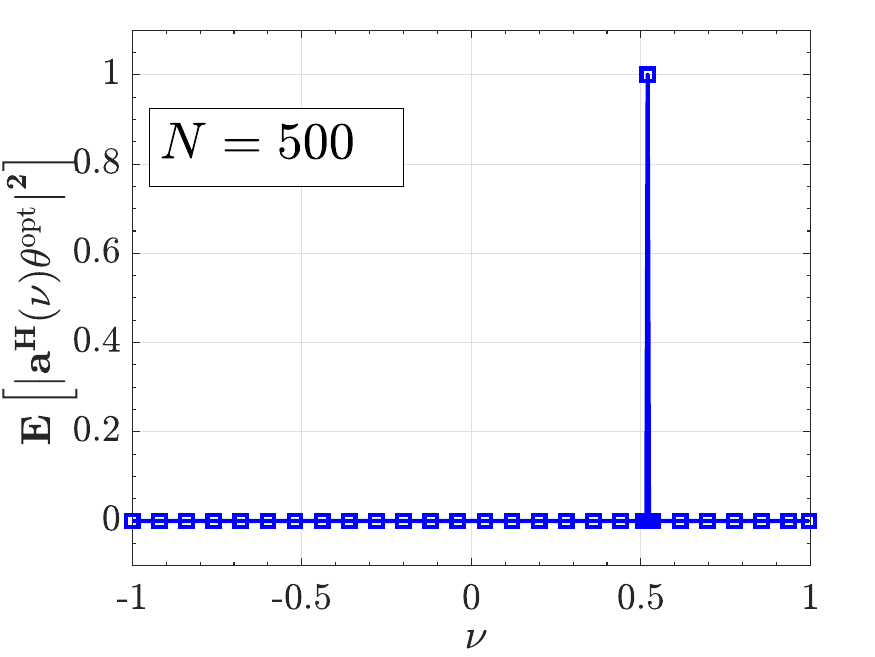}
    	\end{subfigure}
    	\caption{Correlation response of the IRS vector and array steering vectors pointing at different spatial angles, $\nu$, for (a) $N=50$ and (b) $N=500$. When $\nu=\omega^1_{X,k}$, the response attains its maximum value of $1$. }
	    	\label{fig:correlation_single_path_IB}
    	\vspace{-0.4cm}
    	\end{figure}
    	\begin{remark}[OOB performance as a function of $L$ for large $N$]\label{sec:favourable_unfavourable_region_mmwave_single_path}
    		From~\eqref{eq:mmwave_rate_jio_rr_single_path}, we can show that when $N \rightarrow \infty$ for a fixed and finite $L$, the SE is a unimodal function of $N$. Hence, there exists a $N=N_0$ which provides the maximum gain and distinguishes the \emph{favorable} region $(N\leq N_0)$ from the \emph{non-favorable} region $(N \geq N_0)$.\footnote{By non-favorable, we mean that the SE does not grow \textcolor{black}{linearly} with $N$ beyond $N=N_0$. However, the SE never drops below the achievable SE in the absence of the IRS.} Here, $N_0$ itself increases with $L$ because the probability of establishing connectivity via the IRS, i.e.,  ${\sf {Pr}}(\mathcal{E}_1)$ increases.\! 
    			\end{remark}
    	\begin{remark}[Distributed IRSs always benefit]
    	As noted in Remark~\ref{sec:favourable_unfavourable_region_mmwave_single_path}, the OOB UEs benefit more if the number of channel paths at these UEs is large. A natural way to obtain more paths is to deploy multiple distributed IRSs so that each IRS can provide an additional path to the UEs. This, in turn, increases the overall probability of benefitting the OOB UEs. We refer the reader to our follow-up work in~\cite{Yashvanth_distributedIRS_2024} for more details.\!\!
    	\end{remark}
    		    	\indent Similar to Theorem~\ref{thm:exact_ccdf}, the outage probability and CCDF of an OOB UE with/without IRS in mmWave bands is analyzed next.\!
    	\begin{theorem}\label{thm:ccdf_mmwave_OOB_single_path}
    		The probability of outage at  UE-$q$ served by an OOB operator Y in the mmWave bands, when an IRS is optimized to serve the UEs of operator X is
    		    		\begin{multline}\label{eq:ccdf_mmwave_single_path_jio}
    		\!\!\!	{F}_{|h_q|^2}(\rho) \triangleq {\sf {Pr}}(|h_q|^2<\rho) =1- e^{-\rho/\beta_{d,q}} \\ \!\!\! -   \dfrac{\bar{L}}{N}\left(\dfrac{\bar{L}e^{\frac{\bar{L}\beta_{d,q}}{N^2\beta_{r,q}} }}{N^2\beta_{r,q}} \mathcal{I}_0\left(\rho;\beta_{d,q},\dfrac{N^2}{\bar{L}}\beta_{r,q}\right)\!\!-  e^{-\rho/\beta_{d,q}}\!\right)\!, \!\!\!
    		\end{multline}
    	where $\bar{L} \triangleq \min\{L,N\}$,  and $
    	\ \mathcal{I}_0(x;c_1,c_2) \!\triangleq \!\stretchint{4.5ex}_{\!\!\!c_1}^{\infty}\!e^{\!-\left(\!\dfrac{x}{t}+\dfrac{t}{c_2}\right)}\!\text{d}t.$ 
    	Further,  define the random variables $G_1 = |h_q|^2$, and $G_0 = |h_{d,q}|^2$ (the channel gain in the \emph{presence} and \emph{absence} of the IRS, respectively.) Then, 
    	\begin{equation}\label{eq:stochastic_dominance_mmwave_single_path}
    	{\sf {Pr}}(G_1>\rho) \geq {\sf {Pr}}(G_0>\rho) \hspace{0.3cm} \forall \rho \in \mathbb{R}^+,
    	\end{equation} i.e., the channel gain of the OOB-UE in the presence of an IRS \emph{stochastically dominates} the channel gain in its absence.
	    	\end{theorem}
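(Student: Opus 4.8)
The plan is to split on the alignment event exactly as in the proof of Theorem~\ref{thm:rate_characterization_mmwave_single_path_IB}: let $\mathcal{E}_1$ be the event that one of the OOB channel's $\bar{L}$ paths matches the IRS beam direction and $\mathcal{E}_0$ its complement, with ${\sf{Pr}}(\mathcal{E}_1)=\bar{L}/N$ and ${\sf{Pr}}(\mathcal{E}_0)=1-\bar{L}/N$ already established. By the law of total probability,
\begin{equation}
{F}_{|h_q|^2}(\rho) = \Big(1-\tfrac{\bar{L}}{N}\Big)\big(1-e^{-\rho/\beta_{d,q}}\big) + \tfrac{\bar{L}}{N}\big(1-{\sf{Pr}}(|h_q|^2 > \rho \mid \mathcal{E}_1)\big),
\end{equation}
because under $\mathcal{E}_0$ the cascaded term vanishes and $h_q = h_{d,q}\sim\mathcal{CN}(0,\beta_{d,q})$, so $|h_q|^2$ is exponential with mean $\beta_{d,q}$. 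Rearranging, the entire outage claim~\eqref{eq:ccdf_mmwave_single_path_jio} reduces to showing that the conditional CCDF equals ${\sf{Pr}}(|h_q|^2 > \rho \mid \mathcal{E}_1)=\tfrac{\bar{L}}{N^2\beta_{r,q}}e^{\bar{L}\beta_{d,q}/(N^2\beta_{r,q})}\,\mathcal{I}_0(\rho;\beta_{d,q},\tfrac{N^2}{\bar{L}}\beta_{r,q})$.

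The key step is to dodge the product-of-Gaussians density. Under $\mathcal{E}_1$, from~\eqref{eq:final_OOB_channel_mmwave_single_path} only the aligned path $l^*$ survives, so $h_q = h_{d,q} + \tfrac{N}{\sqrt{\bar{L}}}\gamma^{(1)}_{l^*,Y}\gamma^{(2)}_{l^*,q}$. I would condition additionally on $b\triangleq |\gamma^{(2)}_{l^*,q}|^2$, which is exponential with mean $\beta_{\mathbf{g},q}$. Given $b$, the cascaded term is a zero-mean circularly symmetric Gaussian of variance $\tfrac{N^2}{\bar{L}}b\,\beta_{\mathbf{f}^Y}$, independent of $h_{d,q}$; hence $h_q \mid b \sim \mathcal{CN}\big(0,\,\beta_{d,q}+\tfrac{N^2}{\bar{L}}b\,\beta_{\mathbf{f}^Y}\big)$ and $|h_q|^2\mid b$ is \emph{exponential}. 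Marginalizing over $b$ gives
\begin{equation}
{\sf{Pr}}(|h_q|^2 > \rho \mid \mathcal{E}_1) = \int_0^\infty \exp\!\Big(\!-\tfrac{\rho}{\beta_{d,q}+\frac{N^2}{\bar{L}}b\,\beta_{\mathbf{f}^Y}}\Big)\tfrac{1}{\beta_{\mathbf{g},q}}e^{-b/\beta_{\mathbf{g},q}}\,\mathrm{d}b,
\end{equation}
and the substitution $t=\beta_{d,q}+\tfrac{N^2}{\bar{L}}b\,\beta_{\mathbf{f}^Y}$ together with $\beta_{r,q}=\beta_{\mathbf{f}^Y}\beta_{\mathbf{g},q}$ converts the right-hand side into precisely the $\mathcal{I}_0$ expression above; substituting back yields~\eqref{eq:ccdf_mmwave_single_path_jio}.

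For the stochastic dominance~\eqref{eq:stochastic_dominance_mmwave_single_path}, I would not even use the closed form. Conditioned on any realization of the alignment event and of $b$, the gain $G_1=|h_q|^2$ is exponential with mean $\beta_{d,q}+\tfrac{N^2}{\bar{L}}b\,\beta_{\mathbf{f}^Y}\ge\beta_{d,q}$ under $\mathcal{E}_1$ and mean exactly $\beta_{d,q}$ under $\mathcal{E}_0$. Since an exponential CCDF $e^{-\rho/\mu}$ is monotone increasing in its mean $\mu$, we have $\exp\!\big(-\rho/(\beta_{d,q}+\tfrac{N^2}{\bar{L}}b\,\beta_{\mathbf{f}^Y})\big)\ge e^{-\rho/\beta_{d,q}}$ for every $\rho>0$ and every $b\ge0$. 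Averaging over $b$ and over the two events preserves the inequality, so ${\sf{Pr}}(G_1>\rho)\ge e^{-\rho/\beta_{d,q}}={\sf{Pr}}(G_0>\rho)$; i.e., $G_1$ is a mixture of exponentials each dominating the exponential $G_0=|h_{d,q}|^2$ of mean $\beta_{d,q}$, giving $G_1>_{st}G_0$.

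The main obstacle is the conditional CCDF under $\mathcal{E}_1$: attacking $|h_{d,q}+\text{(product of two Gaussians)}|^2$ head-on mixes a Gaussian with a product of Gaussians and leads to Marcum-$Q$/Bessel-$K$ expressions. The device that unlocks the clean single integral $\mathcal{I}_0$ is the nested conditioning on one path coefficient, after which $|h_q|^2$ is conditionally exponential and only an elementary change of variables remains; checking that the prefactors and limits line up with the definition of $\mathcal{I}_0$ is then routine.
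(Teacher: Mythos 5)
Your proposal is correct, and while the outage-probability part follows the same skeleton as the paper (total probability over the alignment events $\mathcal{E}_0,\mathcal{E}_1$ with ${\sf{Pr}}(\mathcal{E}_1)=\bar{L}/N$, reduction to the single surviving path $l^*$ under $\mathcal{E}_1$ via the asymptotic orthogonality in~\eqref{eq:orthogonal_array_vectors}), you diverge in two places, both to good effect. First, for the conditional CCDF under $\mathcal{E}_1$ the paper simply cites an external result (\cite{Sudarshan_SPL_2019}, Eq.~17) for the distribution of $\left|h_{d,q}+\tilde{\gamma}^{(1)}_{l^*,Y}\tilde{\gamma}^{(2)}_{l^*,q}\right|^2$, whereas you re-derive it by conditioning on $b=|\gamma^{(2)}_{l^*,q}|^2$ so that $|h_q|^2\mid b$ is exponential with mean $\beta_{d,q}+\tfrac{N^2}{\bar{L}}b\,\beta_{\mathbf{f}^Y}$; the substitution $t=\beta_{d,q}+\tfrac{N^2}{\bar{L}}b\,\beta_{\mathbf{f}^Y}$ does reproduce the prefactor $\tfrac{\bar{L}}{N^2\beta_{r,q}}e^{\bar{L}\beta_{d,q}/(N^2\beta_{r,q})}$ and the lower limit $\beta_{d,q}$ in $\mathcal{I}_0$ exactly, so your version is self-contained where the paper's is not. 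Second, and more significantly, for the stochastic dominance~\eqref{eq:stochastic_dominance_mmwave_single_path} the paper works from the closed form: it rewrites $\mathcal{I}_0$ via the generalized incomplete Gamma function, and establishes $f(0)=0$ together with $f'(\rho)\ge 0$ only under the assumption $\bar{L}\beta_{d,q}/(N^2\beta_{r,q})\le 1$ and with the monotonicity step supported by a plot (Fig.~\ref{fig:single_path_ccdf_analytical_check}) rather than a complete analytical argument. Your mixture-of-exponentials argument --- every conditional law of $G_1$ given the alignment event and $b$ is exponential with mean at least $\beta_{d,q}$, so every conditional CCDF pointwise dominates $e^{-\rho/\beta_{d,q}}$ and the inequality survives averaging --- is more elementary, needs no parameter-regime restriction, and is fully rigorous; it strictly improves on the paper's treatment of that half of the theorem.
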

    	\begin{proof}
    		See Appendix~\ref{app:proof_ccdf-single_path_mmwave}.	
    	\end{proof}
    The above theorem characterizes the instantaneous behavior of the OOB UEs' channels in the LoS scenarios and shows that the IRS never degrades the performance of OOB UEs.
    	\subsection{IRS Optimized for (L+)NLoS Scenarios}\label{sec:multiple_paths_mmwave_RR}
    	In the previous section, we studied the OOB performance when the IRS is only optimized to the dominant path in the cascaded channels of the scheduled in-band UEs and concluded that the OOB performance does not degrade due to the IRS. However, when the IRS is programmed to align jointly along all the paths of the in-band UE's channel, it remains unclear whether an IRS can still benefit an OOB operator due to the unit-modulus preserving property of IRS coefficients. In fact, the directional response of the IRS is not easy to characterize since the IRS does not align completely along a single-channel direction. To that end, we first determine the directional response of the IRS when it is optimized to a channel comprising multiple paths. We can write the  channel 
    	to the in-band UE-$k$ \textcolor{black}{with $L$ resolvable spatial paths} as in the following equation.\footnote{\textcolor{black}{For the sake of exposition, we consider that all cascaded paths through the IRS have the same average energy, similar to~\cite{Bai_Lin_Access_2017}. However, our results can be directly extended to scenarios where the paths have unequal energies.}}
    	\begin{equation}\label{eq:effective_channel_mmwave_multiple_path_airtel}
    		h_k =  h_{d,k} + \frac{N}{\sqrt{L}}\sum\nolimits _{l=1}^{L} \gamma^{(1)}_{l,X}\gamma^{(2)}_{l,k}\mathbf{\dot{a}}_N^H(\omega^l_{X,k})\boldsymbol{\theta}.
    	\end{equation}
    	Then, we can show that $n$th element of the optimal IRS configuration $\boldsymbol{\theta}^{\mathrm{opt}}$ is given by (using CS inequality)
    	\begin{equation}
    	\theta^{\mathrm{opt}}_n = e^{j\angle h_{d,k}}\dfrac{\sum_{l=1}^{L} \gamma^{(1)*}_{l,X}\gamma^{(2)*}_{l,k}e^{-j(n-1)\pi\omega^l_{X,k}}}{\left|\sum_{l=1}^{L} \gamma^{(1)*}_{l,X}\gamma^{(2)*}_{l,k}e^{-j(n-1)\pi\omega^l_{X,k}}\right|}.
    	\end{equation}
    	Hence, the optimal IRS vector $\boldsymbol{\theta}^{\mathrm{opt}} $ is 
	    	\begin{equation}\label{eq:optimal_IRS_mmwave_multi_path}
    	\dfrac{h_{d,k}}{|h_{d,k}|} \left(\sum_{l=1}^{L} \gamma^{(1)*}_{l,X}\gamma^{(2)*}_{l,k}\mathbf{\dot{a}}_N	(\omega^l_{X,k})\right) \odot\dfrac{1}{\left|\sum\limits_{l=1}^{L} \gamma^{(1)}_{l,X}\gamma^{(2)}_{l,k}\mathbf{\dot{a}}_N(\omega^l_{X,k})\right|},
    	\end{equation}
    	where $|\mathbf{x}|$ is the vector containing the magnitudes of the entries of $\mathbf{x}$ and $1/|\mathbf{x}|$ is an entry-wise inverse.\!\! \\
    	\indent \emph{Directional Response of the IRS:} 
    	To determine the directional response of an IRS optimized for (L+)NLoS scenarios, similar to Sec.~\ref{sec:single_path_mmwave_RR}, we evaluate the correlation function: $\rho_{\nu,\theta} \triangleq N\mathbb{E}\left[\left|\mathbf{\dot{a}}^H(\nu) \boldsymbol{\theta}^{\mathrm{opt}}\right|\right]$ at various angles of $\nu$. We then focus on the values of $\nu$ that match the channel angles of UE-$k$ and ascertain the distribution of the IRS-reflected energy along these directions. We have the following lemma.\!\!
    	\begin{lemma}\label{lemma_correlation_function_IRS_mmwave_multiple_path} \label{lemma:correlation_function}
    	The optimal IRS configuration as in~\eqref{eq:optimal_IRS_mmwave_multi_path} has the following spatial amplitude response $\rho_{\nu,\theta}$ as defined above:
    	\begin{equation}
    	\hspace{-0.12cm}
    	\rho_{\nu,\theta} =
    	\begin{cases}
    	 \Omega\left(\dfrac{N}{\sqrt{L}}\right) + o(N), \hfill \mathrm{ if } \ \nu \in \left\{\omega^1_{X,k},\!\! \ldots,\omega^L_{X,k} \right\},\hspace{-0.2cm} \\
    	 o(N), \hfill \mathrm{ if } \ \nu \in \mathbf{\Phi} \setminus \left\{\omega^1_{X,k}, \ldots,\omega^L_{X,k} \right\}.\hspace{-0.2cm}
    	\end{cases}\hspace{-0.2cm}
    	\end{equation}
    	\end{lemma}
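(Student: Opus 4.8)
The plan is to reduce the spatial response to a single scalar transform of the phase-only sequence that defines $\boldsymbol{\theta}^{\mathrm{opt}}$, and then to localize its energy using the anglebook orthogonality in \eqref{eq:orthogonal_array_vectors}. Writing $a_l \triangleq \gamma^{(1)*}_{l,X}\gamma^{(2)*}_{l,k}$ and $c_n \triangleq \sum_{l=1}^{L} a_l e^{-j(n-1)\pi\omega^l_{X,k}}$, the configuration in \eqref{eq:optimal_IRS_mmwave_multi_path} has $n$th entry $\theta^{\mathrm{opt}}_n = e^{j\angle h_{d,k}}\,c_n/|c_n|$, so that $N\mathbf{\dot a}_N^H(\nu)\boldsymbol{\theta}^{\mathrm{opt}} = S_\nu \triangleq \sum_{n=1}^N e^{j(n-1)\pi\nu}\,c_n/|c_n|$ (the global phase is immaterial to the magnitude) and hence $\rho_{\nu,\theta}=\mathbb{E}[|S_\nu|]$. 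First I would note that, since each $a_l$ is circularly symmetric, $c_n/|c_n|$ has uniform phase and $\mathbb{E}[c_n/|c_n|]=0$; thus $\mathbb{E}[S_\nu]=0$ for every $\nu$ and the first moment carries no information. The response must therefore be read off the second moment $\mathbb{E}[|S_\nu|^2]$.

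Next I would expand $\mathbb{E}[|S_\nu|^2] = \sum_{\tau=-(N-1)}^{N-1}(N-|\tau|)\,R(\tau)\,e^{j\pi\nu\tau}$, where $R(\tau)\triangleq \mathbb{E}[\,(c_n/|c_n|)(c_{n-\tau}^*/|c_{n-\tau}|)\,]$ depends only on the lag $\tau$ because circular symmetry of the $a_l$ makes the law of $(c_n,c_{n-\tau})$ shift-invariant in $n$. The crux is evaluating $R(\tau)$. For a channel with many resolvable paths I would invoke a central-limit approximation of $c_n$ over the $L$ paths (in the spirit of the Gaussian approximations used earlier), under which the phase correlation linearizes to leading order as $R(\tau)\approx \kappa\,\varrho_\tau$ for a positive constant $\kappa$ (here $\kappa=\pi/4$), with amplitude correlation $\varrho_\tau = \mathbb{E}[c_n c_{n-\tau}^*]/\mathbb{E}[|c_n|^2] = \tfrac{1}{L}\sum_{l=1}^L e^{-j\pi\tau\omega^l_{X,k}}$. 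Substituting and interchanging sums then gives $\mathbb{E}[|S_\nu|^2] = \tfrac{\kappa}{L}\sum_{l=1}^L \mathcal{F}_N(\nu-\omega^l_{X,k}) + O(N)$, where $\mathcal{F}_N(\delta)=\sum_{\tau}(N-|\tau|)e^{j\pi\tau\delta}=\sin^2(N\pi\delta/2)/\sin^2(\pi\delta/2)$ is the Fej\'er kernel.

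I would then read off both cases from the arithmetic of the anglebook $\mathbf{\Phi}$ in \eqref{eq:resolvable_paths}: any two distinct grid angles differ by a nonzero multiple of $2/N$, at which $\mathcal{F}_N$ vanishes, while $\mathcal{F}_N(0)=N^2$. Hence for $\nu\in\mathbf{\Phi}\setminus\{\omega^l_{X,k}\}$ every Fej\'er term is zero, so $\mathbb{E}[|S_\nu|^2]=O(N)$ and, by Jensen, $\rho_{\nu,\theta}\le\sqrt{\mathbb{E}[|S_\nu|^2]}=o(N)$. For the matched lower bound I would avoid relying on this upper second moment and instead condition on the path gain $a_{l_0}$: factoring $e^{j(n-1)\pi\omega^{l_0}_{X,k}}c_n = a_{l_0}+w_n$, with $w_n$ the (zero-mean, per-$n$ identically distributed) contribution of the remaining paths, gives $\mathbb{E}[S_{\omega^{l_0}_{X,k}}\mid a_{l_0}] = N\,\mathbb{E}[(a_{l_0}+w)/|a_{l_0}+w|\mid a_{l_0}]$, whose modulus is $N$ times the mean phasor of a Rician-type variable with per-path SNR of order $1/L$, i.e.\ of order $N|a_{l_0}|/\sqrt{L\beta_{r,k}}$; a conditional Jensen step and averaging over $a_{l_0}$ then yield $\rho_{\omega^{l_0}_{X,k},\theta}=\Omega(N/\sqrt{L})$. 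An independent sanity check is Parseval over the orthonormal anglebook basis, $\sum_{\nu\in\mathbf{\Phi}}|S_\nu|^2=N\|\boldsymbol{\theta}^{\mathrm{opt}}\|^2=N^2$: the $L$ matched directions already absorb energy of order $N^2$, leaving the other grid points with vanishing per-direction energy.

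I expect the main obstacle to be the unit-modulus nonlinearity embodied by the $1/|c_n|$ normalization, which precludes an exact closed form for $R(\tau)$ and is the reason the two regimes are stated with $\Omega(\cdot)$ and $o(\cdot)$ rather than exact constants. Concretely, the terms dropped in $R(\tau)\approx\kappa\,\varrho_\tau$ are the odd-order intermodulation products at frequencies $\omega^{l_1}_{X,k}+\omega^{l_2}_{X,k}-\omega^{l_3}_{X,k}$, which do land on $\mathbf{\Phi}$; I would make their effect quantitative by expanding the phase-correlation map and bounding $\sum_{\tau}|R(\tau)-\kappa\,\varrho_\tau|$ through $|\varrho_\tau|=O(L^{-1/2})$, showing that each such product contributes $o(N^2)$ to any single unmatched direction. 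This is precisely the step where the many-path assumption is genuinely needed, since for very small $L$ a third-order product can carry a constant fraction of the energy and the clean $o(N)$ separation degrades.
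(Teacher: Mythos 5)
Your proposal reaches the right conclusions but by a genuinely different route from the paper. The paper's proof linearizes the unit-modulus nonlinearity in a single stroke: it applies Cauchy--Schwarz to get $|c_n|\le\sqrt{L}\,\|\mathbf{h}_c\|_2$, replaces $1/|c_n|$ by the common factor $1/(\sqrt{L}\|\mathbf{h}_c\|_2)$ to obtain a \emph{lower} bound on $\rho_{\nu,\theta}$, and then evaluates the resulting geometric sums as Dirichlet/Fej\'{e}r-kernel terms $F_N(\nu-\omega^l_{X,k})$, which peak at $N$ on the matched angles and vanish on the rest of the grid $\mathbf{\Phi}$; the matched case then reads off as $\rho_{\nu,\theta}\ge \frac{N}{\sqrt{L}}\mathbb{E}\left[|h_{l^*}|/\|\mathbf{h}_c\|_2\right]=\Omega(N/\sqrt{L})$. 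You instead work with moments of $S_\nu$: a lag-autocorrelation/Fej\'{e}r decomposition of $\mathbb{E}[|S_\nu|^2]$ for the unmatched directions, and a conditioning-on-$a_{l_0}$ mean-phasor (Rician, per-path SNR $\sim 1/L$) argument for the matched lower bound. Your matched-direction argument is sound and recovers the same $N/\sqrt{L}$ scaling. What your route buys is significant: for unmatched $\nu$ you produce an actual \emph{upper} bound ($\rho_{\nu,\theta}\le\sqrt{\mathbb{E}[|S_\nu|^2]}=O(\sqrt{N})$ under your linearization), whereas the paper's chain of inequalities only ever lower-bounds $\rho_{\nu,\theta}$, so its stated conclusion that the response vanishes off the channel angles does not strictly follow from its own displayed steps.

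The one substantive caveat is the one you yourself flag. The approximation $R(\tau)\approx\kappa\,\varrho_\tau$ discards odd-order intermodulation products at angles of the form $\omega^{l_1}_{X,k}+\omega^{l_2}_{X,k}-\omega^{l_3}_{X,k}$, and for fixed small $L$ (the paper numerically validates $L=2,3$) a hard-limited $L$-tone sequence carries a $\Theta(1)$ fraction of its spectral mass at such angles, so $\mathbb{E}[|S_\nu|^2]$ there is $\Theta(N^2)$ rather than $O(N)$ and the $o(N)$ claim is not secured at those particular grid points. This is a limitation of the lemma's unmatched-case statement as much as of either proof --- the paper's argument simply never confronts it because it only bounds from below --- and your explicit identification of the intermodulation directions, together with the Parseval sanity check over the orthonormal anglebook, is sharper than what appears in the paper's appendix. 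If you want the unmatched case to hold uniformly over $\mathbf{\Phi}\setminus\{\omega^1_{X,k},\ldots,\omega^L_{X,k}\}$ for small $L$, you would need either to exclude the finitely many intermodulation angles or to let $L$ grow.
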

    	\begin{proof}
    		See Appendix~\ref{app_proof_correlation_function}.
    	\end{proof}
    	The above Lemma shows that when the IRS is optimized to align jointly along $L$ directions, then it has a spatial energy response of $1/L$ in each of these $L$ directions and negligible response in other directions. We perform a similar experiment as in the LoS scenario to numerically verify Lemma~\ref{lemma_correlation_function_IRS_mmwave_multiple_path}.  In Fig.~\ref{fig:correlation_multiple_path_IB}, we plot the normalized correlation response $\rho_{\nu,\theta}/N$ as function of $\nu \in \boldsymbol{\Phi}$, for $L=2$ and $L=3$. We consider that the IRS is optimized to an in-band UE whose equal-gain paths have channel angles drawn from $\mathcal{L}_2 = \{-0.23,0.54\}$, and $\mathcal{L}_3 = \{-0.23, 0.06,0.54\}$, for $L=2$, and $L=3$, respectively. We see that whenever $\nu \in \mathcal{L}_2$ for $L=2$ ($\mathcal{L}_3$ for $L=3$), the response is maximum, and is nearly $1/\sqrt{L}$, and is $0$, otherwise.
    	\begin{figure}
    	\vspace{-0.3cm}
    	\hspace{-0.3cm}
    		\begin{subfigure}{0.49\linewidth}
    			\includegraphics[width=1.125\linewidth]{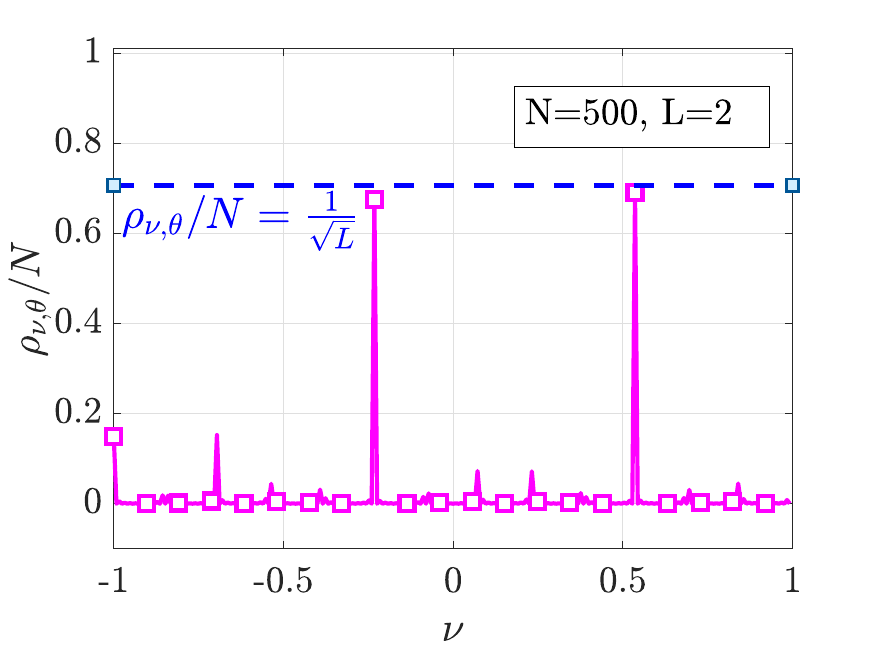}
    			\caption{$L=2.$}
    		\end{subfigure}
    		\begin{subfigure}{0.49\linewidth}
    			\includegraphics[width=1.125\linewidth]{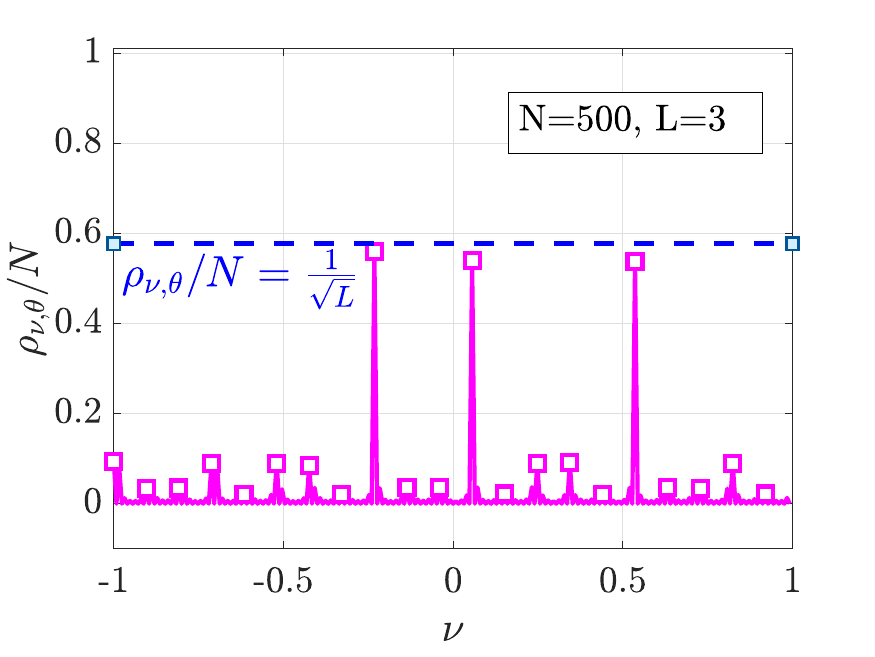}
    			\caption{$L=3.$}
    		\end{subfigure}
    		\caption{Normalized Correlation response of the IRS  (in (L+)NLoS scenarios) and array  vectors at different $\nu$ for $N=500$ with (a) $L=2$, and (b) $L=3$. When $\nu \in \mathcal{L}_2\text{ (or) }\mathcal{L}_3$, the response, $\rho_{\nu,\theta}/N$ peaks and $\approx 1/\sqrt{L}$.}    		\label{fig:correlation_multiple_path_IB}
    		\vspace{-0.4cm}
    	\end{figure}
    
    	We now provide the ergodic sum-SEs of the system when the IRS is optimized for (L+)NLoS scenarios.\!\!
    	\begin{theorem}\label{thm:ergodic_SE_mmwave_multiple_paths}
    	Under the Saleh-Valenzuela (L+)NLoS model in the mmWave channels, with RR scheduling, and when the IRS is optimized to serve the UEs of operator X, the ergodic sum-SEs of  operators X and Y scale~as 
    	    	\begin{equation}\label{eq:ergodic_sum_SE_mmwave_airtel_multiple_paths}
    	\!\!\bar{S}_L^{(X)} \!\approx\! \frac{1}{K}\!\sum_{k=1}^K \log_2\!\left(\!1\!+\!\left(N^2\beta_{r,k}\!+\!N\!\sqrt{\pi\beta_{d,k}\beta_{r,k}}\!+\beta_{d,k}\right)\!\dfrac{P}{\sigma^2}\right)\!,\!\!
    	\end{equation} and as~\eqref{eq:ergodic_sum_SE_mmwave_jio_multiple_paths} on the top of next page, respectively, where $i_0 \triangleq \max\{0,2L-N\}$.
        	\end{theorem}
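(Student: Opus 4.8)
The plan is to mirror the two-operator structure used in the proof of Theorem~\ref{thm:rate_characterization_mmwave_single_path_IB}: apply Jensen's inequality to move the expectation inside $\log_2(\cdot)$ so that, for each scheduled UE, it suffices to evaluate the mean channel gain $\mathbb{E}[|h|^2]$, and then average over UEs with the $1/K$ and $1/Q$ prefactors.

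For operator X, I would substitute the optimal configuration~\eqref{eq:optimal_IRS_mmwave_multi_path} into~\eqref{eq:effective_channel_mmwave_multiple_path_airtel}. Since $\boldsymbol{\theta}^{\mathrm{opt}}$ co-phases every element with the direct path, the gain collapses to $|h_k| = |h_{d,k}| + \tfrac{1}{\sqrt{L}}\sum_{n=1}^N |u_n|$, where $u_n \triangleq \sum_{l=1}^L \gamma^{(1)}_{l,X}\gamma^{(2)}_{l,k}e^{j(n-1)\pi\omega^l_{X,k}}$ is the per-element cascaded gain. The key simplification is that, by Lemma~\ref{lemma_correlation_function_IRS_mmwave_multiple_path} (the IRS splits its energy as $1/L$ per aligned direction), the per-element magnitude concentrates on its root-mean-square value $|u_n|\approx\sqrt{\mathbb{E}[|u_n|^2]}=\sqrt{L\beta_{r,k}}$, so the array gain is $\approx N\sqrt{\beta_{r,k}}$. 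Expanding $\mathbb{E}[(|h_{d,k}|+N\sqrt{\beta_{r,k}})^2]$ with $\mathbb{E}[|h_{d,k}|^2]=\beta_{d,k}$ and $\mathbb{E}[|h_{d,k}|]=\tfrac{\sqrt{\pi}}{2}\sqrt{\beta_{d,k}}$ then yields~\eqref{eq:ergodic_sum_SE_mmwave_airtel_multiple_paths}.

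For operator Y, which is the substantive part, I would start from $h_q = h_{d,q} + \tfrac{N}{\sqrt{L}}\sum_{l=1}^{L}\gamma^{(1)}_{l,Y}\gamma^{(2)}_{l,q}\,\mathbf{\dot{a}}_N^H(\omega^l_{Y,q})\boldsymbol{\theta}^{\mathrm{opt}}$ and invoke Lemma~\ref{lemma_correlation_function_IRS_mmwave_multiple_path} together with the large-$N$ orthogonality~\eqref{eq:orthogonal_array_vectors}: the term $\mathbf{\dot{a}}_N^H(\omega^l_{Y,q})\boldsymbol{\theta}^{\mathrm{opt}}$ has magnitude $\approx 1/\sqrt{L}$ when $\omega^l_{Y,q}$ coincides with one of the in-band directions $\{\omega^{l'}_{X,k}\}$ and is $o(1)$ otherwise, so each matched OOB path contributes an effective gain $\tfrac{N}{L}\gamma^{(1)}_{l,Y}\gamma^{(2)}_{l,q}$. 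Letting $M$ denote the number of the $\bar{L}=\min\{L,N\}$ OOB directions landing on an active in-band direction, the circular symmetry and mutual independence of the fading coefficients (and of $h_{d,q}$) make the matched contributions add incoherently and annihilate the direct--IRS cross term, giving $\mathbb{E}[|h_q|^2\mid M]=\tfrac{MN^2}{\bar{L}^2}\beta_{r,q}+\beta_{d,q}$. Applying Jensen conditionally and the law of total expectation over $M$ then produces~\eqref{eq:ergodic_sum_SE_mmwave_jio_multiple_paths}.

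The crux, and the step I expect to be the main obstacle, is characterizing the distribution of $M$. Under the resolvable-anglebook model~\eqref{eq:codeboook_ditbn} I would treat the in-band and OOB direction sets as two uniformly random $\bar{L}$-subsets $S_X,S_Y$ of the $N$-point grid $\mathbf{\Phi}$; then $M=|S_X\cap S_Y|$ is hypergeometric, $\mathsf{Pr}(M=m)=\binom{\bar{L}}{m}\binom{N-\bar{L}}{\bar{L}-m}/\binom{N}{\bar{L}}$, whose support begins at $i_0=\max\{0,2\bar{L}-N\}$ by the pigeonhole principle (when $2\bar{L}>N$, at least $2\bar{L}-N$ directions must collide), matching the stated $i_0$; the substitution $\bar{L}=\min\{L,N\}$ also absorbs the clustering of unresolvable paths when $L\geq N$, exactly as in Theorem~\ref{thm:rate_characterization_mmwave_single_path_IB}. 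Verifying that this combinatorial model is correct, and that the $m=0$ term reduces to the IRS-free SE $\log_2(1+\beta_{d,q}P/\sigma^2)$ so that the OOB SE never drops below its no-IRS value and indeed exceeds the single-in-band-path case (because $\mathsf{Pr}(M\geq1)$ grows with the number of in-band directions), is where the care lies; the per-path gain and conditional moment computations are routine given Theorem~\ref{thm:rate_characterization_mmwave_single_path_IB}.
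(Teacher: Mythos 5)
Your proposal follows essentially the same route as the paper: Jensen's inequality per scheduled UE, Lemma~\ref{lemma_correlation_function_IRS_mmwave_multiple_path} to assign a $1/\sqrt{L}$ amplitude response per matched direction, a hypergeometric law for the number of matched paths with support starting at $i_0=\max\{0,2L-N\}$, and the conditional mean gain $\beta_{d,q}+i\tfrac{N^2}{L^2}\beta_{r,q}$, with $\bar{L}=\min\{L,N\}$ absorbing the $L\geq N$ clustering. The only cosmetic difference is on the operator-X side, where you replace the per-element cascaded magnitude by its RMS value while the paper uses circular symmetry plus a Cauchy--Schwarz/Jensen bound on the cross term; both yield the identical expression~\eqref{eq:ergodic_sum_SE_mmwave_airtel_multiple_paths}.
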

    	\begin{figure*}[t]
    	\vspace{-0.5cm}
    	\begin{align}\label{eq:ergodic_sum_SE_mmwave_jio_multiple_paths}
    		\!\!\!\!\bar{S}_L^{(Y)} \approx \begin{cases}
    			\dfrac{1}{Q}\sum\limits_{q=1}^{Q}\sum\limits_{i=i_0}^L\dfrac{\mybinom[0.9]{L}{i}\mybinom[0.9]{N-L}{L-i}}{\mybinom[0.9]{N}{L}} \times  \log_2\left(1 + \left(\beta_{d,q} + i\dfrac{N^2}{L^2}\beta_{r,q}\right)\dfrac{P}{\sigma^2}\right), & \mathrm{ if } \ \  L < N, \\
    			\dfrac{1}{Q}\sum\limits_{q=1}^{Q}\log_2\left(1+\left(\beta_{d,q} + N\beta_{r,q}\right)\dfrac{P}{\sigma^2}\right),& \mathrm{ if } \ \  L \geq N.
    		\end{cases}
    	\end{align}
    	\rule{\textwidth}{0.3mm}
    	\vspace{-0.6cm}
    \end{figure*}
    	\vspace{-0.3cm}
    	\begin{proof}
    	We derive the SEs of both operators separately below.
    	\subsubsection{Ergodic sum-SE of operator X}
    	As  before, we first compute the ergodic SE of UE-$k$, $\langle S^{(X)}_{k,L} \rangle$, and then evaluate the ergodic sum-SE of operator X. Using~\eqref{eq:effective_channel_mmwave_multiple_path_airtel}~and~\eqref{eq:optimal_IRS_mmwave_multi_path}, the ergodic SE of UE-$k$ is bounded by
        		$\langle S^{(X)}_{k,L} \rangle \leq \!\log_2\!\left(\!1\!+\!\mathbb{E}\!\left[\left||h_{d,k}|+  \frac{1}{\sqrt{L}}\!\sum\limits_{n=1}^N\!\left|\sum\limits_{l=1}^L\!\!\gamma^{(1)}_{l,X}\gamma^{(2)}_{l,K}e^{j\pi(n-1)\omega^l_{X,k}}\right|\right|^2\right]\!\dfrac{P}{\sigma^2}\right)\!.\!\!$
    	Using the fact that the distribution of $\gamma^{(1)}_{l,X}\gamma^{(2)}_{l,K}$ is circularly symmetric, we can rewrite the inner term as
    	$ \mathbb{E}\left[\left||h_{d,k}|+\frac{N}{\sqrt{L}}\left|\sum\limits_{l=1}^L\!\gamma^{(1)}_{l,X}\gamma^{(2)}_{l,K}\right|\right|^2\right]= \mathbb{E}\left[|h_{d,k}|^2+\dfrac{N^2}{L}\left|\sum\limits_{l=1}^L\gamma^{(1)}_{l,X}\gamma^{(2)}_{l,K}\right|^2  +  \dfrac{2N}{\sqrt{L}}|h_{d,k}|\left|\sum\limits_{l=1}^L \gamma^{(1)}_{l,X}\gamma^{(2)}_{l,K}\right|\!\right].$
    	\underline{\emph{Term I}}: It is straightforward to see that $\mathbb{E}[|h_{d,k}|^2] = \beta_{d,k}$.\\
    	\underline{\emph{Term II}}: We can write
    	$	\!\!\mathbb{E}\!\left[	\left|\sum\nolimits_{l=1}^L\!\gamma^{(1)}_{l,X}\gamma^{(2)}_{l,K}\right|^2\right] = \mathbb{E}\left[\mathop{\sum\limits_{l=1}^L}\left|\gamma^{(1)}_{l,X}\gamma^{(2)}_{l,K}\right|^2 \!\right]+  \mathbb{E}\left[\mathop{\sum\limits_{l,\ell=1; { l\neq \ell}}^L} \gamma^{(1)}_{l,X}\gamma^{(2)}_{l,K}\gamma^{(1)*}_{\ell,X}\gamma^{(2)*}_{\ell,K}\right] \stackrel{(a)}{=} \!L\beta_{r,k},\!$
    	where $(a)$ is due to the zero-mean and independence of $\left\{\gamma^{(1)}_{l,X},\gamma^{(2)}_{l,K}\right\}_l$ across the spatial paths.\\
    	\underline{\emph{Term III}}: We have the following sequence of relations
    	$	\mathbb{E}\left\{|h_{d,k}|\left|\sum\limits_{l=1}^L\gamma^{(1)}_{l,X}\gamma^{(2)}_{l,K}\right|\right\} \stackrel{(a)}{\leq}  \mathbb{E}|h_{d,k}| \sqrt{\mathbb{E}\left[\left|\sum\limits_{l=1}^L\gamma^{(1)}_{l,X}\gamma^{(2)}_{l,K}\right|^2\right]} \nonumber  = \dfrac{1}{2}\sqrt{\pi \beta_{d,k}}\cdot\sqrt{L\beta_{r,k}},$
    	where $(a)$ follows by using the independence of $h_{d,k}$ and $\sum_{l=1}^L\gamma^{(1)}_{l,X}\gamma^{(2)}_{l,K}$, followed by the Jensen's inequality. Collecting all the terms  yields~\eqref{eq:ergodic_sum_SE_mmwave_airtel_multiple_paths}.\!\!\!
    	\subsubsection{Ergodic sum-SE of operator Y} Consider the case $L < N$.
    	As above, the channel seen at the UE-$q$ is 
    	\begin{equation}\label{eq:effective_channel_mmwave_multiple_path_jio}
    		h_q =  h_{d,q} + \frac{N}{\sqrt{L}}\sum\nolimits_{l=1}^{L} \gamma^{(1)}_{l,Y}\gamma^{(2)}_{l,q}\mathbf{\dot{a}}_N^H(\omega^l_{Y,q})\boldsymbol{\theta},
    	\end{equation}
    	where $\boldsymbol{\theta}$ is optimized by operator-X and is given by~\eqref{eq:optimal_IRS_mmwave_multi_path}. The IRS vector aligns along $L$ directions of  $\boldsymbol{\Phi}$, while the OOB UE's channel is oriented along $L$  directions independent of the IRS. We let the random variable 
    	$X$ denote the instantaneous SE at UE-$q$, and $Y$ denotes the number of matching (spatial) paths between the UE-$q$'s channel and the IRS response. Clearly, the possible support of $Y$ is the set of integers from $0$ to $L$. We now compute $\langle S^{(Y)}_{q,L} \rangle = \mathbb{E}[X] $ as 
    	\begin{equation}\label{eq:ergodic_SE_mmwave_multiple_path_template}
    		\!\mathbb{E}[X] = \mathbb{E}_Y[\mathbb{E}_X[X|Y]] = \sum\nolimits_i \mathbb{E}_X[X|Y=i]{\sf {Pr}}(Y=i).\!
    	\end{equation} 
	    	When $L < N$, we can show that the probability mass function (pmf) of $Y$ is given by
    	\begin{equation}\label{eq:prob_oob_NLOS}
    		{\sf{Pr}}(Y=i) = \dfrac{\binom{L}{i}\binom{N-L}{L-i}}{\binom{N}{L}}, \hspace{0.2cm} i = i_0, i_0+1,\ldots,L,
    	\end{equation} and $0$ otherwise, where \textcolor{black}{$\binom{x}{y}$ is the usual binomial coefficient, and }$i_0 = \max\{0,2L-N\}$. This is because, if $N-L\leq L$, not more than $N-L$ spatial paths of the OOB channel can be misaligned with the IRS, and hence, the support of the above pmf begins at $2L-N$.  Next, we compute the term $\mathbb{E}_X[X|Y=i]$ which is the average SE seen by UE-$q$ when exactly $i$ of the spatial paths are common between the IRS response and scheduled  UE-$q$'s channel angles. Then, $\mathbb{E}_X[X|Y=i]$ 
    	\begin{align}
    		\!\! &\stackrel{(a)}{\leq} \log_2\left(\! 1\! +\mathbb{E}\left[\left|h_{d,q}\! +\frac{N}{\sqrt{L}}\sum\nolimits_{{l'} \in \mathcal{I}_i}\! \gamma^{{l'}}_{Y,q}\mathbf{\dot{a}}_N^H(\omega^{l'}_{Y,q})\boldsymbol{\theta}\right|^{2}\right]\! \frac{P}{\sigma^2}\right) \! \! \nonumber \\ &\stackrel{(b)}{=} \log_2\left(1+\!\mathbb{E}\left[\left|h_{d,q}+\frac{N}{L}\sum\nolimits_{{l'} \in
    			\mathcal{I}_i}\gamma^{{l'}}_{Y,q}\right|^{2}\right] \frac{P}{\sigma^2}\!\right),\\[-0.3in] \nonumber
    		\end{align} 
    where $(a)$ is due to Jensen's inequality, $\gamma^{{l'}}_{Y,q} \triangleq \gamma^{(1)}_{{l'},Y}\gamma^{(2)}_{{l'},q}$, and $\mathcal{I}_i$ denotes the index set of the common path indices between the IRS and UE-$q$'s channel such that $|\mathcal{I}_i|=i$, and $(b)$ is due to Lemma~\ref{lemma:correlation_function}. By expanding the square term and using the statistics of the random variables, we can show that the above expectation becomes $\beta_{d,q}+i\frac{N^2}{L^2}\beta_{r,q}$. Plugging this into~\eqref{eq:ergodic_SE_mmwave_multiple_path_template} and evaluating the ergodic sum-SE yields~\eqref{eq:ergodic_sum_SE_mmwave_jio_multiple_paths} when $L<N$. Now, for $L \geq N$ (and hence $\bar{L}\triangleq \min\{L, N\} = N$), since the IRS can orient to at most $N$ beams, with probability $1$, every channel path of the UE is aligned to one of directions to which the IRS is steered. Thus, we have $\langle S^{(Y)}_{q,L} \rangle$
    	\begin{align}
    		\!\! &\leq \log_2\left(1+\mathbb{E}\left[\left|h_{d,q}+{\sqrt{N}}\sum\nolimits_{{l} =1}^{N}\gamma^{{l}}_{Y,q} \mathbf{\dot{a}}_N^H(\omega^{l}_{Y,q})\boldsymbol{\theta}\right|^2\right] \dfrac{P}{\sigma^2}\right) \nonumber \\ &= \log_2\left(1+\mathbb{E}\left[\left|h_{d,q}+\sum\nolimits_{{l}=1}^{N}\gamma^{{l}}_{Y,q}\right|^2\right] \vspace{-0.2cm}\dfrac{P}{\sigma^2}\right)\\&= 
    		\log_2\left(1+\left(\beta_{d,q} + N\beta_{r,q}\right) \frac{P}{\sigma^2}\right). \\[-0.25in] \nonumber
    	\end{align} The proof of~\eqref{eq:ergodic_sum_SE_mmwave_jio_multiple_paths} for $L \geq N$ now easily follows. 
    	\end{proof}
    \noindent Theorem~\ref{thm:ergodic_SE_mmwave_multiple_paths} shows that the IRS provides an $\mathcal{O}(\log_2(N))$ scaling of the SE at OOB UEs when $L \geq N$. In fact, in Sec.~\ref{sec:numerical_sections}, we numerically show that this bound can be improved to $L \geq \sqrt{N}$ while preserving the $\mathcal{O}(\log_2(N))$ growth of SE. In that case, for $L< \sqrt{N}$, the OOB-SE is shown to be log-sub-linear in $N$, which is still better than the SE in the absence of IRS.\footnote{\textcolor{black}{We reiterate that we need $L \geq \sqrt{N}$ only for $\mathcal{O}(\log_2(N))$ scaling of SE. More generally, the OOB-SE in mmWave bands scales as $\mathcal{O}(\log_2(N^\delta)), \delta\!>\!0$. Thus, an IRS strictly benefits the OOB system under all conditions.}} Subsequently, we show that the OOB sum-SE in (L+)NLoS scenarios is at least as good as in LoS scenarios and present further insights on the OOB performance in the presence of an uncontrolled IRS.\!\!\!
    
\section{Enhancement of OOB Performance Using Opportunistic Scheduling}\label{sec:PF_MR_schedulers}
    In the previous sections, we saw that an IRS that is optimally configured to serve the UEs of operator X also benefits operator Y, but to a lesser extent than the benefit to the UEs of operator X. In this section, we show that the benefit of the uncontrolled IRS to the UEs of operatory Y can be further enhanced using opportunistic user selection. Specifically, since the IRS is \emph{randomly} configured from an OOB UE's view, if there are sufficiently many OOB UEs, at least one of the UEs will experience an SNR that is close to the SNR when the IRS is optimized for that UE (i.e., when the IRS is in \emph{beamforming configuration} for that UE.) Then, opportunistically scheduling the UE for which the IRS is in beamforming configuration in every slot extracts multi-user diversity in the system and enhances the OOB performance better than RR-based UE scheduling~\cite{Viswanath_TIT_2002}. We now analyze the OOB performance for two such opportunistic schedulers: the proportional-fair (PF) and max-rate (MR) schedulers.\!\!
    \subsection{Multi-user Diversity for Operator Y using PF Scheduler}
    The PF scheduler serves UE-$q^*$ at time $t$, where~\cite{Viswanath_TIT_2002}
    \begin{equation}
    \!\!\!\!q^*(t) = \argmax_{q \in \{1,2,\ldots,Q\}} \frac{\log_2\left(1+|h_q(t)|^2P/\sigma^2\right)}{T_q(t)} \triangleq \frac{R_q(t)}{T_q(t)},
    \end{equation} 
    where $T_q(t)$ is the exponential moving average SE seen by UE-$q$ till time $t$ which is updated as 
    \begin{equation}
    T_q(t+1) = \begin{cases} \left(1-\frac{1}{\tau}\right) T_q(t) + \frac{1}{\tau} R_q(t), & \mathrm{if} \ \ q = q^*(t), \\ 
    \left(1-\frac{1}{\tau}\right) T_q(t), & \mathrm{if} \ \ q \neq q^*(t).
    \end{cases}
    \end{equation} Here, the parameter $\tau$ controls the trade-off between fairness and throughput~\cite[Sec.II]{Yashvanth_TSP_2023}. The following Lemma shows the achievable sum-SE of operator Y under PF scheduling.
    
    \begin{lemma}\label{lem:PF_sub_6}(\!\!\!\cite{Yashvanth_TSP_2023,Nadeem_WCL_2021})
    Under independent Rayleigh fading channels in the sub-$6$ GHz bands, when the IRS is optimized to serve the UEs of operator X, and operator Y uses the PF scheduler, the sum-SE of operator Y, $R^{(Y)}_{\mathrm{PF}}$, obeys
    \begin{equation}\label{eq_PF_oob_UE_SE}
    R^{(Y)}_{\mathrm{PF}} - \frac{1}{Q} \sum_{q=1}^{Q} \log_2\Bigg(\!\!1+ \left|\sum_{n=1}^N |f_n^Yg_{q,n}| + |h_{d,q}|\right|^2\!\!\!\frac{P}{\sigma^2}\!\!\Bigg) \longrightarrow 0,
    \end{equation}
    as $Q,\tau \longrightarrow \infty$. 
    \end{lemma}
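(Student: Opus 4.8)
The plan is to establish the claim as an instance of the multi-user diversity convergence of the proportional-fair (PF) scheduler, adapting the arguments of \cite{Viswanath_TIT_2002, Yashvanth_TSP_2023, Nadeem_WCL_2021} to the randomly configured IRS. Throughout, I would condition on the small-scale fading coefficients $\{f_n^Y, g_{q,n}, h_{d,q}\}$, so that the only source of time variation in the instantaneous rate $R_q(t) = \log_2(1 + |h_q(t)|^2 P/\sigma^2)$ is the IRS phase $\boldsymbol{\theta}(t)$, which—since it is tuned by operator X for \emph{its own} scheduled UE—appears i.i.d. across slots and uniformly random in phase from operator Y's viewpoint.

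First I would identify the per-UE beamforming rate as the supremum of $R_q(t)$ over all admissible IRS phase configurations. By the triangle inequality, $|h_q(t)| = |\sum_n f_n^Y g_{q,n} e^{j\theta_n(t)} + h_{d,q}| \le \sum_n |f_n^Y g_{q,n}| + |h_{d,q}|$, with equality exactly when the IRS co-phases all the cascaded and direct contributions at UE-$q$. Hence the argument of the logarithm on the left-hand side of \eqref{eq_PF_oob_UE_SE} is bounded above by that on the right, and the right-hand side is precisely the rate $R_q^{BF}$ attained when the IRS is beamforming-aligned to UE-$q$. Because the phases are continuously and independently randomized each slot, this supremum is approached arbitrarily closely over a long horizon but (being a measure-zero event) is never met exactly in any single slot.

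Next I would invoke the two limiting mechanisms. As $\tau \to \infty$, the PF metric $R_q(t)/T_q(t)$ drives the scheduler to its proportionally fair operating point, under which the long-term throughputs $\{T_q\}$ maximize $\sum_q \log T_q$; the normalization by each UE's own running average $T_q$ makes the allocation insensitive to the heterogeneous path losses $\beta_{r,q}, \beta_{d,q}$, so that every UE is served an asymptotic time fraction $1/Q$ and, crucially, only in those slots where its instantaneous rate sits near the top of its own fluctuation range. As $Q \to \infty$, a multi-user-diversity (extreme-value) argument shows that, with independent channels across the $Q$ UEs, in each slot there is—with probability approaching one—some UE whose random IRS realization is close to its own beamforming configuration; the PF scheduler selects such a UE, so the scheduled rate concentrates at the corresponding $R_q^{BF}$. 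Combining, each UE-$q$ contributes $T_q \to (1/Q) R_q^{BF}$, whence $R^{(Y)}_{\mathrm{PF}} = \sum_q T_q \to (1/Q)\sum_q R_q^{BF}$, which is exactly the asserted convergence.

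The main obstacle is controlling the \emph{joint} limit $Q, \tau \to \infty$ rather than the two mechanisms in isolation: one must show that the vanishing gap between the scheduled UE's instantaneous rate and its beamforming rate (governed by $Q$ through extreme-value concentration) and the vanishing deviation of the PF time-shares from $1/Q$ (governed by $\tau$ through the averaging dynamics) can be made simultaneously small. I would handle this by first fixing $Q$ and letting $\tau \to \infty$ to pin down the $1/Q$ time-sharing and peak-seeking behavior from the PF fixed-point characterization in \cite{Viswanath_TIT_2002}, and then letting $Q \to \infty$ so that the per-UE peak captured by the scheduler converges to $R_q^{BF}$. A secondary technical point is verifying that the law of $|h_q|^2$ has no atom or gap at its upper endpoint $|h_q^{BF}|^2$, which guarantees that near-beamforming slots occur with positive (though small) probability and are therefore eventually captured.
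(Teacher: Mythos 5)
The paper does not actually prove Lemma~\ref{lem:PF_sub_6}: it imports the result verbatim from \cite{Yashvanth_TSP_2023,Nadeem_WCL_2021}, so there is no in-paper argument to compare against. Your sketch is a faithful reconstruction of the argument used in those references, i.e., the Viswanath--Tse--Laroia opportunistic-beamforming analysis transplanted to a randomly configured IRS: condition on the quasi-static fading so that the IRS phase vector (uniform on the torus from operator Y's viewpoint, since it is matched to operator X's independent Rayleigh channels) is the sole source of slot-to-slot rate fluctuation; identify $R_q^{BF}$ as the essential supremum of $R_q(t)$ via the triangle inequality; and then combine the $\tau\to\infty$ PF fixed-point (equal $1/Q$ time shares, each UE served near its own peak) with the $Q\to\infty$ extreme-value concentration of the scheduled rate at that peak. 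Two points deserve tightening. First, your phrasing that in each slot ``with probability approaching one'' some UE sits near its beamforming configuration overstates what is needed and is delicate for fixed $N$: the probability that a uniform phase vector $\epsilon$-approximates the co-phasing configuration is exponentially small in $N$, so $Q$ must grow accordingly (the paper acknowledges exactly this in Sec.~\ref{sec:numerical_sections} via the gaps $\Delta_{16}>\Delta_4$); the cleaner route is that the conditional mean of $R_q$ above its $(1-1/Q)$-quantile converges to $\mathrm{ess\,sup}\, R_q = R_q^{BF}$ for a bounded random variable whose law charges every neighborhood of its supremum. Second, the claim that PF yields asymptotic time shares of exactly $1/Q$ under heterogeneous path losses is not automatic; it follows from the normalization by $T_q$ only after the fixed-point analysis of the PF dynamics, which is precisely the content delegated to \cite{Viswanath_TIT_2002}. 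Neither issue is a fatal gap for a proof sketch, and your order of limits ($\tau\to\infty$ first, then $Q\to\infty$) matches the convention in the cited works.
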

    From~\eqref{eq_PF_oob_UE_SE} and~\eqref{eq_BF_rate_UE_k}, we see that UE-$q$ achieves the optimal SNR from the IRS even though the IRS is not explicitly programmed to be in beamforming configuration to this UE. Thus,  operator $Y$ also experiences $\mathcal{O}(N^2)$ gain in the SNR, free of cost, provided the number of UEs it serves is very large. However, one drawback of using the PF scheduler with large $\tau$ is that the latency in UE scheduling also becomes large~\cite{Viswanath_TIT_2002}. Thus, an OOB operator has to judiciously choose $\tau$ to balance the achievable sum rate with the latency in UE scheduling.

    \subsection{Multi-user Diversity for Operator Y using MR Scheduler}
    
    The MR scheduler serves UE-$q^*$ at time $t$, where 
    \begin{equation}\label{eq:MR_UE_pick}
    q^*(t) = \argmax_{q \in \{1,2,\ldots,Q\}} \log_2\left(1+\left|h_q(t)\right|^2P/\sigma^2\right).
    \end{equation} 
    Unlike the PF scheduler, an MR scheduler maximizes the overall system throughput, disregarding fairness across the UEs. As a result, the achievable benefit from the IRS when using an MR scheduler has a different flavor than the PF scheduler. Specifically, the random IRS configurations are used to enhance the dynamic range of fluctuations in the SNR at OOB UEs. Then, the ergodic sum-SE of the operator Y is 
    \begin{equation}\label{eq_MR_rate_template}
    \bar{R}^{(Y)}_{\mathrm{MR}} = \mathbb{E}\left[\log_2\left(1+\max_{q\in\{1,\ldots,Q\}}|h_q|^2P/\sigma^2\right)\right].
    \end{equation}
    A closed-form characterization of~\eqref{eq_MR_rate_template} is given below under a special case where $\{h_{q}\}_q$ form a set of i.i.d. random variables.
    \begin{lemma}\label{lem:MR_scheduler_ergodic_SE}(Theorem~$1$,\cite{Yashvanth_ICASSP_2023})
    Under i.i.d.\ Rayleigh fading channels in the sub-$6$ GHz bands, when the IRS is optimized to serve the UEs of operator X, and operator Y uses the MR scheduler, the ergodic sum-SE of operator Y, $\bar{R}^{(Y)}_{\mathrm{MR}}$, scales as
    \begin{equation}
    \bar{R}^{(Y)}_{\mathrm{MR}} \xrightarrow{Q \rightarrow \infty} \log_2\left(1+\log_e(Q)\left(N\beta_{r}+\beta_{d}\right)\frac{P}{\sigma^2}\right),
    \end{equation} where $\beta_r$, $\beta_d$ are the common direct \& cascaded path losses.
    \end{lemma}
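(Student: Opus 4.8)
The plan is to reduce the problem to the asymptotics of the maximum of $Q$ i.i.d.\ exponential random variables, and then push that asymptotic through the concave rate function via a two-sided sandwich. First I would invoke the Gaussian approximation $h_q \sim \mathcal{CN}(0,\mu)$ with $\mu \triangleq N\beta_r + \beta_d$, already justified and used ahead of Theorem~\ref{thm:exact_ccdf}, which is accurate for reasonably large $N$. Under the i.i.d.\ channel assumption across the $Q$ OOB UEs, the gains $\{|h_q|^2\}_{q=1}^{Q}$ are then i.i.d.\ exponential with mean $\mu$, so that, writing $M_Q \triangleq \max_{q}|h_q|^2$, the ergodic sum-SE in~\eqref{eq_MR_rate_template} becomes $\bar{R}^{(Y)}_{\mathrm{MR}} = \mathbb{E}\left[\log_2\left(1 + M_Q P/\sigma^2\right)\right]$.

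Next I would characterize $M_Q$ directly from order statistics. Using R\'enyi's representation, $M_Q \stackrel{d}{=} \mu\sum_{j=1}^{Q} E_j/j$ with $E_j$ i.i.d.\ $\mathrm{Exp}(1)$, which yields the exact moments $\mathbb{E}[M_Q] = \mu H_Q$ and $\mathrm{Var}(M_Q) = \mu^2\sum_{j=1}^{Q} 1/j^2$, where $H_Q \triangleq \sum_{j=1}^{Q} 1/j$ is the $Q$th harmonic number. Since $H_Q = \log_e Q + \mathcal{O}(1)$ while $\mathrm{Var}(M_Q) \to \mu^2\pi^2/6$ stays bounded, the maximum concentrates: its mean grows like $\mu\log_e Q$ whereas its fluctuations are only $\mathcal{O}(1)$, so $M_Q/(\mu\log_e Q) \to 1$ in probability.

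The final and most delicate step is to transfer this concentration through the nonlinear $\log_2(1+\cdot)$, i.e.\ to show $\mathbb{E}[\log_2(1 + M_Q P/\sigma^2)] - \log_2(1 + \mu\log_e Q\, P/\sigma^2) \to 0$. For the upper bound I would apply Jensen's inequality (concavity of $\log$) to get $\mathbb{E}[\log_2(1 + M_Q P/\sigma^2)] \le \log_2(1 + \mu H_Q P/\sigma^2)$ and absorb $H_Q \sim \log_e Q$. For the matching lower bound I would fix $\epsilon > 0$, restrict to the high-probability event $\{M_Q \ge (1-\epsilon)\mu\log_e Q\}$, and use nonnegativity of the rate to drop the complementary event, obtaining $\mathbb{E}[\log_2(1 + M_Q P/\sigma^2)] \ge \log_2(1 + (1-\epsilon)\mu\log_e Q\, P/\sigma^2)\,\mathsf{Pr}(M_Q \ge (1-\epsilon)\mu\log_e Q)$; the probability tends to $1$ by the concentration above, and letting $\epsilon \downarrow 0$ closes the gap between the two bounds.

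I expect this interchange of limit and expectation to be the main obstacle, since a naive substitution of $M_Q \approx \mu\log_e Q$ inside the expectation is unjustified without controlling the Gumbel-type upper tail of $M_Q$. The Jensen-plus-concentration sandwich sidesteps explicit tail control precisely because $\log_2(1+\cdot)$ is both concave (giving the clean Jensen upper bound) and nonnegative (allowing the bad event to be discarded in the lower bound), which is what makes the argument go through cleanly as $Q \to \infty$.
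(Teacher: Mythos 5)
Your argument is correct, and it is worth noting up front that the paper itself contains no proof of this lemma: it is imported verbatim as Theorem~1 of \cite{Yashvanth_ICASSP_2023}, so there is nothing in-text to compare against line by line. Your route is the standard multi-user-diversity argument, but assembled cleanly: the Gaussian approximation $h_q \sim \mathcal{CN}(0, N\beta_r+\beta_d)$ is exactly the one the paper already invokes before Theorem~\ref{thm:exact_ccdf}, the R\'enyi representation $M_Q \stackrel{d}{=} \mu\sum_{j=1}^{Q}E_j/j$ gives you $\mathbb{E}[M_Q]=\mu H_Q$ and a variance bounded by $\mu^2\pi^2/6$ without appealing to the Gumbel limit law (which is how such results are often argued in the opportunistic-scheduling literature), and the Jensen-plus-truncation sandwich correctly converts concentration of $M_Q$ into convergence of the ergodic SE. The one point you gloss over slightly is in the lower bound: since $\log_2\left(1+(1-\epsilon)\mu\log_e Q\,P/\sigma^2\right)$ itself diverges like $\log_2\log_e Q$, you need $1-\mathsf{Pr}\left(M_Q \ge (1-\epsilon)\mu\log_e Q\right)$ to vanish faster than $1/\log_2\log_e Q$; Chebyshev with your bounded-variance computation gives a decay of order $1/(\log_e Q)^2$, which more than suffices, but this should be said explicitly since it is precisely the tail control you claim to have sidestepped. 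With that one sentence added, the proof is complete and self-contained, which is arguably more than the paper offers for this statement.
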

    Thus, the MR scheduler leverages multi-user diversity to obtain a $\log_e(Q)$-fold improvement in SNR and focuses less on achieving the beamforming SE at the selected OOB UE.\\
    
    We note that practical aspects, such as the number of UEs required to achieve a target multi-user diversity, efficient feedback schemes to identify the best UE for scheduling, etc., are not discussed here due to space constraints. We refer the reader to~\cite[Sec. II]{Yashvanth_ICASSP_2023} and~\cite[Sec. III-A]{Yashvanth_TSP_2023} for more details.\!\\
    
    The results in Lemmas~\ref{lem:PF_sub_6}, \ref{lem:MR_scheduler_ergodic_SE} pertain to the sub-$6$ GHz bands. Similar results can also be derived for operator Y in mmWave bands using PF and MR schedulers, e.g., using the opportunistic scheme presented in~\cite[Sec.III-C]{Yashvanth_TSP_2023} and~\cite[Sec. III]{Yashvanth_ICASSP_2023}. 

    \section{Numerical Results and Discussion}\label{sec:numerical_sections}

    		In this section, we demonstrate all our results via Monte Carlo simulations. Let BS-X and BS-Y be located at $(0,50)$, and $(50,0)$ (in meters), the IRS is at $(1025,1025)$, and the UEs are randomly and uniformly located in a rectangular region with diagonally opposite corners $(950,950)$ and  $(1100,1100)$. The path loss in each link is modeled as $\beta = C_0\left(d_0/d\right)^\alpha$, where $C_0$ is the path loss at the reference distance $d_0$, $d$ is the distance of the link, and $\alpha$ is the path loss exponent. We let $d_0=1$ meter, and $C_0\!=\!-30$ dB and $\!-60$ dB for sub-6 GHz and mmWave bands, respectively. We use $\alpha=2,2$, and $4.5$ in the BS X/Y-IRS, IRS-UE, and BS X/Y-UE (direct) links, respectively. Finally, we consider $\!K\!=\!Q\!=\!10$ UEs served over $5000$ time slots.\!\! 
		
		\subsection{OOB Performance in sub-6 GHz Bands using RR Scheduler}

    		In Fig.~\ref{fig:SE_txt_SNR}, we plot the empirical ergodic sum-SE vs. the transmit SNR $\left(\gamma \triangleq P/\sigma^2\right)$ for both the operators as a function of $N$, the number of IRS elements, and in the sub-6 GHz frequency band.\footnote{We set the range of $\gamma$ in $110 - 160$ dB \textcolor{black}{for sub-6 GHz bands}. For e.g., when $N = 8$ and $\gamma = 160$ dB, the \emph{received} SNR is $\approx 5, -10$ dB
    		with and without an IRS, respectively, at an OOB UE at $(1000, 1000)$.}  We also plot the sum-SE obtained from the analytical expressions in Theorem~\ref{thm:rate_characterization}. We see that the IRS enhances the sum-SEs of both operators, although operator X benefits more from the IRS. We also see that the improvement in SE with $N$ is log-quadratic for operator X, while it is log-linear for operator Y, as expected. For example, the gap between the $N=64$ and $N=256$ curves for operator X is about $4$ bits ($\approx \log_2(4^2) = 4$), the gap is about $2$ bits ($\approx \log_2(4)$) for operator Y. Also, the analytical expressions tightly match the simulated values, i.e., the approximation error introduced using Jensen's inequality is very small.\\
    		\indent Next, in Fig.~\ref{fig:SE_logN}, we examine the effect of the number of IRS elements, $N$, by plotting the ergodic sum-SE vs. $\log_2(N)$ for transmit SNRs of $130$ dB and $150$ dB to validate the scaling of the received SNR as a function of $N$. On the plot, we mark the slope of the different curves, and as expected from  Theorem~\ref{thm:rate_characterization}, it is clear that while received SNR for a UE served by operator X scales as $N^2$, it also scales as $N$ for a UE served by operator~Y.
    		\begin{figure}[t]
    		\vspace{-0.4cm}
    		\centering
        		\includegraphics[width=\linewidth]{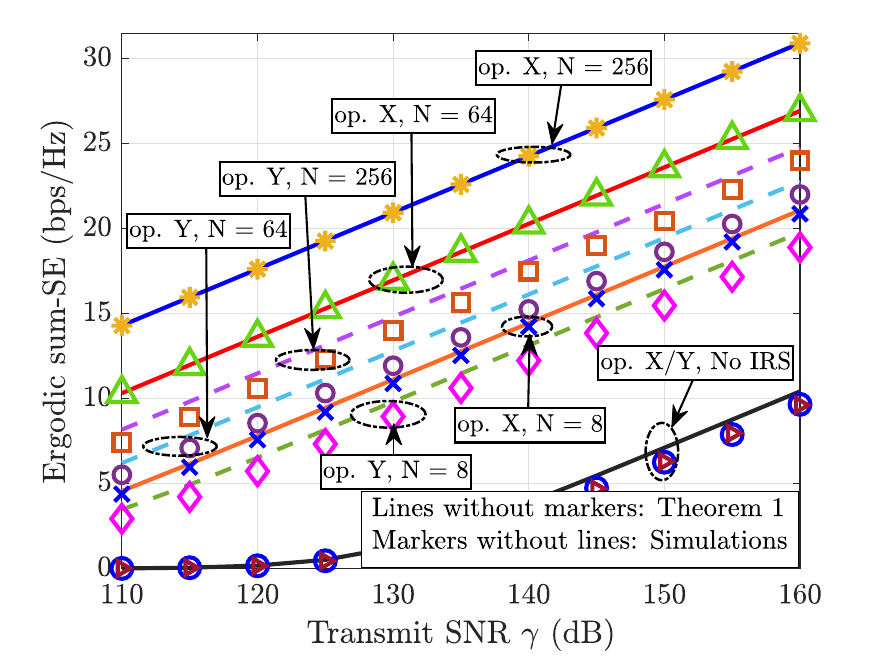}
		\caption{Ergodic sum-SE vs. transmit SNR.}
    		\label{fig:SE_txt_SNR}
    	 	\end{figure}
    		\begin{figure}[t]
    		\vspace{-0.4cm}
    		\centering
    		\includegraphics[width=\linewidth]{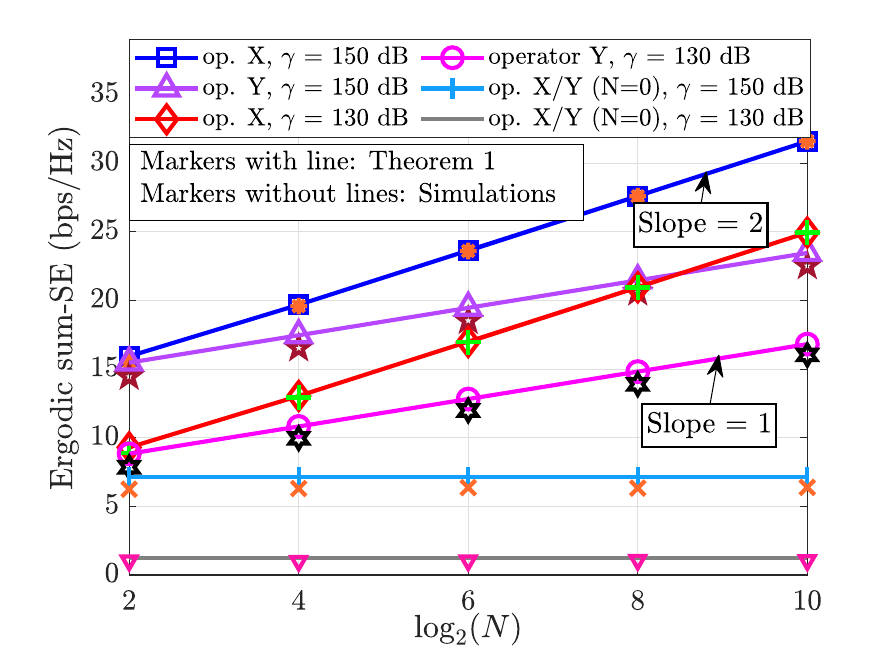}
    		\caption{Ergodic sum-SE vs. $\log_2 (N)$.}
    		\label{fig:SE_logN}
    		\vspace{-0.5cm}
    		\end{figure}
    	\begin{figure*}[t]
    		\vspace{-0.5cm}
    				\begin{subfigure}{0.33\linewidth}
    			\hspace{-0.1cm}
    			\includegraphics[width=1.12\linewidth]{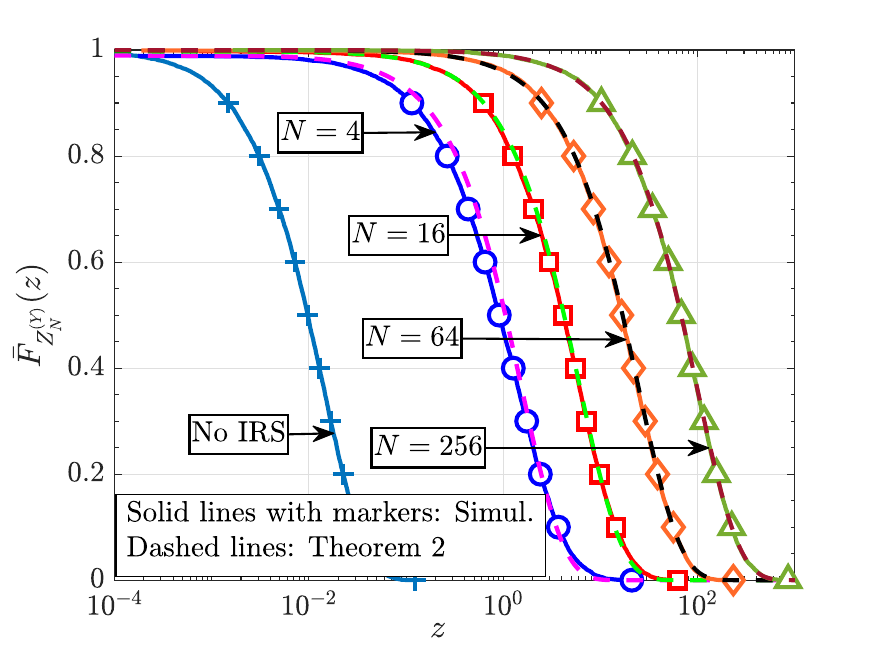}
    			\caption{CCDF of $Z^{(Y)}_N$ as a function of $N$.}
    			\label{fig:ccdf}
    	\end{subfigure}
    	\hspace{0.1cm}
    	\begin{subfigure}{0.33\linewidth}
    				\centering		
    				\includegraphics[width=1.12\linewidth]{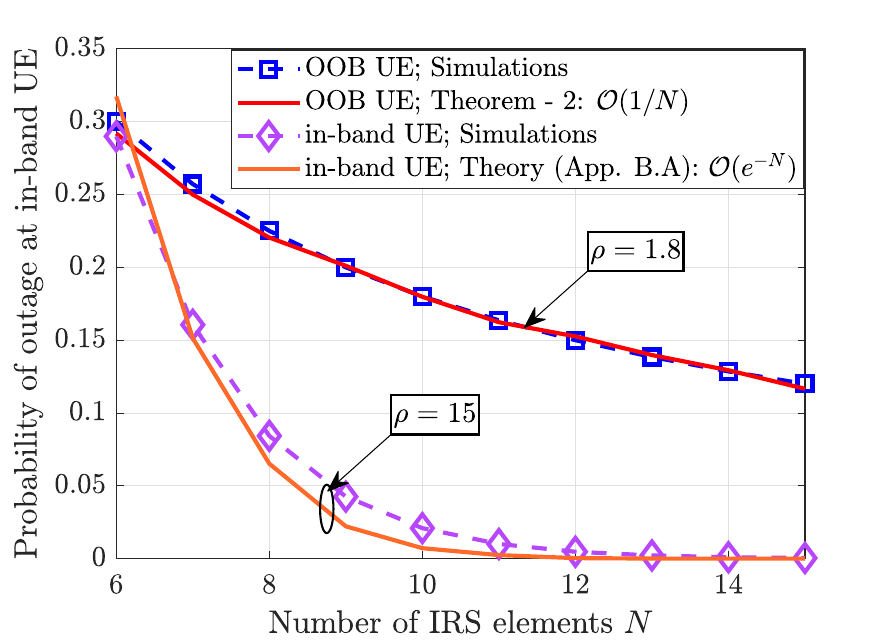}
    				\caption{Decay of outage prob. at in-band UEs vs. OOB UEs.}
    			\label{fig_outage_prob_inband_OOB}
    			\end{subfigure}
    			\hspace{0.01cm}
    			\begin{subfigure}{0.33\linewidth}
    			  \includegraphics[width=1.08\linewidth]{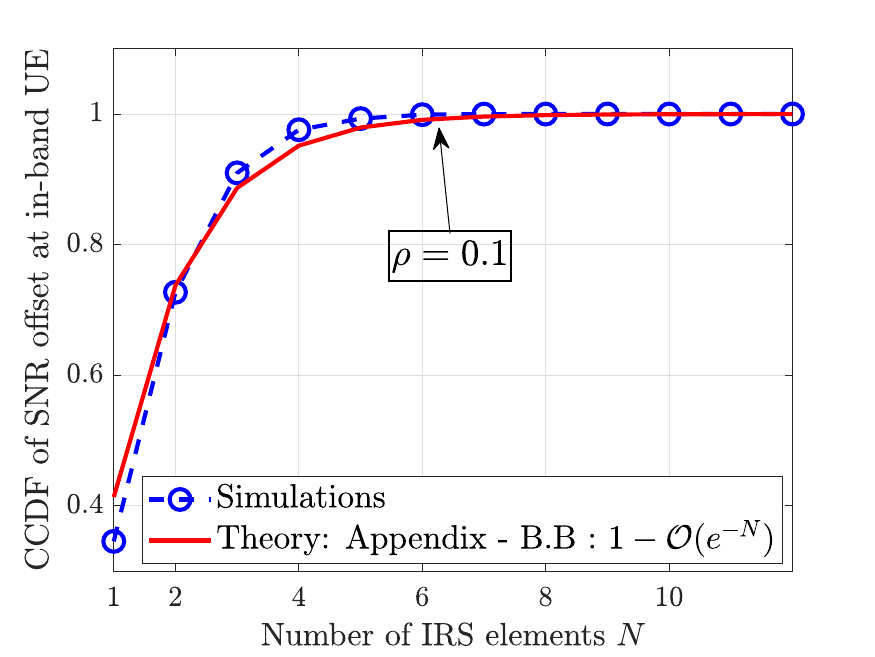}    						
			  \caption{CCDF of the SNR offset at in-band UEs.}
    			\label{fig_CCDF_offset_in_band}
    	\end{subfigure}
    			\caption{Instantaneous channel quality of the in-band/OOB UEs in the sub-6 GHz bands of communication.}
    			\vspace{-0.4cm}
    	\end{figure*}	
    	\begin{figure}[t]
    	\vspace{-0.1cm}
    	\centering
    	\includegraphics[width=\linewidth]{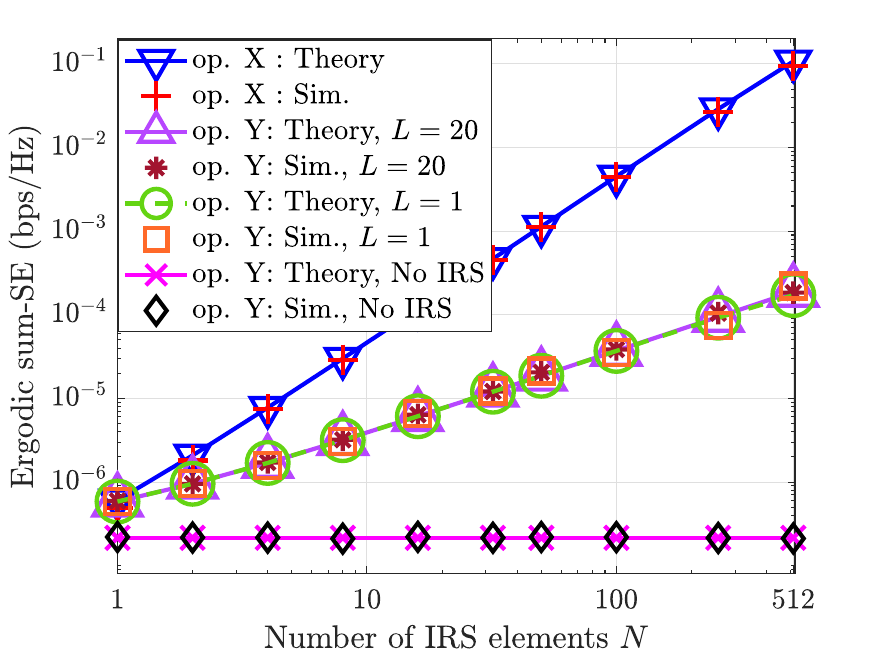}
    	\caption{Ergodic sum-SE vs. $N$ in LoS scenarios at $C_0\gamma=90$ dB.}    	\label{fig:SE_N_mmwave_single_path_low_SNR}
    \vspace{-0.1cm}
    \end{figure}
    \begin{figure}[t]
    	\vspace{-0.1cm}
    	\centering
    	\includegraphics[width=\linewidth]{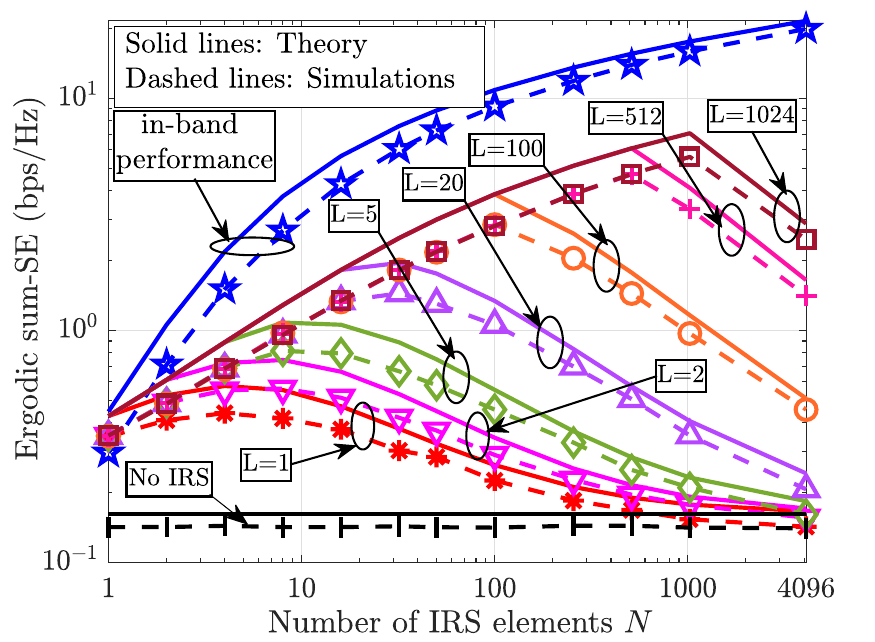}
    	\caption{Ergodic sum-SE vs. $N$ in LoS scenarios at $C_0\gamma = 150$ dB.}    	\label{fig:SE_N_mmwave_single_path_high_SNR}
    	\vspace{-0.25cm}
    \end{figure}
    \begin{figure*}[t]
    	\vspace{-0.5cm}
    	\begin{subfigure}{0.33\linewidth}
    		\hspace{-0.45cm}
    		\includegraphics[width=1.12\linewidth]{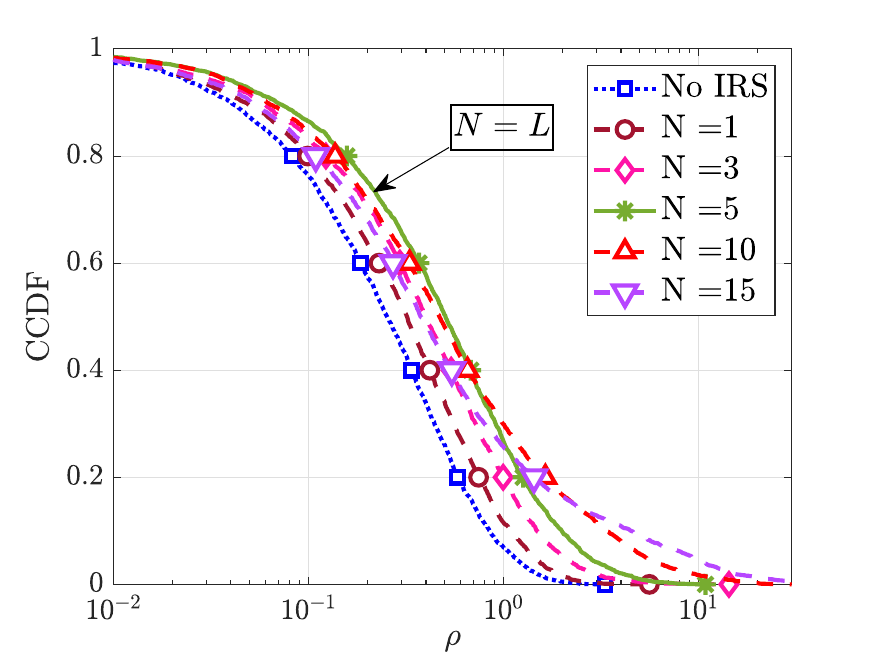}
    		\caption{$L=5$.}
    		\label{fig:CCDF_mmwave_single_path_L_5}
    	\end{subfigure}
    	\begin{subfigure}{0.33\linewidth}
    		\hspace{-0.4cm}
    		\includegraphics[width=1.12\linewidth]{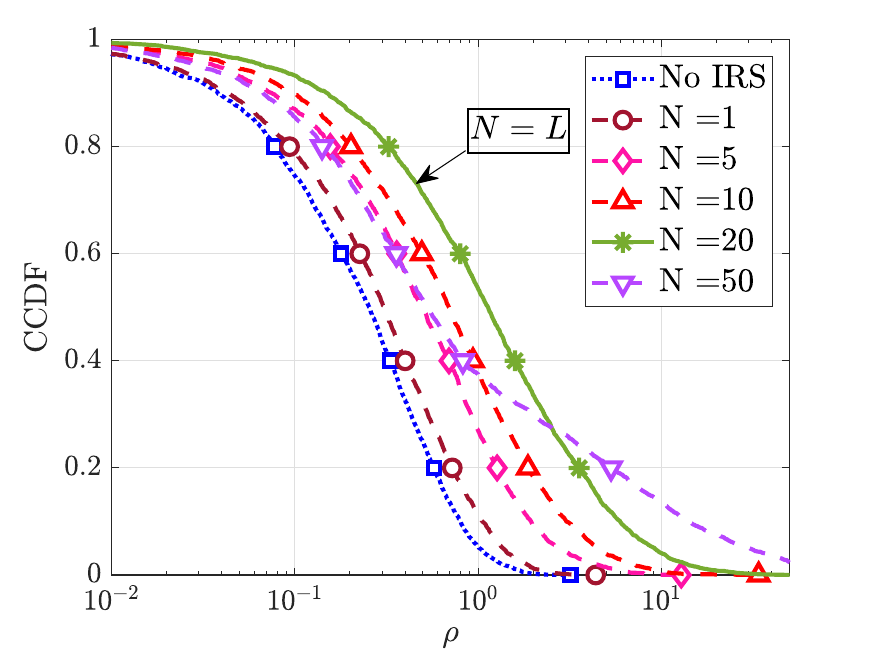}
    		\caption{$L=20$.}
    		\label{fig:CCDF_mmwave_single_path_L_20}
    	\end{subfigure}
    	\begin{subfigure}{0.33\linewidth}
    		\hspace{-0.22cm}
    		\includegraphics[width=1.12\linewidth]{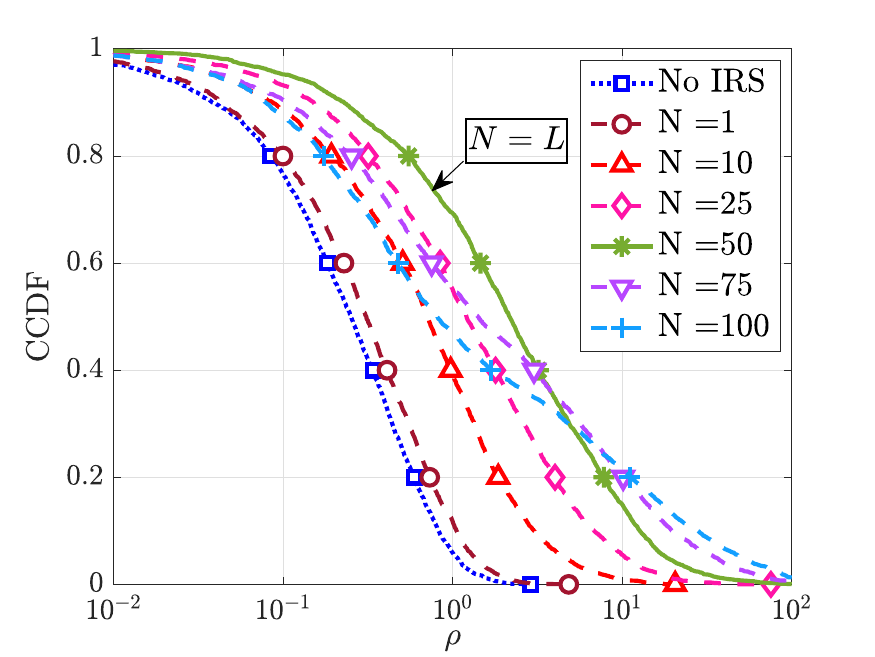}
    		\caption{$L=50$.}
    		\label{fig:CCDF_mmwave_single_path_L_50}
    	\end{subfigure}
    	\caption{CCDF of OOB UE's channel gain $|h_q|^2$ as a function of $N$ in LoS scenarios of the mmWave bands at $C_0\gamma = 150$ dB.}
    	\label{fig:overall_mmwave_ccdf}
    	\vspace{-0.5cm}
    \end{figure*}
    \begin{figure}[t]
    	\vspace{-0.05cm}
    	\centering
    	\includegraphics[width=\linewidth]{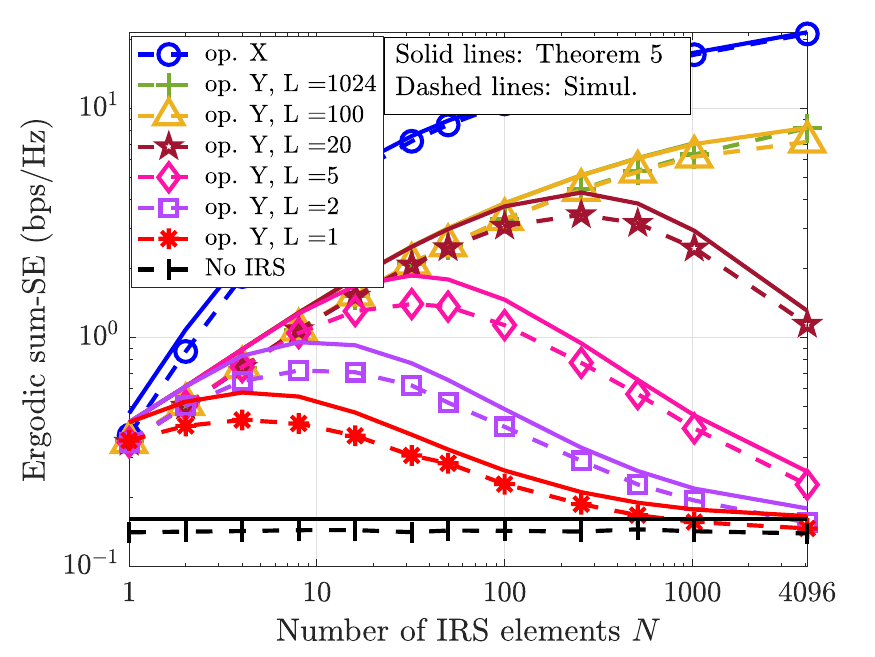}
    	\caption{Ergodic sum-SE vs. $N$ in (L+)NLoS scenarios.}
    	\label{fig:ergodic_SE_vs_N_multiple_path_high_SNR}
    	\vspace{-0.2cm}
    \end{figure}
    \begin{figure}[t]
    	\centering
    	\includegraphics[width=\linewidth]{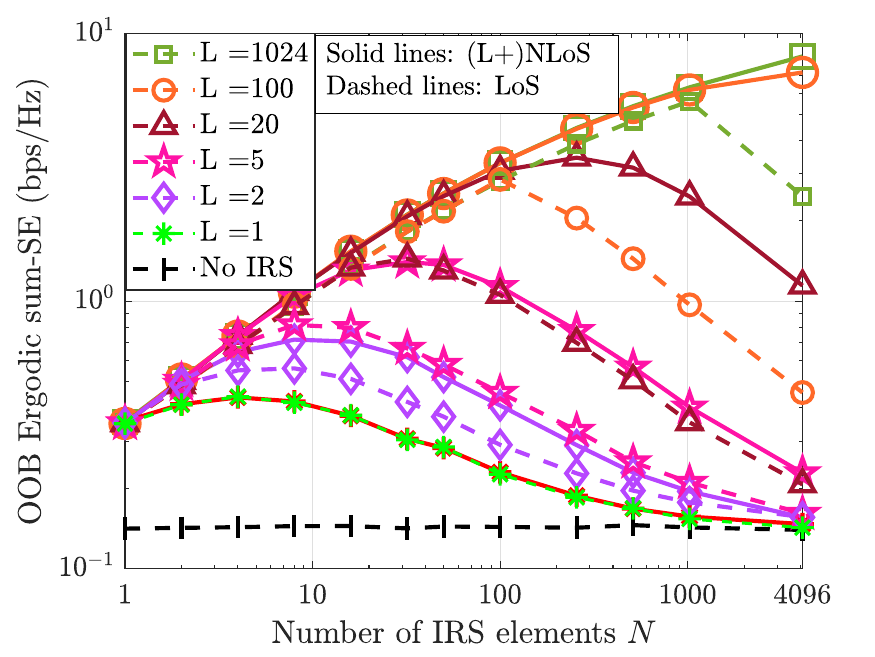}
    	\caption{Comparison of the sum-SE in LoS and (L+)NLoS scenarios.}
    	\label{fig:ergodic_SE_vs_N_LOS_vs_NLOS_high_SNR}
    	\vspace{-0.2cm}
    \end{figure}
    \begin{figure}[t]
    	\vspace{-0.2cm}
    	\centering
    	\includegraphics[width=1.02\linewidth]{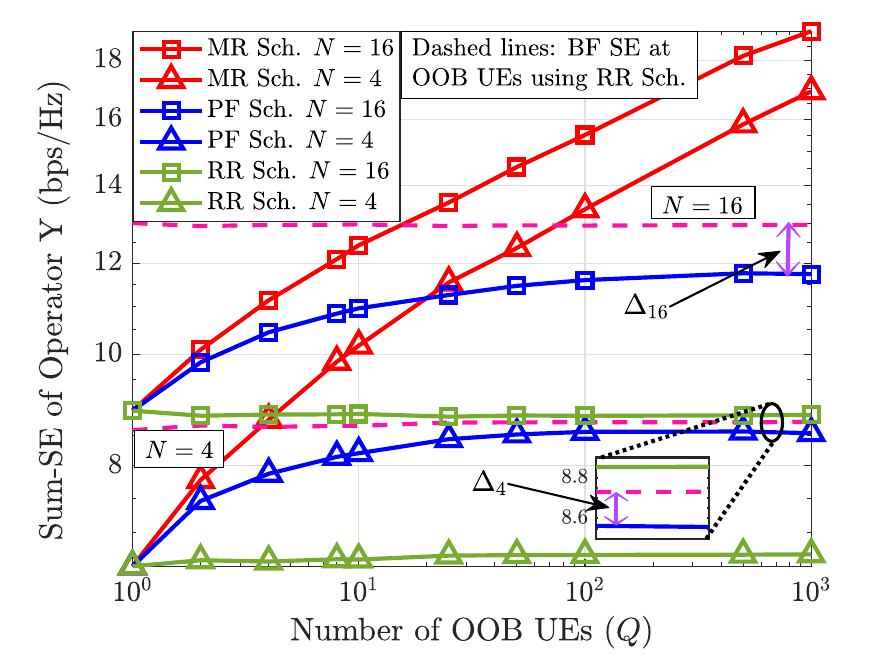}
    	\caption{OOB SE versus $Q$ for different schedulers.}
    	\label{fig:OOB_SE_vs_Q}
    	\vspace{-0.1cm}
    \end{figure}
    \begin{figure}[t]
        	\vspace{-0.05cm}
    	\centering
    	\includegraphics[width=\linewidth]{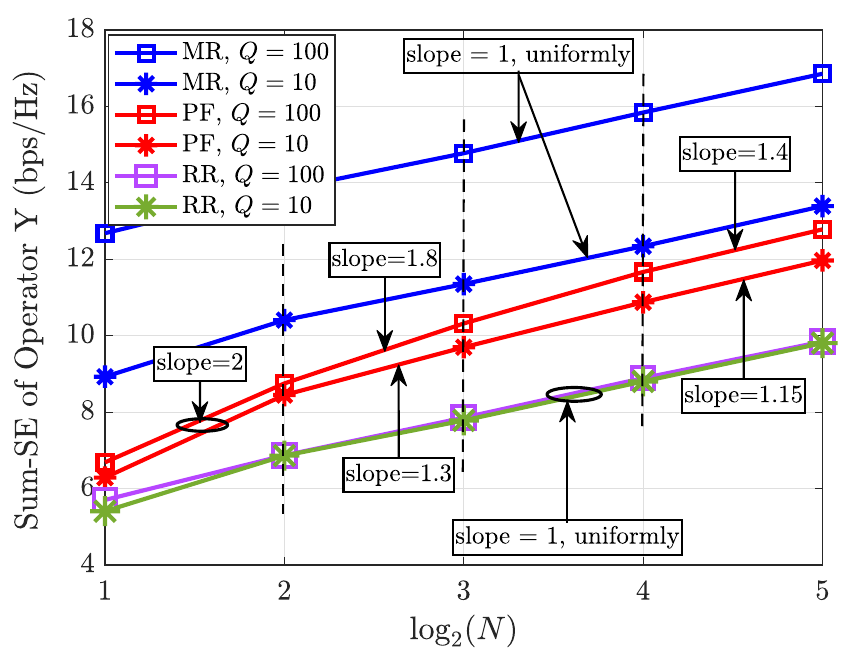}
    	\caption{OOB SE versus $\log_2(N)$ for different schedulers.}
    	\label{fig:OOB_SE_vs_N_diff_schedulers}
    	\vspace{-0.25cm}
    \end{figure}
    		 Next, we study the effect of the IRS on the OOB operator Y by considering the behavior of the random variable $Z^{(Y)}_N$ (see~\eqref{eq:snr_offset}),
    		which represents the difference in the SNR/channel gain at a UE $q$ served by BS-Y (which does not control the IRS) with and without the IRS in the environment. In Fig.~\ref{fig:ccdf}, we plot the empirical and theoretical CCDF of $Z^{(Y)}_N$ (at $\gamma=130$ dB.)
    		The analytical expression derived in Theorem~\ref{thm:exact_ccdf} matches well with the simulations. Also, almost surely, $Z^{(Y)}_N$ is a non-negative random variable for any $N>0$, which confirms that the channel gain at an OOB UE \emph{with} an IRS is at least as good as the channel gain at the same UE \emph{without} an IRS with probability $1$. The CCDF shifts to the right as the $N$ is increased, as expected. The left-most curve in the figure is the CCDF of received SNR in the absence of the IRS, which shows that the probability that an operator benefits from the presence of a randomly configured IRS increase with $N$ for operators who do not control the IRS. Thus, the instantaneous SNRs witnessed at an arbitrary UE of an OOB operator stochastically dominate the SNR seen by the same UE in the absence of the IRS, in line with Proposition~\ref{sec:prop_stochastic_dominance_sub_6_GHz}. \\
    		\indent Finally, we compare the outage probabilities of the in-band and OOB UEs in Fig.~\ref{fig_outage_prob_inband_OOB} as a function of $N$. While the probability decreases as $\mathcal{O}(e^{-N})$ at the in-band UEs (in-line with Remark~\ref{outage_prob_ccdf_in_band_UE}), it also uniformly decreases at OOB UEs, but at the rate of $\mathcal{O}(1/N)$ as per Theorem~\ref{thm:exact_ccdf}. Thus, the outage probability of the OOB UEs monotonically decreases with the IRS elements \emph{free of cost.} For completeness, in Fig.~\ref{fig_CCDF_offset_in_band}, we also validate that the CCDF of the SNR offset as in Remark~\ref{outage_prob_ccdf_in_band_UE} for the in-band UEs grows to $1$ at the rate of $1-\mathcal{O}(e^{-N})$.
    		\vspace{-0.1cm}
    \subsection{OOB Performance in mmWave Bands using RR Scheduler}
    We now numerically illustrate our findings in the mmWave band. First, we focus on the OOB performance when the IRS is optimized to in-band UEs in the LoS scenarios.
    In Figs.~\ref{fig:SE_N_mmwave_single_path_low_SNR}~and~\ref{fig:SE_N_mmwave_single_path_high_SNR}, we plot the ergodic sum-SEs vs.
    $N$, of both operators, at low and high (received) SNRs, respectively. 
    At low SNR, we observe that $1$) The ergodic sum-SE scales linearly with $N$, and $2$) The OOB performance is insensitive to $L$. This can be explained by analyzing the behavior of the SE in \eqref{eq:mmwave_rate_jio_rr_single_path} at low SNRs. Namely $\log(c\gamma) \approx c\gamma$ for small $c\gamma$. Thus, an OOB UE witnesses an SE which scales as $N^2/L$ for an $L/N$ fraction of time, leading to an effective scaling of SE as $\mathcal{O}(N)$.  Thus, if the OOB system is designed to operate at low SNRs, then even an OOB UE obtains benefits from the IRS that monotonically increase with $N$. On the other hand, at high SNRs, for any given $L$, the SE is an unimodal function of $N$, as described in Remark~\ref{sec:favourable_unfavourable_region_mmwave_single_path}. As long as $L \geq N$, i.e., the number of paths is sufficiently large (at least as many as the number of resolvable beams the IRS can form), almost surely, the IRS aligns with one of the OOB UE's channel angles and provides benefits which linearly increases with $N$. However, when $L\! <\! N$, the IRS beam does not always align with the OOB UE; hence, the performance starts to decline as $N$ grows with the peak SE obtained when $L=N$. In any case, the OOB SE is higher than the SE obtained without the IRS. This confirms that the IRS \emph{never} degrades the average OOB performance even in mmWave bands. Also, the simulations and analytical expressions match very well, which validates the correctness of Theorem~\ref{thm:rate_characterization_mmwave_single_path_IB}. Finally, we note that the SNR of the in-band operator monotonically grows as $N^2$, as expected. 
        	
    Next, we validate the instantaneous OOB performance by plotting the CCDFs of the channel gains of an arbitrary OOB UE in Figs.~\ref{fig:CCDF_mmwave_single_path_L_5}, \ref{fig:CCDF_mmwave_single_path_L_20}, and \ref{fig:CCDF_mmwave_single_path_L_50}, for $L=5$, $20$, and $50$, respectively. Clearly, the effective channel in the presence of the IRS stochastically dominates the effective channel in its absence, as found in Theorem~\ref{thm:ccdf_mmwave_OOB_single_path}. Further, with high probability, the channel gains of the OOB UEs achieve their maximum value when $N=L$, which is in line with our previous observations. 
    
    Next, we focus on the case when the IRS is optimized for (L+)NLoS mmWave scenarios. Specifically, we study the operators' ergodic sum-SE behavior in a high SNR regime. In Fig.~\ref{fig:ergodic_SE_vs_N_multiple_path_high_SNR}, we plot the ergodic sum-SE vs. $N$ when the IRS is optimized for (L+)NLoS scenarios. The OOB performance in this case is very similar to the LoS scenarios, where the enhancement in the OOB SE due to the IRS is contingent on the number of spatial paths available in the OOB UEs' channel. However, the main difference lies in the number of paths needed in the OOB UE's channel to obtain performance enhancement by the IRS.  While the LoS scenarios require $L$ to scale with $N$, it is sufficient for  $L$ to scale with $\sqrt{N}$ in the (L+)NLoS scenarios to obtain an OOB performance boost which is monotonic in $N$. Intuitively, this is because, in addition to the probability that one of the $L$ paths of the OOB channel aligns with the IRS (which scales as $L/N$), the IRS provides an additional opportunity for alignment by having a directional response in $L$ directions. This leads to the probability of alignment improving to $L^2/N$ in the OOB system. We confirm this in Fig.~\ref{fig:ergodic_SE_vs_N_multiple_path_high_SNR}: the peak performance is attained roughly at $N=L^2$ for every~$L$. 
    Recall that the probability that $i$ of the IRS beams align with the OOB channel is given by~\eqref{eq:prob_oob_NLOS}. Consider the scenario $L \ll N$. Using Stirling's approximation $\mybinom[1]{n}{i} \approx n^i/i!$ we can simplify~\eqref{eq:prob_oob_NLOS} as ${\sf {Pr}}(Y=i) \approx \frac{1}{i!}\left(\frac{L^2}{N}\right)^i $. Then, letting $i=1$ (to compare against the LoS scenario), the probability is $L^2/N$, supporting the fact that $\sqrt{N}$ paths are sufficient to obtain an OOB SE that log-linearly grows with the IRS elements. 
    Finally, Fig.~\ref{fig:ergodic_SE_vs_N_LOS_vs_NLOS_high_SNR} compares the OOB performance in LoS and (L+)NLoS scenarios. For a fixed $N$, $L$, the performance is enhanced much faster in (L+)NLoS than in LoS scenarios, in line with our discussions. 

    \subsection{OOB performance with PF and MR schedulers}
    We now numerically illustrate the OOB performance enhancement that can be obtained using the PF and MR schedulers described in Sec.~\ref{sec:PF_MR_schedulers}. 
    We plot the OOB SE versus the number of OOB UEs $Q$ in Fig.~\ref{fig:OOB_SE_vs_Q}.  Here, we compare the SE for RR, PF, and MR schedulers for different $N$. Using an MR scheduler, the OOB SE increases monotonically as $\mathcal{O}\left(\log(\log(Q))\right)$, in line with Lemma~\ref{lem:MR_scheduler_ergodic_SE}. Next, we consider the OOB SE using a PF scheduler. For $N=4$, the SE increases with $Q$, and it converges to the SE that is achievable as though the IRS is also optimal to the OOB UEs under RR scheduling, in line with Lemma~\ref{lem:PF_sub_6}. Thus, for large $Q$, OOB UEs enjoy optimal beamforming IRS benefits without explicitly optimizing the IRS to them when the OOB operator uses a PF scheduler. Even with $N=16$ using a PF scheduler, a similar trend as in $N=4$ is observed, except that the convergence rate to the optimal SE is now reduced. We observe this from the unequal gaps between the optimal SE and SE obtained by PF schedulers for $N=4$ and $16$, namely: $\Delta_{16}>\Delta_4$. This is because the degree of randomness increases with $N$, which requires the number of UEs to scale with $N$. We refer the reader to~\cite[Prop.~$1$]{Yashvanth_ICASSP_2023} for more details on the required number of UEs for a target gap in the SE. Finally, the OOB SE using the RR scheduler is invariant to $Q$ because it does not leverage multi-user diversity in the system.\\
    \indent In the final study of this section, we plot the OOB SE as a function of $\log_2(N)$ in Fig.~\ref{fig:OOB_SE_vs_N_diff_schedulers} for different schedulers. First, we observe that the slope of the curve using MR scheduler for both $Q=10,100$ is $1$, which validates the $\mathcal{O}(\log(N))$ dependence of the SE as in Lemma~\ref{lem:MR_scheduler_ergodic_SE}. Similarly, the RR scheduler also scales as $\mathcal{O}(\log_2(N))$, in line with Theorem~\ref{thm:rate_characterization}. On the other hand, the performance of the PF scheduler as a function of $N$ crucially depends on the relative value of $Q$ with respect to $N$. For smaller $N$, the OOB SNR scales with $N^2$, and as $N$ increases, the OOB SNR scales as $N^{\epsilon}$, where $\epsilon \leq 2$ (notice this in the decreasing values of slopes.) This happens because as $N$ increases, $Q$ is not large enough for the random IRS configuration to be near-optimal to at least one OOB UE at every point in time.
       \vspace{-0.1cm}
    		\section{Conclusions and Future Work}

    		In this paper, we studied a fundamental issue in an IRS-aided system, namely, the effect of deploying an IRS on the average and instantaneous performance of an OOB operator who has no control over the IRS. Surprisingly, we found that while the IRS optimally serves the in-band UEs, it simultaneously, and at no additional cost, enhances the quality of the channels of the OOB UEs compared to the system without any IRS. This performance enhancement is due to the reception of multiple copies of the signal at the OOB UEs through the IRS. Although this is always true in the sub-6 GHz bands, the degree of enhancement in the mmWave communications is determined by the number of spatial paths in the OOB UE's channel. When the number of paths in the channel is at least the number of resolvable beams at the IRS, the gain is monotonic in the number of IRS elements similar to sub-6 GHz bands and is marginal otherwise. Further, we showed that the systems where IRS is optimized to all the spatial paths of the in-band UE's channel outperform the ones where IRS is aligned only to the dominant path. This is due to the additional degrees of freedom offered by the IRS in the former, although it incurs additional BS-IRS signaling overhead at the in-band operator to configure the IRS optimally. Therefore, the deployment of an IRS benefits all the co-existing network operators, albeit to a lesser extent than the operator that has control over the IRS phase configuration. Further, it is possible to obtain much better benefits at the OOB UEs by using an opportunistic scheduler at the OOB operator.\\
    		\indent Future work can include the effect of multiple antennas at the BSs/UEs and study the diversity-multiplexing gain tradeoff at OOB UEs. Other practical considerations, like the non-availability of perfect channel state information, quantized IRS phase shift levels, frequency selectivity of channels, time \& frequency offsets, near-field effects, etc., can be accounted for to provide more insights. Another interesting line of study is to analyze the impact of IRSs on the OOB performance in interference-limited scenarios such as multi-cell environments.	
		
		  \begin{appendices}
    		\renewcommand{\thesectiondis}[2]{\Alph{section}:}
    		   \section{Proof of Theorem~\ref{thm:exact_ccdf}}\label{sec:ccdf_proof}
    		We recognize that $\tilde{h}_{1,q} \triangleq |{h}_{1,q}|^2 \sim \exp(1/\left(N\beta_{r,q}+\beta_{d,q}\right))$, and $\tilde{h}_{2,q} \triangleq |{h}_{2,q}|^2\sim \exp(1/\beta_{d,q})$.
    		  Now, $Z^{(Y)}_N$ is the difference between two exponential random variables. 	    		We first show that the correlation coefficient between $\tilde{h}_{1,q}$ and $\tilde{h}_{2,q}$  decays inversely with $N$. 
    		Recall that the correlation coefficient is defined $\rho_{12} \triangleq \mathbb{E}\left[(\tilde{h}_{1,q}-\mathbb{E}[\tilde{h}_{1,q}])(\tilde{h}_{2,q}-\mathbb{E}[\tilde{h}_{2,q}])\right]\bigg/\sigma_1\sigma_2$,
    	 where $\sigma_1^2$ and $\sigma_2^2$ are the variances of $\tilde{h}_{1,q}$ and $\tilde{h}_{2,q}$, respectively. We can verify that  $1)$ $\mu_1 \triangleq \mathbb{E}[\tilde{h}_{1,q}] = N\beta_{r,q} + \beta_{d,q}$, and $\mu_2 \triangleq \mathbb{E}[\tilde{h}_{2,q}] =  \beta_{d,q}$ $2)$ 
    			$\sigma_1^2 = {\left(N\beta_{r,q} + \beta_{d,q}\right)}^2$, and $\sigma_2^2 =  \beta_{d,q}^2$.
    	 Thus, we get 
    		$\rho_{12} = \left(\mathbb{E}\left[\tilde{h}_{1,q}\tilde{h}_{2,q}\right] - \left(N\beta_{r,q} + \beta_{d,q}\right)\beta_{d,q}\right)\bigg/ \left(N\beta_{r,q} + \beta_{d,q}\right)\beta_{d,q}.$
    	 	Using~\eqref{eq:aux_RV_defn}, it is easy to verify that $\mathbb{E}\left[\tilde{h}_{1,q}\tilde{h}_{2,q}\right] = N\beta_{r,q}\beta_{d,q} + 2\beta_{d,q}^2$. After simplification, we have
    		\begin{equation}\label{eq:rho_value}
    		\rho_{12} =1/\Big(1 + N\left(\beta_{r,q}/\beta_{d,q}\right)\Big),
    		\end{equation} 
    		which decays inversely with $N$.
    		We now use a result from~\cite[Eq.~4.24]{Simon_2002_ProbabilityGaussian} which characterizes the distribution of the difference of two dependent 
    		and non-identically distributed chi-square random variables and obtain the CDF of $Z^{(Y)}_N$ as
    		\begin{equation}\label{eq:cdf_simon}
    			F_{Z^{(Y)}_N}(z) = \left\{
    			\begin{array}{lr}
    				\dfrac{8}{\mu_1\mu_2(1-\rho_{12}^2)\gamma\alpha^-}e^{\left(\frac{\alpha^- z}{4}\right)}, & \text{if } z < 0,\\
    				1 - \dfrac{8}{\mu_1\mu_2(1-\rho_{12}^2)\gamma\alpha^+}e^{-\left(\frac{\alpha^+ z}{4}\right)}, & \text{if } z\geq 0,
    			\end{array}
    			\right.
    		\end{equation} 
    		where 
    		\begin{equation}\label{eq:cdf_simon_parameters}
    			\gamma\! = \!\dfrac{2\sqrt{{(\mu_2\! -\!\mu_1)}^2\!+4\mu_1\mu_2(1-\!\rho_{12}^2)}}{\mu_1\mu_2(1-\rho_{12}^2)};	\alpha^\pm\!=\! \gamma \pm \dfrac{2\left(\mu_2\!-\!\mu_1\right)}{\mu_1\mu_2(1-\!\rho_{12}^2)}.
    		\end{equation}  
    		For large $N$, using~\eqref{eq:rho_value} to let $\rho_{12} \rightarrow 0$ in~\eqref{eq:cdf_simon}, \eqref{eq:cdf_simon_parameters}, and recognizing that $\forall z$, $\bar{F}_{Z^{(Y)}_N}(z) = 1 - F_{Z^{(Y)}_N}(z)$, we obtain 
		    		\begin{equation} \label{eq:CCDF_Simon1}
    			\bar{F}_{Z^{(Y)}_N}(z) = \left\{
    			\begin{array}{lr}
    				1-\dfrac{\mu_2}{\mu_1+\mu_2}e^{\frac{z}{ \mu_2}}, & \text{if } z < 0, \\ 
    				\dfrac{\mu_1}{\mu_1+\mu_2}e^{-\frac{z}{ \mu_1}}, & \text{if } z\geq 0,
    			\end{array}
    			\right.
    		\end{equation} 
    		Substituting for $\mu_1$ and $\mu_2$ into \eqref{eq:CCDF_Simon1} completes the proof. \qed
    		    
    \section{Proof of the decay rate of outage probability and SNR offset at the in-band UEs}\label{sec:in-band_outage_CCDF_scaling_sketch}
       		We compute the decay rate of the outage probability, and SNR offset for the in-band UEs (as in Theorem~\ref{thm:exact_ccdf}) below.	
    		\subsection{Outage Probability}
    		Let $X \!\triangleq\! \left||h_d| + \sum_n |f_n| |g_n| \right|^2$ be the channel gain at an arbitrary in-band UE (dropping index $k$). Then, the outage probability is
    		\begin{equation}\label{eq:basic_out_eqn}
    			P^{\rho}_{\mathrm{out}} = \mathrm{Pr}(X \leq \rho).
    		\end{equation}		  
    		Since the true distribution of $X$ is complicated due to multiple products and summations, we find an upper bound on $P^{\rho}_{\mathrm{out}}$ as
    		\begin{equation*}
    			X = \left| |h_d| + \sum_n |f_n||g_n| \right|^2 \!\geq\! \left|\sum_n |f_n||g_n| \right|^2  \triangleq Y, \ \ \text{almost surely}.
    		\end{equation*} 
    		    		As a result, we have
    		\vspace{-0.2cm}
    		\begin{equation}\label{eq_Upper_bound_outage_prob}
    			P^{\rho}_{\mathrm{out}} = 	\mathrm{Pr}(X \leq \rho) \leq \mathrm{Pr}\left(\left|\sum\nolimits_n|f_n||g_n| \right|^2 \leq \rho\right). 
    		\end{equation} Now by Central Limit Theorem (CLT)~\cite{Qin_CL_2020}, we have 
    		\begin{equation}\label{eq:real_CLT}
    		\!\!\!Z \triangleq \sum_n |f_n||g_n| \ {\longrightarrow} \ \mathcal{N}\left( N\dfrac{\pi}{4}\sqrt{\beta_r},N\left(1-\dfrac{\pi^2}{16}\right)\beta_r \right). 
    		\end{equation}	
       	Using~\eqref{eq:real_CLT}, we can simplify~\eqref{eq_Upper_bound_outage_prob} as
    		\begin{align}
    				P^{\rho}_{\mathrm{out}} &\stackrel{(a)}{\leq} 1 - Q\left(\frac{\sqrt{\rho}}{c_1\sqrt{N}} - c_2\sqrt{N}\right) - Q\left(\frac{\sqrt{\rho}}{c_1\sqrt{N}} + c_2\sqrt{N}\right) \nonumber \\
    		&\stackrel{(b)}{\approx} 
    		2Q\left(c_2\sqrt{N}\right),
    			\label{eqn_decay_final_outage}
    		\end{align} 
    		where, in $(a)$, $c_1 \triangleq \sqrt{\left(1-\frac{\pi^2}{16}\right)\beta_r}$ and $c_2 \triangleq \frac{\pi}{\sqrt{16-\pi^2}}$, and $(b)$ follows as for fixed $\rho$, $\frac{\sqrt{\rho}}{c_1\sqrt{N}} \rightarrow 0$ for large $N$. Now, by applying the Chernoff bound to $Q(\cdot)$ in~\eqref{eqn_decay_final_outage}, we see that $Q(c_2\sqrt{N})\rightarrow 0$ at the rate of $\mathcal{O}(e^{-N})$. Hence, the  outage probability in~\eqref{eq:basic_out_eqn} also decays at least as fast as $\mathcal{O}(e^{-N})$. 
		    \vspace{-0.4cm}	
    \subsection{CCDF of the SNR offset}
    Let the channel gain with and without an IRS be $h_1 \triangleq \left||h_d| + \sum_n |f_n||g_n|\right|^2$ and $h_2 \triangleq |h_d|^2$.  To compute the CCDF of $O \triangleq h_1 - h_2$, using~\eqref{eq:real_CLT}, we notice\!\!
    \vspace{-0.2cm}
    \begin{equation}\label{eq_lower_bound_CCDF}
    \!\!\! \bar{F}_O(\rho) \triangleq \mathrm{Pr}\left(h_1 - h_2 \geq \rho \right) \geq \mathrm{Pr}\Bigg(\underbrace{\left|Z\right|^2 - |h_d|^2}_{\triangleq O'} \geq \rho \Bigg).
    \vspace{-0.2cm}
    \end{equation} 
    Now, we can simplify the lower bound of~\eqref{eq_lower_bound_CCDF} as 
      \begin{align}
    	&\!\!\!\!\mathrm{Pr}\left(|Z|^2 - |h_d|^2 \leq \rho \right)\! = \!\mathbb{E}_Z\left[\mathrm{Pr}\left(|z|^2-|h_d|^2 \leq \rho \big\vert Z=z\right)\right] \\ &\!\!\!\! = \mathbb{E}_Z\left[\mathrm{Pr}\left(|h_d|^2 \geq |z|^2 - \rho \big\vert Z=z\right)\right]  \stackrel{(a)}{=} \mathbb{E}_Z\Big[e^{-\left(\frac{|z|^2-\rho}{\beta_d}\right)}\Big]  \\ &\!\!\!\!
    	= \sqrt{\frac{8}{N\pi(16-\pi^2)\beta_r }} \!\int_{-\infty}^{\infty}\!\!e^{-\left(\frac{|z|^2-\rho}{\beta_d}+\frac{\left(z-N\frac{\pi}{4}\sqrt{\beta_r}\right)^2}{2N\left(1-\frac{\pi^2}{16}\right)\beta_r}\right)} \text{d}z,\!\!\!
	    \end{align}
	     where in $(a)$, we used the property of the exponentially distributed random variable $|h_d|^2$. To compute the above integral, we define $\alpha \triangleq 2N\left(1-\frac{\pi^2}{16}\right)\beta_r$, $\eta \triangleq N \frac{\pi}{4}\sqrt{\beta_r}$. 
    
    Then, by using standard integral solvers, we can show that 
        \begin{equation}
    \mathrm{Pr}\left(|Z|^2-|h_d|^2 \geq \rho \right)  = 1 - \dfrac{\sqrt{\beta_d}}{\sqrt{\beta_d+\alpha}}e^{-\left(\!\!\frac{\beta_d\eta^2-(\beta_d+\alpha)\rho}{\beta_d^2+\alpha\beta_d}\!\!\right)}.
    \end{equation}
    The above probability can be shown to be proportional to
    \begin{equation}\label{eq_final_CCDF_offset}
     1 - (1/\sqrt{N}) \times e^{-N} \times e^{\left(\frac{1/N}{1/N}\right)} \sim 1-\mathcal{O}\left(e^{-N}\right).
    \end{equation}
    Thus, the lower bound grows to $1$ exponentially in $N$, and hence the CCDF of the random variable $O$ also grows to $1$ at least as fast as the decay rate of $e^{-N}$ to $0$.     \qed
    \vspace{-0.3cm}

    		\section{Proof of Theorem~\ref{thm:ccdf_mmwave_OOB_single_path}}\label{app:proof_ccdf-single_path_mmwave}
		\vspace{-0.5cm}
    	Assume $L<N$. From~\eqref{eq:final_OOB_channel_mmwave_single_path}, we compute  probability $P = $
    			\begin{equation}
    			\!\!{\sf{Pr}}	\left(\left| h_{d,q} + \dfrac{N^2}{\sqrt{L}} \sum_{l=1}^{L} \gamma^{(1)}_{l,Y}\gamma^{(2)}_{l,q}\mathbf{\dot{a}}_N^H(\omega^l_{Y,q})\mathbf{\dot{a}}_N(\omega^1_{X,k})\right|^2\!\!<\rho\right)\!\!. \!\!
    			\end{equation}
    			By the total law of probability, we obtain	$P =  {\sf {Pr}} (|h_q|^2<\rho\rvert\mathcal{E}_1){\sf {Pr}}(\mathcal{E}_1)  + {\sf  {Pr}} (|h_q|^2<\rho\rvert\mathcal{E}_0){\sf {Pr}}(\mathcal{E}_0),$
    			where $\mathcal{E}_0$ and $\mathcal{E}_1$ are as defined in Sec.~\ref{app:sec:oob_ergoidc_rate-mmwave_single_path}. Thus, we can simplify $P$ as
    			\begin{equation}
    		\!\!\!	{\sf{Pr}}\!\left(\left| h_{d,q} \!+\! \dfrac{N}{\sqrt{L}}  \gamma^{(1)}_{l^*,Y}\gamma^{(2)}_{l^*,q}\right|^2\!\!<\!\rho\!\right)\!{\sf{Pr}}(\mathcal{E}_1)  + 	{\sf{Pr}}\!	\left(\left| h_{d,q}\right|^2\!<\!\rho\!\right)\! {\sf{Pr}}(\mathcal{E}_0).\!\!
    			\end{equation} 
        			Let $\!\left\{\!\tilde{\gamma}^{(1)}_{l^*,Y},{\tilde{\gamma}}^{(2)}_{l^*,q}\!\right\} \! \triangleq \! \sqrt{\!\frac{N}{\sqrt{L}}}\! \left\{\! \gamma^{(1)}_{l^*,Y},\!\gamma^{(2)}_{l^*,q}\!\right\}$. Then, by~\cite[Eq. 17]{Sudarshan_SPL_2019},  \!
    			\begin{equation}
    				\!\!{\sf{Pr}}	\!\left(\!\left| h_{d,q} \!+ \! \tilde{\gamma}^{(1)}_{l^*,Y}\tilde{\gamma}^{(2)}_{l^*,q}\right|^2\!\!<\!\rho\!\right)\!\! =1-\dfrac{Le^{\frac{L\beta_{d,q}}{N^2\beta_{r,q}} }}{N^2\beta_{r,q}} \hspace{0.1cm}\mathcal{I}_0\left(\rho;\beta_{d,q},\dfrac{N^2}{L}\beta_{r,q}\right)\!.\!\!\!
    			\end{equation}  Also, $	{\sf{Pr}}	(\left| h_{d,q}\right|^2<\rho) =1- e^{-\rho/\beta_{d,q}}$. Collecting these together completes the proof of~\eqref{eq:ccdf_mmwave_single_path_jio} when $L<N$. Similarly, we can prove for $L \geq N$. To obtain~\eqref{eq:stochastic_dominance_mmwave_single_path},
    	    	it suffices to show that 
    			\begin{multline}
    				 \forall L\lessgtr N, \ \ {\sf {Pr}}(G_1>\!\rho) - {\sf {Pr}}(G_0>\!\rho) \geq 0   \stackrel{(a)}{\iff} \!\!  \\  \frac{\bar{L}}{N}\!\left(\!\frac{\bar{L}e^{\frac{\bar{L}\beta_{d,q}}{N^2\beta_{r,q}} }}{N^2\beta_{r,q}} \mathcal{I}_0\left(\!\rho;\beta_{d,q},\frac{N^2}{\bar{L}}\beta_{r,q}\!\right) \!- \! e^{-\rho/\beta_{d,q}}\!\right) \geq 0, \label{eq:final_claim_ST_DOM_mmwave_single_path}
    			\end{multline} where $(a)$ is due to~\eqref{eq:ccdf_mmwave_single_path_jio}.  
    		 From~\cite{Harris_CAM_Bessel_function}, we know that $\mathcal{I}_0(\cdot)$ is related to the generalized upper incomplete Gamma function $\Gamma(\alpha,x;b)$ for appropriate $\alpha,x,b$, and hence we can simplify
    			\begin{equation}
    			\!\!\!\mathcal{I}_0\!\left(\!\rho;\beta_{d,q},\frac{N^2}{\bar{L}}\beta_{r,q}\right)\! = \frac{N^2\beta_{r,q}}{\bar{L}}\hspace{0.05cm}\Gamma\left(1,\frac{\bar{L}\beta_{d,q}}{N^2\beta_{r,q}};\frac{\bar{L}\rho}{N^2\beta_{r,q}}\!\right)\!.\!\!
    			\end{equation} Then, it suffices to show that $f(\rho) \triangleq\underbrace{e^{\frac{\bar{L}\beta_{d,q}}{N^2\beta_{r,q}}}\hspace{0.2cm}\Gamma\left(1,\frac{\bar{L}\beta_{d,q}}{N^2\beta_{r,q}};\frac{\bar{L}\rho}{N^2\beta_{r,q}}\right)}_{\triangleq f_1(\rho)} - \underbrace{e^{-\rho/\beta_{d,q}}}_{\triangleq f_2(\rho)} \!\geq \!0$. We first prove a few properties of $f(\rho)$. 
    			\emph{1) $f(0)=0$}: Clearly, $f_2(0)=1$. Now, $f_1(0) = e^{\frac{\bar{L}\beta_{d,q}}{N^2\beta_{r,q}}}\hspace{0.2cm}\Gamma\left(1,\frac{\bar{L}\beta_{d,q}}{N^2\beta_{r,q}};0\right)$. But, $\Gamma\left(1,x;0\right) = \Gamma(1,x)$ where $\Gamma(\alpha,x)$ is the upper incomplete Gamma function. Hence, $f_1(0) = e^{\frac{\bar{L}\beta_{d,q}}{N^2\beta_{r,q}}}\hspace{0.2cm}\!\Gamma\left(1,\frac{\bar{L}\beta_{d,q}}{N^2\beta_{r,q}}\right)$.  But from~\cite[Eq. 8.4.5]{NIST_Math_handbook}, $\Gamma\left(1,x\right) \!=\! e^{-x}$. Thus, $f_1(0) \!= \!1$ and hence $f(0)\!=\!f_1(0)\!-\!f_2(0)=0$.\!
    				\begin{figure}[t]
    					\centering
    				\includegraphics[width=0.9\linewidth]{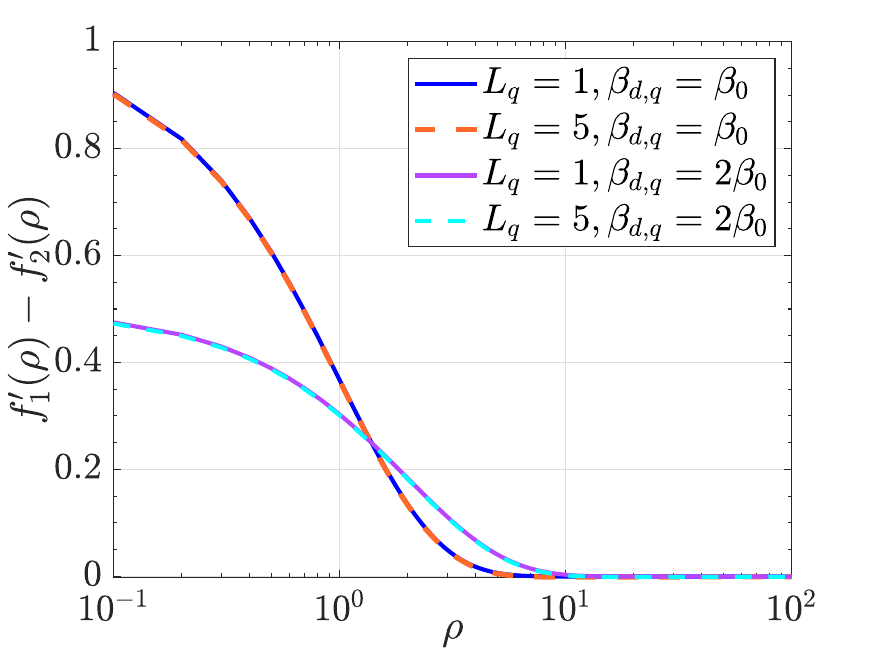} 
      				\caption{Graph of $f'(\rho) $ vs. $\rho$ with $N=128$.}
    			\label{fig:single_path_ccdf_analytical_check}
    			\end{figure}
    			\emph{2) $f(\rho)$ is non-decreasing}: From~\cite{Chaudhry_CAM_gamma_functions}, $\partial \Gamma(\alpha,x;b)/\partial b \! = - \Gamma(\alpha\!-1,x;b)$ and thus, $f_1'(\rho)=-e^{\frac{\bar{L}\beta_{d,q}}{N^2\beta_{r,q}}}\hspace{0.2cm}\Gamma\left(0,\frac{\bar{L}\beta_{d,q}}{N^2\beta_{r,q}};\frac{\bar{L}\rho}{N^2\beta_{r,q}}\right)$. Further, for practical values of the system parameters, $\bar{L}\beta_{d,q}/N^2\beta_{r,q} \!\leq\! 1$, and thus we can approximate $f_1'(\rho) \approx -2e^{\frac{\bar{L}\beta_{d,q}}{N^2\beta_{r,q}}}K_0\left(\frac{2}{N}\sqrt{\frac{\bar{L}\rho}{\beta_{r,q}}}\right)$, where $K_0 (x)$ is the zeroth-order modified Bessel function of the second kind. Further,  $f_2'(\rho) = -\left(1/d\right)e^{-\frac{\rho}{d}}$. Hence, we can show that $f'(\rho) = f'_1(\rho)-f'_2(\rho)\geq 0$ $\forall \rho \in \mathbb{R}^+$ (see  Fig.~\ref{fig:single_path_ccdf_analytical_check}).  Finally, since $f(\rho)$ is non-decreasing, $f(\rho)\geq f(0) = 0$. \qed
    		
    		\section{Proof of Lemma~\ref{lemma_correlation_function_IRS_mmwave_multiple_path}}\label{app_proof_correlation_function}

    		Using~\eqref{eq:optimal_IRS_mmwave_multi_path}, we simplify  $\rho_{\nu,\theta} = N\mathbb{E}\left[\left|\mathbf{\dot{a}}^H(\nu) \boldsymbol{\theta}^{\mathrm{opt}}\right|\right] $ as
    			\begin{align}
    		    				\!\!\!\!&\!\!\!= \mathbb{E}\left[\left|\sum\nolimits_{n=1}^{N}e^{j\pi(n-1)\nu} \dfrac{\sum\nolimits_{l=1}^{L} \gamma^{(1)*}_{l,X}\gamma^{(2)*}_{l,k}e^{-j(n-1)\pi\omega^l_{X,k}}}{\left|\sum\nolimits_{l=1}^{L} \gamma^{(1)*}_{l,X}\gamma^{(2)*}_{l,k}e^{-j(n-1)\pi\omega^l_{X,k}}\right|}\right|\right]\!\! \nonumber \\
    				\!\!\!\!&\!\!\! \! \stackrel{(a)}{\geq}\! \dfrac{1}{\sqrt{L}} \mathbb{E}\left[\!\dfrac{1}{\|\mathbf{h}_c\|_2}\!\left|\sum_{n=1}^{N}\!e^{j\pi(n-1)\nu}\sum\limits_{l=1}^{L}\! \gamma^{(1)*}_{l,X}\!\gamma^{(2)*}_{l,k}\!e^{-j(n-1)\pi\omega^l_{X,k}}\!\right|\!\right]\!\!\!  \nonumber\\
    				\!\!\!\!&\!\!\!\! \stackrel{(b)}{=} \dfrac{1}{\sqrt{L}}\hspace{0.1cm}\mathbb{E}\left[\dfrac{1}{\|\mathbf{h}_c\|_2}\left|\sum\limits_{l=1}^{L} \gamma^{(1)*}_{l,X}\gamma^{(2)*}_{l,k}\sum\limits_{n=1}^{N}e^{j\pi(n-1)(\nu-\omega^l_{X,k})}\right|\right] \nonumber\\
    				\!\!\!\!&\!\!\! = \dfrac{1}{\sqrt{L}}\hspace{0.1cm}\mathbb{E}\left[\dfrac{1}{\|\mathbf{h}_c\|_2}\left|\sum\nolimits_{l=1}^{L} \gamma^{(1)*}_{l,X}\gamma^{(2)*}_{l,k}\hspace{0.1cm}\dfrac{e^{jN\frac{\pi}{2}(\nu-\omega^l_{X,k})}}{e^{j\frac{\pi}{2}(\nu-\omega^l_{X,k})}} \right.\right. \nonumber\\[-0.05in]
    				&\hspace{3.3cm}\left. \left. \times\dfrac{\sin\left(0.5N\pi(\nu-\omega^l_{X,k})\right)}{\sin\left(0.5\pi(\nu-\omega^l_{X,k})\right)}\right|\right] \nonumber,\\[-0.25in] 
    			\end{align} 
      			where $\mathbf{h}_c \triangleq [\gamma^{(1)}_{1,X}\gamma^{(2)}_{1,k},\ldots,\gamma^{(1)}_{L,X}\gamma^{(2)}_{L,k}]^T$. Also, $(a)$ is because  $\left|\sum_{l=1}^{L} \gamma^{(1)*}_{l,X}\gamma^{(2)*}_{l,k}e^{-j(n-1)\pi\omega^l_{X,k}}\right| \leq \sqrt{L\sum_{l=1}^L|\gamma^{(1)}_{l,X}\gamma^{(2)}_{l,k}|^2}$ (the Cauchy-Schwarz inequality), and $(b)$ is obtained by changing the order of summation. We recognize that $ \!\sin\left(\!0.5N\pi(\nu\!-\!\omega^l_{X,k})\!\right)\!/\sin\left(\!0.5\pi(\nu\!-\!\omega^l_{X,k})\!\right) \!=\! F_N(\nu\!-\!\omega^l_{X,k})$ where $F_N(\cdot)$ is the \emph{Fej\'{e}r Kernel} given by 
			    	\begin{equation}
    				F_N(x) = \begin{cases}
    					N + o(N), &  \mathrm{ if } \ \ x = 0,\\
    					o(N),  &  \mathrm{ if } \ \ x = \pm\frac{k}{N}, k \in \mathbb{N}.
    				\end{cases}
    			\end{equation}
    			As a consequence, we have
    			\begin{equation}
    			\!\!\!	F_N(\nu-\omega^l_{X,k}) =\begin{cases}\!
    					N + o(N), &  \mathrm{ if } \ \nu \in \left\{\omega^1_{X,k}, \ldots,\omega^L_{X,k} \right\},\\ \!
    					o(N), & \mathrm{ if } \ \nu \in \mathbf{\Phi} \setminus \left\{\omega^1_{X,k}, \ldots,\omega^L_{X,k} \right\}.
    				\end{cases} 
    			\end{equation}
    			Thus, when $\nu \in \mathbf{\Phi}\setminus \left\{\omega^1_{X,k}, \omega^2_{X,k},\ldots,\omega^L_{X,k} \right\}, \rho_{\nu,\theta} \stackrel{N\rightarrow\infty}{\longrightarrow} 0$; and when $\nu \in \left\{\omega^1_{X,k}, \omega^2_{X,k},\ldots,\omega^L_{X,k} \right\}$, we have $\rho_{\nu,\theta} \geq \dfrac{N}{\sqrt{L}}\times\mathbb{E}\left[\dfrac{|h_{l^*}|}{\|\mathbf{h}_c\|_2}\right]$, implying $ \rho_{\nu,\theta} =
    			  \Omega\left(N/\sqrt{L}\right) + o(N)$. \qed
			      		\end{appendices}
\bibliographystyle{IEEEtran}
\bibliography{IEEEabrv,IRS_ref_short}
   
   \end{document}